\newcommand{\figsize}{3.15in}
\newcommand{\textb}[1]{\textcolor{black}{#1}}
\renewcommand{\tilde}{\widetilde}
\renewcommand{\hat}{\widehat}
\def\beq{\begin{equation}}
\def\eeq{\end{equation}}
\def\beqa{\begin{eqnarray}}
\def\eeqa{\end{eqnarray}}
\def\beqan{\begin{eqnarray*}}
\def\eeqan{\end{eqnarray*}}
\def\R{{\mathbb{R}}}
\def\argmin{\mathop{\mathrm{arg\,min}}}
\def\argmax{\mathop{\mathrm{arg\,max}}}
\def\diag{\mathop{\mathrm{diag}}}
\def\x{\times}
\newtheorem{theorem}{Theorem}
\newtheorem{lemma}{Lemma}
\def\xhat{\hat{x}}
\def\MSE{\mbox{\small \sffamily MSE}}
\def\arr{\rightarrow}
\def\Exp{\mathbb{E}}
\def\var{\mathrm{var}}
\def\Cov{\mathrm{Cov}}
\def\Tr{\mathrm{Tr}}
\def\rank{\mathrm{rank}}
\def\alphabar{\overline{\alpha}}
\def\etabar{\overline{\eta}}
\def\gammabar{\overline{\gamma}}
\def\gammatil{\tilde{\gamma}}
\def\tm1{t\! - \! 1}
\def\tp1{t\! + \! 1}
\def\km1{k\! - \! 1}
\def\kp1{k\! + \! 1}
\def\ip1{i\! + \! 1}
\def\im1{i\! - \! 1}
\newcommand{\zero}{\mathbf{0}}
\newcommand{\one}{\mathbf{1}}
\newcommand{\abf}{\mathbf{a}}
\newcommand{\bbf}{\mathbf{b}}
\newcommand{\dbf}{\mathbf{d}}
\newcommand{\fbf}{\mathbf{f}}
\newcommand{\gbf}{\mathbf{g}}
\newcommand{\pbf}{\mathbf{p}}
\newcommand{\qbf}{\mathbf{q}}
\newcommand{\rbf}{\mathbf{r}}
\newcommand{\rbftil}{\tilde{\mathbf{r}}}
\newcommand{\sbf}{\mathbf{s}}
\newcommand{\sbfbar}{\overline{\mathbf{s}}}
\newcommand{\ubf}{\mathbf{u}}
\newcommand{\vbf}{\mathbf{v}}
\newcommand{\wbf}{\mathbf{w}}
\newcommand{\xbf}{\mathbf{x}}
\newcommand{\xbfhat}{\hat{\mathbf{x}}}
\newcommand{\xbftil}{\tilde{\mathbf{x}}}
\newcommand{\ybf}{\mathbf{y}}
\newcommand{\ybftil}{\tilde{\mathbf{y}}}
\newcommand{\Abf}{\mathbf{A}}
\newcommand{\Bbf}{\mathbf{B}}
\newcommand{\Cbf}{\mathbf{C}}
\newcommand{\Ibf}{\mathbf{I}}
\newcommand{\Pbf}{\mathbf{P}}
\newcommand{\Qbf}{\mathbf{Q}}
\newcommand{\Sbf}{\mathbf{S}}
\newcommand{\Ubf}{\mathbf{U}}
\newcommand{\Ubfbar}{\overline{\mathbf{U}}}
\newcommand{\Vbf}{\mathbf{V}}
\newcommand{\Vbfbar}{\overline{\mathbf{V}}}
\newcommand{\Wbf}{\mathbf{W}}
\newcommand{\Xhat}{\hat{X}}
\def\xibf{{\boldsymbol \xi}}
\newcommand{\phibf}{{\bm{\phi}}}
\newcommand{\indic}[1]{\mathbbm{1}_{ \{ {#1} \} }}
\newcommand{\mat}[1]{\ensuremath{\begin{bmatrix}#1\end{bmatrix}}}
\newcommand{\MAP}{_{\text{\sf MAP}}}
\newcommand{\MMSE}{_{\text{\sf MMSE}}}
\newcommand{\tran}{^{\text{\sf T}}}
\def\eqd{\stackrel{d}{=}}
\def\PLeq{\stackrel{PL(2)}{=}}
\def\Norm{{\mathcal N}}
\def\Range{\mathrm{Range}}
\def\Diag{\mathrm{Diag}}
\def\alphabar{\overline{\alpha}}
\def\Ecal{{\mathcal E}}
\def\Kit{K_{\rm it}}
\newcommand*\dif{\mathop{}\!\mathrm{d}} 
\newcommand{\bkt}[1]{{\langle #1 \rangle}}
\newcommand{\msg}[2]{\mu_{#1\rightarrow#2}}
\newcommand{\belapp}{b_{\textsf{app}}}
\newcommand{\beltil}{b_{\textsf{sp}}}
\tikzstyle{block}=[rectangle,draw, fill=blue!20,
\tikzstyle{signal}=[coordinate,draw]
\title{Vector Approximate Message Passing}
\author{
     Sundeep Rangan, \IEEEmembership{Fellow,~IEEE},
     Philip Schniter, \IEEEmembership{Fellow,~IEEE},
     and
     Alyson K.~Fletcher, \IEEEmembership{Member,~IEEE}.
     \thanks{S.~Rangan (email: srangan@nyu.edu) is with
           the Department of Electrical and Computer Engineering,
           New York University, Brooklyn, NY, 11201.
           His work was supported by the National Science
           Foundation under Grants 1302336, 1564142, and 1547332.}
     \thanks{P.~Schniter (email: schniter.1@osu.edu) is with
           the Department of Electrical and Computer Engineering,
           The Ohio State University,
           Columbus, OH, 43210.
           His work was supported in part by
           the National Science Foundation grant
           CCF-1527162.}
     \thanks{A.~K.~Fletcher (email: akfletcher@ucla.edu) is with
           the Departments of Statistics, Mathematics, and Electrical
           Engineering, University of California, Los Angeles, CA, 90095.
           Her work is supported in part by National Science Foundation
           grants 1254204 and 1564278 as well as
           the Office of Naval Research grant N00014-15-1-2677.}
}
\begin{document}
\setlength{\arraycolsep}{0.8mm}

\maketitle
\begin{abstract}
The standard linear regression (SLR) problem is to
recover a vector $\mathbf{x}^0$ from
noisy linear observations $\mathbf{y}=\mathbf{Ax}^0+\mathbf{w}$.
The approximate message passing (AMP) algorithm proposed
by Donoho, Maleki, and Montanari is a computationally efficient
iterative approach to SLR that has a remarkable property:
for large i.i.d.\ sub-Gaussian matrices $\mathbf{A}$, its per-iteration behavior
is rigorously characterized by a scalar state-evolution
whose fixed points, when unique, are Bayes optimal.
\textb{The AMP algorithm}, however, is fragile in that even small deviations from
the i.i.d.\ sub-Gaussian model can cause the algorithm to diverge.
This paper considers a ``vector AMP'' (VAMP) algorithm and shows that
VAMP has a rigorous
scalar state-evolution that holds under a much broader class of large
random matrices $\mathbf{A}$: those that are right-\textb{orthogonally} invariant.
After performing an initial singular value decomposition (SVD) of $\mathbf{A}$,
the per-iteration complexity of VAMP is similar to that of AMP.
In addition, the fixed points of VAMP's state evolution are
consistent with the replica prediction of the minimum mean-squared
error derived by Tulino, Caire, Verd\'u, and Shamai.
\textb{Numerical experiments are used to confirm the effectiveness of VAMP and its consistency with state-evolution predictions.}
\end{abstract}

\begin{IEEEkeywords}
Belief propagation,
message passing,
inference algorithms,
random matrices,
compressive sensing.
\end{IEEEkeywords}

\section{Introduction} \label{sec:intro}

Consider the problem of recovering a vector $\xbf^0 \in \R^N$
from noisy linear measurements of the form
\beq \label{eq:yAx}
    \ybf = \Abf\xbf^0 + \wbf \in \R^M ,
\eeq
where $\Abf$ is a known matrix and $\wbf$ is an unknown, unstructured noise vector.
In the statistics literature, this problem is known as \emph{standard linear regression}, and in the signal processing literature this is known as \emph{solving a linear inverse problem}, or as \emph{compressive sensing} when $M\ll N$ and $\xbf^0$ is sparse.

\subsection{Problem Formulations} \label{sec:formulations}

One approach to recovering $\xbf^0$ is \emph{regularized quadratic loss minimization},
where an estimate $\xbfhat$ of $\xbf^0$ is computed by solving an optimization problem of the form
\beq
\xbfhat
= \argmin_{\xbf\in\R^N} \frac{1}{2}\|\ybf-\Abf\xbf\|_2^2 + f(\xbf) .
\label{eq:RQLM}
\eeq
Here, the penalty function or ``regularization'' $f(\xbf)$ is chosen to promote a desired structure in $\xbfhat$.
For example, the choice $f(\xbf)=\lambda\|\xbf\|_1$ with $\lambda>0$ promotes sparsity in $\xbfhat$.

Another approach is through the Bayesian methodology.
Here, one presumes a prior density $p(\xbf)$ and likelihood function $p(\ybf|\xbf)$ and then aims to compute the posterior density
\beq \label{eq:Bayes}
p(\xbf|\ybf) = \frac{p(\ybf|\xbf)p(\xbf)}{\int p(\ybf|\xbf)p(\xbf)\dif\xbf}
\eeq
or, in practice, a summary of it \cite{pereyra2016stoch}.
Example summaries include the maximum \emph{a posteriori} (MAP) estimate
\beq \label{eq:MAP}
\xbfhat\MAP = \argmax_\xbf p(\xbf|\ybf) ,
\eeq
the minimum mean-squared error (MMSE) estimate
\beq \label{eq:MMSE}
\xbfhat\MMSE = \argmin_{\xbftil} \int \|\xbf-\xbftil\|^2 p(\xbf|\ybf) \dif\xbf
= \Exp[\xbf|\ybf] ,
\eeq
or the posterior marginal densities $\{p(x_n|\ybf)\}_{n=1}^N$.

Note that, if the noise $\wbf$ is modeled as
$\wbf \sim \Norm(\zero,\gamma_w^{-1}\Ibf)$,
i.e., additive white Gaussian noise (AWGN) with some precision $\gamma_w>0$,
then
the regularized quadratic loss minimization problem \eqref{eq:RQLM}
is equivalent to
MAP estimation under the prior
$p(\xbf) \propto \exp[ -\gamma_w f(\xbf) ]$,
where $\propto$ denotes equality up to a scaling that is independent of $\xbf$.
Thus we focus on MAP, MMSE, and marginal posterior inference in the sequel.

\subsection{Approximate Message Passing} \label{sec:amp}

Recently, the so-called \emph{approximate message passing} (AMP) algorithm \cite{DonohoMM:09,DonohoMM:10-ITW1} was proposed as an iterative method to recover $\xbf^0$ from measurements of the form \eqref{eq:yAx}.
The AMP iterations are specified in Algorithm~\ref{algo:amp}.
There,\footnote{The subscript ``1'' in $\gbf_1$ is used promote notational consistency with Vector AMP algorithm presented in the sequel.}
$\gbf_1(\cdot,\gamma_k):\R^N\rightarrow\R^N$ is a \emph{denoising} function parameterized by $\gamma_k$,
and $\bkt{\gbf_1'(\rbf_k,\gamma_k)}$ is its \emph{divergence} at $\rbf_k$.
In particular, $\gbf_1'(\rbf_k,\gamma_k)\in\R^N$ is the diagonal of the Jacobian,
\begin{align}
\gbf_1'(\rbf_k,\gamma_k)
= \diag\left[\frac{\partial \gbf_1(\rbf_k,\gamma_k)}{\partial \rbf_k}\right]
\label{eq:jacobian} ,
\end{align}
and $\bkt{\cdot}$ is the empirical averaging operation
\begin{align}
\bkt{\ubf} := \frac{1}{N} \sum_{n=1}^N u_n
\label{eq:bkt} .
\end{align}

\begin{algorithm}[t]
\caption{AMP}
\begin{algorithmic}[1]  \label{algo:amp}
\REQUIRE{Matrix $\Abf\!\in\!\R^{M\times N}$, measurement vector $\ybf$,
denoiser $\gbf_1(\cdot,\gamma_k)$, and number of iterations $\Kit$.}
\STATE{Set $\vbf_{-1}=\zero$ and select initial $\rbf_0,\gamma_0$.}
\FOR{$k=0,1,\dots,\Kit$}
    \STATE{$\xbfhat_{k} = \gbf_1(\rbf_k,\gamma_k)$}
        \label{line:x}
    \STATE{$\alpha_k = \bkt{\gbf_1'(\rbf_k,\gamma_k)}$}
        \label{line:a}
    \STATE{$\vbf_k = \ybf - \Abf\xbfhat_k
        + \frac{N}{M}\alpha_{k-1}\vbf_{k-1}$}
        \label{line:v}
    \STATE{$\rbf_{\kp1} = \xbfhat_k + \Abf\tran\vbf_k$}
        \label{line:r}
    \STATE{Select $\gamma_{\kp1}$}
        \label{line:gamma}
\ENDFOR
\STATE{Return $\xbfhat_{\Kit}$.}
\end{algorithmic}
\end{algorithm}

When $\Abf$ is a large i.i.d.\ sub-Gaussian matrix,
$\wbf\sim\Norm(\zero,\gamma_{w0}^{-1}\Ibf)$,
and $\gbf_1(\cdot,\gamma_k)$ is \emph{separable}, i.e.,
\begin{align}
[\gbf_1(\rbf_k,\gamma_k)]_n
= g_1(r_{kn},\gamma_k)
~ \forall n
\label{eq:g1sep} ,
\end{align}
with identical Lipschitz components $g_1(\cdot,\gamma_k):\R\rightarrow \R$,
AMP displays a remarkable behavior, which is that $\rbf_k$ behaves like a white-Gaussian-noise corrupted version of the true signal $\xbf^0$ \cite{DonohoMM:09}.
That is,
\begin{align}
\rbf_k
&= \xbf^0 + \Norm(\zero,\tau_k\Ibf) ,
\label{eq:unbiased}
\end{align}
for some variance $\tau_k>0$.
Moreover, the variance $\tau_k$ can be predicted through the following \emph{state evolution} (SE):
\begin{subequations} \label{eq:ampSE}
\begin{align}
\Ecal(\gamma_k,\tau_k)
&= \frac{1}{N}\Exp\left[\big\|\gbf_1\big(\xbf^0+\Norm(\zero,\tau_k\Ibf),\gamma_k\big)-\xbf^0\big\|^2\right] \\
\tau_{\kp1}
&= \gamma_{w0}^{-1} + \frac{N}{M}\Ecal(\gamma_k,\tau_k),
\end{align}
\end{subequations}
where $\Ecal(\gamma_k,\tau_k)$ is the \textb{MSE of the} AMP estimate $\xbfhat_k$.

The AMP SE \eqref{eq:ampSE} was rigorously established
for i.i.d.\ Gaussian $\Abf$ in \cite{BayatiM:11} and for i.i.d.\ sub-Gaussian $\Abf$ in \cite{BayLelMon:15}
in the \emph{large-system limit} (i.e., $N,M\rightarrow\infty$ and $N/M\rightarrow\delta\in (0,1)$) under some mild regularity conditions.
Because the SE \eqref{eq:ampSE} holds for generic $g_1(\cdot,\gamma_k)$ and generic $\gamma_k$-update rules, it can be used to characterize the application of AMP to many problems, as further discussed in Section~\ref{sec:bayes}.

\subsection{Limitations, Modifications, and Alternatives to AMP} \label{sec:alternatives}

An important limitation of AMP's SE is that it holds only under large i.i.d.\ sub-Gaussian $\Abf$.
Although recent analysis \cite{rush2016finite} has rigorously analyzed AMP's performance under finite-sized i.i.d.\ Gaussian $\Abf$, there remains the important question of how AMP behaves with general $\Abf$.

Unfortunately, it turns out that the AMP Algorithm~\ref{algo:amp} is somewhat fragile with regard to the construction of $\Abf$.
For example, AMP diverges with even mildly ill-conditioned or non-zero-mean $\Abf$ \cite{RanSchFle:14-ISIT,Caltagirone:14-ISIT,Vila:ICASSP:15}.
Although damping \cite{RanSchFle:14-ISIT,Vila:ICASSP:15}, mean-removal \cite{Vila:ICASSP:15}, sequential updating \cite{manoel2015swamp}, and direct free-energy minimization \cite{rangan2015admm} all help to prevent AMP from diverging, such strategies are limited in effectiveness.

Many other algorithms for standard linear regression \eqref{eq:yAx} have been designed using approximations of belief propagation (BP) and/or free-energy minimization.
Among these are the Adaptive Thouless-Anderson-Palmer (ADATAP) \cite{opper2001adaptive}, Expectation Propagation (EP) \cite{Minka:01,seeger2005expectation}, Expectation Consistent Approximation (EC) \cite{OppWin:05,kabashima2014signal,fletcher2016expectation}, (S-transform AMP) S-AMP \cite{cakmak2014samp,cakmak2015samp},
and (Orthogonal AMP) OAMP \cite{ma2016orthogonal} approaches.
Although numerical experiments suggest that some of these algorithms are more robust than AMP Algorithm~\ref{algo:amp} to the choice of $\Abf$, their convergence has not been rigorously analyzed.
In particular, there remains the question of whether there exists an AMP-like algorithm with a rigorous SE analysis that holds for a larger class of matrices than i.i.d.\ sub-Gaussian.
In the sequel, we describe one such algorithm.

\subsection{Contributions}

In this paper, we propose a computationally efficient iterative algorithm for the estimation of the vector $\xbf^0$ from noisy linear measurements $\ybf$ of the form in \eqref{eq:yAx}.
(See Algorithm~\ref{algo:vampSVD}.)
We call the algorithm ``\emph{vector AMP}'' (VAMP) because
i) its behavior can be rigorously characterized by a scalar SE under large random $\Abf$,
and
ii) it can be derived using an approximation of BP on a factor graph with vector-valued variable nodes.
We outline VAMP's derivation in Section~\ref{sec:vamp} with the aid of some background material that is reviewed in Section~\ref{sec:back}.

In Section~\ref{sec:SE}, we establish the VAMP SE in the case of
large \emph{right-orthogonally invariant} random $\Abf$
and separable Lipschitz denoisers $\gbf_1(\cdot,\gamma_k)$,
using techniques similar to those used by Bayati and Montanari in \cite{BayatiM:11}.
Importantly, these right-orthogonally invariant $\Abf$ allow arbitrary singular values and arbitrary left singular vectors, making VAMP much more robust than AMP in regards to the construction of $\Abf$.
In Section~\ref{sec:replica}, we establish that the asymptotic MSE predicted by VAMP's SE agrees with the MMSE predicted by the replica method \cite{tulino2013support} when VAMP's priors are matched to the true data.
Finally, in Section~\ref{sec:num}, we present numerical experiments demonstrating that VAMP's empirical behavior matches its SE at moderate dimensions, even when $\Abf$ is highly ill-conditioned or non-zero-mean.

\subsection{Relation to Existing Work}

The idea to construct algorithms from graphical models with vector-valued nodes is not new, and in fact underlies the EC- and EP-based algorithms described in~\cite{Minka:01,seeger2005expectation,OppWin:05,kabashima2014signal,fletcher2016expectation}.
The use of vector-valued nodes is also central to the derivation of S-AMP~\cite{cakmak2014samp,cakmak2015samp}.
In the sequel, we present a simple derivation of VAMP \textb{that uses the EP methodology from \cite{Minka:01,seeger2005expectation}, which passes approximate messages between the nodes of a factor graph.
But we note that VAMP can also be derived using the EC methodology, which formulates a variational optimization problem using a constrained version of the Kullback-Leibler distance and then relaxes the density constraints to moment constraints.
For more details on the latter approach, we refer the interested reader to the discussion of ``diagonal restricted EC'' in \cite[App.~D]{OppWin:05} and ``uniform diagonalized EC'' in \cite{fletcher2016expectation}.
}

It was recently shown \cite{kabashima2014signal} that, for large right-orthogonally invariant $\Abf$, the fixed points of diagonal-restricted EC are ``good'' in the sense that they are consistent with a certain replica prediction of the MMSE that is derived in \cite{kabashima2014signal}.
Since the fixed points of ADATAP and S-AMP are known \cite{cakmak2014samp} to coincide with those of diagonal-restricted EC (and thus VAMP), all of these algorithms can be understood to have good fixed points.
The trouble is that these algorithms do not necessarily converge to their fixed points.
For example, S-AMP diverges with even mildly ill-conditioned or non-zero-mean $\Abf$, as demonstrated in Section~\ref{sec:num}.
\emph{Our main contribution is establishing that VAMP's behavior can be exactly predicted by an SE analysis analogous to that for AMP.
This SE analysis then provides precise convergence guarantees for large right-orthogonally invariant $\Abf$.}
The numerical results presented in Section~\ref{sec:num} confirm that, in practice, VAMP's convergence is remarkably robust, even with very ill-conditioned or mean-perturbed matrices $\Abf$ of finite dimension.

The main insight that leads to both the VAMP algorithm and its SE analysis
comes from a consideration of the singular value decomposition (SVD) of $\Abf$.
Specifically, take the ``economy" SVD,
\begin{align}
\Abf
&=\Ubfbar\Diag(\sbfbar)\Vbfbar\tran
\label{eq:econSVD} ,
\end{align}
where $\sbfbar\in\R^R$ for $R:=\rank(\Abf)\leq\min(M,N)$.
The VAMP iterations can be performed by matrix-vector multiplications with $\Vbfbar\in\R^{N\times R}$ and $\Vbfbar\tran$, yielding a structure
very similar to that of AMP.
Computationally, the SVD form of VAMP
(i.e., Algorithm~\ref{algo:vampSVD})
has the benefit that, once the SVD has been computed,
VAMP's per-iteration cost will be dominated by $O(RN)$ floating-point operations (flops), as opposed to $O(N^3)$ for the EC methods from
\cite[App.~D]{OppWin:05} or \cite{fletcher2016expectation}.
Furthermore, if these matrix-vector multiplications have fast implementations (e.g., $O(N)$ when $\Vbfbar$ is a discrete wavelet transform), then the complexity of VAMP reduces accordingly.
We emphasize that VAMP uses a single SVD, not a per-iteration SVD.
In many applications, this SVD can be computed off-line.
In the case that SVD complexity may be an issue, we note that it costs $O(MNR)$ flops
by classical methods or $O(MN\log R)$ by modern approaches \cite{halko2011matrix}.

The SVD offers more than just a fast algorithmic implementation.
More importantly, it connects VAMP to AMP in such a way that the Bayati and Montanari's SE analysis of AMP \cite{BayatiM:11} can be extended to obtain a rigorous SE for VAMP.
In this way, the SVD can be viewed as a proof technique.
Since it will be useful for derivation/interpretation in the sequel, we note that the VAMP iterations can also be written without an explicit SVD
(see Algorithm~\ref{algo:vamp}),
in which case they coincide with the uniform-diagonalization variant of the generalized EC method from \cite{fletcher2016expectation}.
In this latter implementation, the linear MMSE (LMMSE) estimate \eqref{eq:g2slr} must be computed at each iteration, as well as the trace of its covariance matrix \eqref{eq:a2slr}, which both involve the inverse of an $N\times N$ matrix.

The OAMP-LMMSE algorithm from \cite{ma2016orthogonal}
is similar to VAMP and diagonal-restricted EC,
but different in that it approximates certain variance terms.
This difference can be seen by comparing
equations (30)-(31) in \cite{ma2016orthogonal} to
lines~\ref{line:gamtilsvd} and \ref{line:gamsvd} in Algorithm~\ref{algo:vampSVD}
(or lines~\ref{line:gam1} and \ref{line:gam2} in Algorithm~\ref{algo:vamp}).
Furthermore, OAMP-LMMSE differs from VAMP in its reliance on matrix inversion (see, e.g., the comments in the Conclusion of \cite{ma2016orthogonal}).


\textb{Shortly after the initial publication of this work, \cite{takeuchi2017rigorous} proved a very similar result for the complex case using a fully probabilistic analysis.}

\subsection{Notation}

We use
capital boldface letters like $\Abf$ for matrices,
small boldface letters like $\abf$ for vectors,
$(\cdot)\tran$ for transposition,
and $a_n=[\abf]_n$ to denote the $n$th element of $\abf$.
Also, we use
$\|\abf\|_p=(\sum_n |a_n|^p)^{1/p}$ for the $\ell_p$ norm of $\abf$,
$\|\Abf\|_2$ for the spectral norm of $\Abf$,
$\Diag(\abf)$ for the diagonal matrix created from vector $\abf$, and
$\diag(\Abf)$ for the vector extracted from the diagonal of matrix $\Abf$.
Likewise, we use
$\Ibf_N$ for the $N\times N$ identity matrix,
$\zero$ for the matrix of all zeros, and
$\one$ for the matrix of all ones.
For a random vector $\xbf$, we denote
its probability density function (pdf) by $p(\xbf)$,
its expectation by $\Exp[\xbf]$,
and
its covariance matrix by $\Cov[\xbf]$.
Similarly, we use
$p(\xbf|\ybf)$, $\Exp[\xbf|\ybf]$, and $\Cov[\xbf|\ybf]$ for the
\emph{conditional} pdf, expectation, and covariance, respectively.
Also, we use
$\Exp[\xbf|b]$ and $\Cov[\xbf|b]$ to denote the
expectation and covariance of $\xbf\sim b(\xbf)$,
i.e., $\xbf$ distributed according to the pdf $b(\xbf)$.
We refer to
the Dirac delta pdf using $\delta(\xbf)$
and to
the pdf of a Gaussian random vector $\xbf\in\R^N$ with mean $\abf$ and covariance $\Cbf$ using $\Norm(\xbf;\abf,\Cbf)=\exp( -(\xbf-\abf)\tran\Cbf^{-1}(\xbf-\abf)/2 )/\sqrt{(2\pi)^N|\Cbf|}$.
Finally,
$p(\xbf)\propto f(\xbf)$ says that functions $p(\cdot)$ and $f(\cdot)$ are equal up to a scaling that is invariant to $\xbf$.

\section{Background on the AMP Algorithm} \label{sec:back}

In this section, we provide background on the AMP algorithm that will be useful in the sequel.

\subsection{Applications to Bayesian Inference} \label{sec:bayes}

We first detail the application of the AMP Algorithm~\ref{algo:amp} to the Bayesian inference problems from Section~\ref{sec:formulations}.
Suppose that the prior on $\xbf$ is i.i.d., so that it takes the form
\beq
p(\xbf)=\prod_{n=1}^N p(x_n)
\label{eq:pxiid} .
\eeq
Then AMP can be applied to MAP problem \eqref{eq:MAP} by choosing the scalar denoiser as
\beq
g_1(r_{kn},\gamma_k)
= \argmin_{x_n\in\R} \left[ \frac{\gamma_k}{2}|x_n-r_{kn}|^2 - \ln p(x_n) \right]
\label{eq:gmapsca} .
\eeq
Likewise, AMP can be applied to the MMSE problem \eqref{eq:MMSE} by choosing
\beq
g_1(r_{kn},\gamma_k)
= \Exp[x_n|r_{kn},\gamma_k]
\label{eq:g1mmsesca} ,
\eeq
where the expectation in \eqref{eq:g1mmsesca} is with respect to the conditional density
\begin{align}
p(x_n|r_{kn},\gamma_k)
\propto \exp\left[ -\frac{\gamma_k}{2}|r_{kn}-x_n|^2 +\ln p(x_n) \right]
\label{eq:pxr1sca} .
\end{align}
In addition, $p(x_n|r_{kn},\gamma_k)$ in \eqref{eq:pxr1sca} acts as AMP's iteration-$k$ approximation of the marginal posterior $p(x_n|\ybf)$.
For later use, we note that the derivative of the MMSE scalar denoiser \eqref{eq:g1mmsesca} w.r.t.\ its first argument can be expressed as
\beq
    g_1'(r_{kn},\gamma_k) = \gamma_k\var\left[ x_n | r_{kn},\gamma_k \right]
\label{eq:g1dervar} ,
\eeq
where the variance is computed with respect to the density~\eqref{eq:pxr1sca}
(see, e.g., \cite{Rangan:11-ISIT}).

In \eqref{eq:gmapsca}-\eqref{eq:pxr1sca}, $\gamma_k$ can be interpreted as an estimate of $\tau_k^{-1}$, the iteration-$k$ precision of $\rbf_k$ from \eqref{eq:unbiased}.
In the case that $\tau_k$ is known, the ``matched'' assignment
\begin{align}
\gamma_k=\tau_k^{-1}
\label{eq:gamma_matched}
\end{align}
leads to the interpretation of \eqref{eq:gmapsca} and \eqref{eq:g1mmsesca} as the scalar MAP and MMSE denoisers of $r_{kn}$, respectively.
Since, in practice, $\tau_k$ is usually not known, it has been suggested to use
\begin{align}
\gamma_{\kp1} &= \frac{M}{\|\vbf_k\|^2}
\label{eq:gamma_amp_practical} ,
\end{align}
although other choices are possible \cite{Montanari:12-bookChap}.

\subsection{Relation of AMP to IST} \label{sec:ista}

The AMP Algorithm~\ref{algo:amp} is closely related to the well-known \emph{iterative soft thresholding} (IST) algorithm \cite{ChamDLL:98,DaubechiesDM:04} that can be used\footnote{The IST algorithm is guaranteed to converge \cite{DaubechiesDM:04} when $\|\Abf\|_2<1$.} to solve \eqref{eq:RQLM} with convex $f(\cdot)$.
In particular, if the term
\begin{align}
\frac{N}{M}\alpha_{k-1}\vbf_{k-1}
\label{eq:onsager}
\end{align}
is removed from line~\ref{line:v} of Algorithm~\ref{algo:amp}, then what remains is the IST algorithm.

The term \eqref{eq:onsager} is known as the \emph{Onsager} term in the statistical physics literature \cite{ThoulessAP:77}.
Under large i.i.d.\ sub-Gaussian $\Abf$, the Onsager correction ensures the behavior in \eqref{eq:unbiased}.
When \eqref{eq:unbiased} holds, the denoiser $g_{1}(\cdot,\gamma_k)$ can be optimized accordingly, in which case each iteration of AMP becomes very productive.
As a result, AMP converges much faster than ISTA for i.i.d.\ Gaussian $\Abf$ (see, e.g., \cite{Montanari:12-bookChap} for a comparison).

\subsection{Derivations of AMP} \label{sec:amp_deriv}

The AMP algorithm can be derived in several ways.
One way is through approximations of loopy belief propagation (BP) \cite{Pearl:88,YedidiaFW:03} on a bipartite factor graph constructed from the factorization
\begin{align}
p(\ybf,\xbf)
&= \left[ \prod_{m=1}^M \Norm(y_m;\abf_m\tran\xbf,\gamma_w^{-1}) \right] \left[ \prod_{n=1}^N p(x_n) \right]
\label{eq:amp_factors} ,
\end{align}
where $\abf_m\tran$ denotes the $m$th row of $\Abf$.
We refer the reader to \cite{DonohoMM:10-ITW1,Rangan:11-ISIT} for details on the message-passing derivation of AMP, noting connections to the general framework of \emph{expectation propagation} (EP) \cite{Minka:01,seeger2005expectation}.
AMP can also be derived through a ``free-energy'' approach, where one
i) proposes a cost function, ii) derives conditions on its stationary points, and iii) constructs an algorithm whose fixed points coincide with those stationary points.
We refer the reader to \cite{RanSRFC:13-ISIT,Krzakala:14-ISITbethe,cakmak2014samp} for details, and note connections to the general framework of \emph{expectation consistent approximation} (EC) \cite{OppWin:05,fletcher2016expectation}.

\section{The Vector AMP Algorithm} \label{sec:vamp}

The \emph{Vector AMP} (VAMP) algorithm is stated in Algorithm~\ref{algo:vampSVD}.
In line~\ref{line:dsvd}, ``$\sbfbar^2$'' refers to the componentwise square of vector $\sbfbar$.
Also, $\Diag(\abf)$ denotes the diagonal matrix whose diagonal components are given by the vector $\abf$.

\begin{algorithm}[t]
\caption{Vector AMP (SVD Form)}
\begin{algorithmic}[1]  \label{algo:vampSVD}
\REQUIRE{
Matrix $\Abf \in \R^{M \x N}$; measurements $\ybf \in \R^M$;
denoiser $\gbf_1(\cdot,\gamma_k)$;
assumed noise precision $\gamma_w \geq 0$; and
number of iterations $\Kit$. }
\STATE{Compute economy SVD $\Ubfbar\Diag(\sbfbar)\Vbfbar\tran=\Abf$
with $\Ubfbar\tran\Ubfbar=\Ibf_R$,
$\Vbfbar\tran\Vbfbar=\Ibf_R$,
$\sbfbar\in\R_+^{R}$},
$R=\rank(\Abf)$.
\STATE{Compute preconditioned $\ybftil:=\Diag(\sbfbar)^{-1}\Ubfbar\tran\ybf$}
\STATE{Select initial $\rbf_{0}$ and $\gamma_{0}\geq 0$.}
\FOR{$k=0,1,\dots,\Kit$}
    \STATE{$\xbfhat_{k} = \gbf_1(\rbf_{k},\gamma_{k})$}
        \label{line:xsvd}
    \STATE{$\alpha_{k} = \bkt{ \gbf_1'(\rbf_{k},\gamma_{k}) }$}
        \label{line:asvd}
    \STATE{$\rbftil_k = (\xbfhat_{k} - \alpha_{k}\rbf_{k})/(1-\alpha_{k})$}
        \label{line:xtilsvd}
    \STATE{$\gammatil_{k} = \gamma_{k}(1-\alpha_{k})/\alpha_{k}$}
        \label{line:gamtilsvd}
    \STATE{$\dbf_k=\gamma_w\Diag\big(\gamma_w\sbfbar^2+\gammatil_{k}\one\big)^{-1}\sbfbar^2$}
        \label{line:dsvd}
    \STATE{$\gamma_{\kp1} = \gammatil_{k}\bkt{\dbf_k}/(\frac{N}{R}-\bkt{\dbf_k})$}
        \label{line:gamsvd}
    \STATE{$\rbf_{\kp1} = \rbftil_k + \frac{N}{R}
        \Vbfbar\Diag\big(\dbf_k/\bkt{\dbf_k}\big)\big(\ybftil-\Vbfbar\tran\rbftil_k\big)$}
        \label{line:rsvd}
\ENDFOR
\STATE{Return $\xbfhat_{\Kit}$.}
\end{algorithmic}
\end{algorithm}

\subsection{Relation of VAMP to AMP} \label{sec:vamp2amp}

A visual examination of VAMP Algorithm~\ref{algo:vampSVD} shows many similarities with AMP Algorithm~\ref{algo:amp}.
In particular, the denoising and divergence steps in lines~\ref{line:xsvd}-\ref{line:asvd} of Algorithm~\ref{algo:vampSVD} are identical to those in lines~\ref{line:x}-\ref{line:a} of Algorithm~\ref{algo:amp}.
Likewise, an Onsager term $\alpha_k\rbf_k$ is visible in line~\ref{line:xtilsvd} of Algorithm~\ref{algo:vampSVD}, analogous to the one in line~\ref{line:v} of Algorithm~\ref{algo:amp}.
Finally, the per-iteration computational complexity of each algorithm is dominated by two matrix-vector multiplications: those involving $\Abf$ and $\Abf\tran$ in Algorithm~\ref{algo:amp} and those involving $\Vbfbar$ and $\Vbfbar\tran$ in Algorithm~\ref{algo:vampSVD}.

The most important similarity between the AMP and VAMP algorithms is not obvious from visual inspection and will be established rigorously in the sequel.
It is the following: for certain large random $\Abf$,
the VAMP quantity $\rbf_k$ behaves
like a white-Gaussian-noise corrupted version of the true signal $\xbf^0$, i.e.,
\begin{align}
\rbf_k
&= \xbf^0 + \Norm(\zero,\tau_k\Ibf) ,
\label{eq:unbiased2}
\end{align}
for some variance $\tau_k>0$.
Moreover, the noise variance $\tau_k$ can be tracked through a scalar SE formalism whose details will be provided in the sequel.
Furthermore, the VAMP quantity $\gamma_k$ can be interpreted as an estimate of $\tau_k^{-1}$ in \eqref{eq:unbiased2}, analogous to the AMP quantity $\gamma_k$ discussed around \eqref{eq:gamma_matched}.

It should be emphasized that the class of matrices $\Abf$ under which the VAMP SE holds is much bigger than the class under which the AMP SE holds.
In particular, VAMP's SE holds for large random matrices $\Abf$ whose right singular%
  \footnote{We use several forms of SVD in this paper.
  Algorithm~\ref{algo:vampSVD} uses the ``economy'' SVD
  $\Abf=\Ubfbar\Diag(\sbfbar)\Vbfbar\tran\in\R^{M\times N}$,
  where $\sbfbar\in\R_+^R$ with $R=\rank(\Abf)$, so that
  $\Ubfbar$ and/or $\Vbfbar$ may be tall.
  The discussion in Section~\ref{sec:vamp2amp} uses the ``standard'' SVD
  $\Abf=\Ubf\Sbf\Vbf\tran$, where $\Sbf\in\R^{M\times N}$ and both
  $\Ubf$ and $\Vbf$ are orthogonal.
  Finally, the state-evolution proof in Section~\ref{sec:SE} uses the
  standard SVD on square $\Abf\in\R^{N\times N}$.}
vector matrix $\Vbf\in\R^{N\times N}$ is uniformly distributed on the group of orthogonal matrices.
Notably, VAMP's SE holds for \emph{arbitrary} (i.e., deterministic) left singular vector matrices $\Ubf$ and singular values, apart from some mild regularity conditions that will be detailed in the sequel.
In contrast, AMP's SE is known to hold \cite{BayatiM:11,BayLelMon:15} only for large i.i.d.\ sub-Gaussian matrices $\Abf$, which implies i) random orthogonal $\Ubf$ and $\Vbf$ and ii) a particular distribution on the singular values of $\Abf$.

\subsection{EP Derivation of VAMP}

\textb{As with AMP (i.e., Algorithm~\ref{algo:amp}), VAMP (i.e., Algorithm~\ref{algo:vampSVD}) can be derived in many ways.}
Here we present a very simple derivation based on an EP-like approximation of the sum-product (SP) belief-propagation algorithm.
Unlike the AMP algorithm, whose message-passing derivation uses a loopy factor graph with \emph{scalar}-valued nodes, the VAMP algorithm uses a non-loopy graph with \emph{vector}-valued nodes, hence the name ``vector AMP.''
We note that VAMP can also be derived using the ``diagonal restricted'' or ``uniform diagonalization'' EC approach \cite{OppWin:05,fletcher2016expectation},
but that derivation is much more complicated.

To derive VAMP, we start with the factorization
\begin{align}
p(\ybf,\xbf)
&= p(\xbf) \Norm(\ybf;\Abf\xbf,\gamma_w^{-1}\Ibf) ,
\end{align}
and split $\xbf$ into two identical variables $\xbf_1=\xbf_2$, giving an equivalent factorization
\begin{align}
p(\ybf,\xbf_1,\xbf_2)
&= p(\xbf_1) \delta(\xbf_1-\xbf_2) \Norm(\ybf;\Abf\xbf_2,\gamma_w^{-1}\Ibf)
\label{eq:vamp_factors} ,
\end{align}
where $\delta(\cdot)$ is the Dirac delta distribution.
The factor graph corresponding to \eqref{eq:vamp_factors} is shown in Figure~\ref{fig:fg_split}.
\begin{figure}[t]
  \centering
  \newcommand{\sz}{0.8}
  \psfrag{px}[b][Bl][\sz]{$p(\xbf_1)$}
  \psfrag{x1}[t][Bl][\sz]{$\xbf_1$}
  \psfrag{del}[b][Bl][\sz]{$\delta(\xbf_1-\xbf_2)$}
  \psfrag{x2}[t][Bl][\sz]{$\xbf_2$}
  \psfrag{py|x}[b][Bl][\sz]{$\Norm(\ybf;\Abf\xbf_2,\gamma_w^{-1}\Ibf)$}
  \includegraphics[width=2.0in]{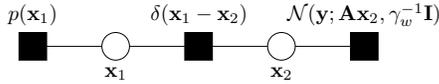}
  \caption{The factor graph used for the derivation of VAMP.
           The circles represent variable nodes and
           the squares represent factor nodes from \eqref{eq:vamp_factors}.}
  \label{fig:fg_split}
\end{figure}
We then pass messages on this factor graph according to the following rules.
\begin{enumerate}
\item \label{rule:b}
\emph{\underline{Approximate beliefs}:}
The approximate belief $\belapp(\xbf)$ on variable node $\xbf$
is $\Norm(\xbf;\xbfhat,\eta^{-1}\Ibf)$, where
$\xbfhat = \Exp[\xbf|\beltil]$ and
$\eta^{-1} = \bkt{\diag(\Cov[\xbf|\beltil])}$
are the mean and average variance of the corresponding SP belief
$\beltil(\xbf) \propto \prod_i \msg{f_i}{\xbf}(\xbf)$,
i.e., the normalized product of all messages impinging on the node.
See Figure~\ref{fig:ep_rules}(a) for an illustration.

\item \label{rule:v2f}
\emph{\underline{Variable-to-factor messages}:}
The message from
a variable node $\xbf$ to a connected factor node $f_i$ is
$\msg{\xbf}{f_i}(\xbf) \propto \belapp(\xbf)/\msg{f_i}{\xbf}(\xbf)$,
i.e., the ratio of the most recent approximate belief $\belapp(\xbf)$ to
the most recent message from $f_i$ to $\xbf$.
See Figure~\ref{fig:ep_rules}(b) for an illustration.

\item \label{rule:f2v}
\emph{\underline{Factor-to-variable messages}:}
The message from a factor node $f$ to a connected variable node
$\xbf_i$ is
$\msg{f}{\xbf_i}(\xbf_i)\propto
 \int f(\xbf_i,\{\xbf_j\}_{j\neq i}\}) \prod_{j\neq i} \msg{\xbf_j}{f}(\xbf_j) \dif\xbf_j$.
See Figure~\ref{fig:ep_rules}(c) for an illustration.
\end{enumerate}
\begin{figure}[t]
  \centering
  \newcommand{\sz}{0.8}
  \newcommand{\szz}{0.6}
  \psfrag{x}[b][Bl][\sz]{$\xbf$}
  \psfrag{x1}[b][Bl][\sz]{$\xbf_2$}
  \psfrag{x2}[b][Bl][\sz]{$\xbf_3$}
  \psfrag{y}[b][Bl][\sz]{$\xbf_1$}
  \psfrag{f}[bl][Bl][0.7]{$f(\xbf_1,\xbf_2,\xbf_3)$}
  \psfrag{f1}[b][Bl][\sz]{$f_1(\xbf)$}
  \psfrag{f2}[b][Bl][\sz]{$f_2(\xbf)$}
  \psfrag{f3}[b][Bl][\sz]{$f_3(\xbf)$}
  \psfrag{m1}[t][Bl][\szz]{$\msg{f_1}{\xbf}(\xbf)$}
  \psfrag{m2}[b][Bl][\szz]{$\msg{f_2}{\xbf}(\xbf)$}
  \psfrag{m3}[t][Bl][\szz]{$\msg{f_3}{\xbf}(\xbf)$}
  \psfrag{m6}[t][Bl][\szz]{$\msg{\xbf}{f_1}(\xbf)$}
  \psfrag{m7}[b][Bl][\szz]{$\msg{\xbf_2}{f}(\xbf_2)$}
  \psfrag{m8}[t][Bl][\szz]{$\msg{\xbf_3}{f}(\xbf_3)$}
  \psfrag{m9}[t][Bl][\szz]{$\msg{f}{\xbf_1}(\xbf_1)$}
  \psfrag{a}[t][Bl][\sz]{(a)}
  \psfrag{b}[t][Bl][\sz]{(b)}
  \psfrag{c}[t][Bl][\sz]{(c)}
  \includegraphics[width=3.2in]{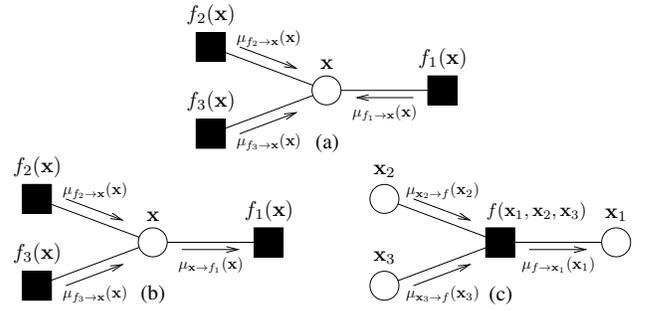}
  \caption{Factor graphs to illustrate
           (a) messaging through a factor node and
           (b) messaging through a variable node.}
  \label{fig:ep_rules}
\end{figure}

By applying the above message-passing rules to the factor graph in Figure~\ref{fig:fg_split}, one obtains Algorithm~\ref{algo:vamp}.
(See Appendix~\ref{sec:EP} for a detailed derivation.)
Lines~\ref{line:x2}--\ref{line:a2} of Algorithm~\ref{algo:vamp} use
\begin{align}
\gbf_2(\rbf_{2k},\gamma_{2k})
&:= \left( \gamma_w \Abf\tran\Abf + \gamma_{2k}\Ibf\right)^{-1}
        \left( \gamma_w\Abf\tran\ybf + \gamma_{2k}\rbf_{2k} \right)
\label{eq:g2slr} ,
\end{align}
which can be recognized as the
MMSE estimate of a random vector $\xbf_2$ under
likelihood $\Norm(\ybf;\Abf\xbf_2,\gamma_w^{-1}\Ibf)$
and prior $\xbf_2\sim \Norm(\rbf_{2k},\gamma_{2k}^{-1}\Ibf)$.
Since this estimate is linear in $\rbf_{2k}$,
we will refer to it as the ``LMMSE'' estimator.
From \eqref{eq:jacobian}-\eqref{eq:bkt} and \eqref{eq:g2slr}, it follows that
line~\ref{line:a2} of Algorithm~\ref{algo:vamp} uses
\begin{align}
\bkt{\gbf'_2(\rbf_{2k},\gamma_{2k})}
&= \frac{\gamma_{2k}}{N} \Tr\left[
        \left( \gamma_w \Abf\tran\Abf + \gamma_{2k}\Ibf\right)^{-1} \right]
\label{eq:a2slr} .
\end{align}

\textb{%
Algorithm~\ref{algo:vamp} is merely a restatement of VAMP Algorithm~\ref{algo:vampSVD}.
Their equivalence can then be seen by substituting the ``economy'' SVD $\Abf=\Ubfbar\Diag(\sbfbar)\Vbfbar\tran$ into Algorithm~\ref{algo:vamp}, simplifying, and equating
$\xbfhat_k \equiv \xbfhat_{1k}$,
$\rbf_k \equiv \rbf_{1k}$,
$\gamma_k \equiv \gamma_{1k}$,
$\gammatil_k \equiv \gamma_{2k}$,
and
$\alpha_k \equiv \alpha_{1k}$.
}

As presented in Algorithm~\ref{algo:vamp}, the steps of VAMP exhibit an elegant symmetry.
The first half of the steps perform denoising on $\rbf_{1k}$ and then Onsager correction in $\rbf_{2k}$, while the second half of the steps perform LMMSE estimation $\rbf_{2k}$ and Onsager correction in $\rbf_{1,\kp1}$.

\begin{algorithm}[t]
\caption{Vector AMP (LMMSE form)}
\begin{algorithmic}[1]  \label{algo:vamp}
\REQUIRE{
LMMSE estimator
$\gbf_2(\rbf_{2k},\gamma_{2k})$ from \eqref{eq:g2slr},
denoiser $\gbf_1(\cdot,\gamma_{1k})$,
and
number of iterations $\Kit$.  }
\STATE{ Select initial $\rbf_{10}$ and $\gamma_{10}\geq 0$.}
\FOR{$k=0,1,\dots,\Kit$}
    \STATE{// Denoising }
    \STATE{$\xbfhat_{1k} = \gbf_1(\rbf_{1k},\gamma_{1k})$}
        \label{line:x1}
    \STATE{$\alpha_{1k} = \bkt{ \gbf_1'(\rbf_{1k},\gamma_{1k}) }$}
        \label{line:a1}
    \STATE{$\eta_{1k} = \gamma_{1k}/\alpha_{1k}$}
        \label{line:eta1}
    \STATE{$\gamma_{2k} = \eta_{1k} - \gamma_{1k}$}
        \label{line:gam2}
    \STATE{$\rbf_{2k} = (\eta_{1k}\xbfhat_{1k} - \gamma_{1k}\rbf_{1k})/\gamma_{2k}$}
        \label{line:r2}
    \STATE{ }
    \STATE{// LMMSE estimation }
    \STATE{$\xbfhat_{2k} = \gbf_2(\rbf_{2k},\gamma_{2k})$}
        \label{line:x2}
    \STATE{$\alpha_{2k} = \bkt{ \gbf_2'(\rbf_{2k},\gamma_{2k}) } $}
        \label{line:a2}
    \STATE{$\eta_{2k} = \gamma_{2k}/\alpha_{2k}$}
        \label{line:eta2}
    \STATE{$\gamma_{1,\kp1} = \eta_{2k} - \gamma_{2k}$}
        \label{line:gam1}
    \STATE{$\rbf_{1,\kp1} = (\eta_{2k}\xbfhat_{2k} - \gamma_{2k}\rbf_{2k})/\gamma_{1,\kp1}$}
        \label{line:r1}
\ENDFOR
\STATE{Return $\xbfhat_{1\Kit}$.}
\end{algorithmic}
\end{algorithm}

\subsection{Implementation Details} \label{sec:implementation}

For practical implementation with finite-dimensional $\Abf$, we find that it helps to make some small enhancements to VAMP.
In the discussion below we will refer to Algorithm~\ref{algo:vampSVD}, but the same approaches apply to Algorithm~\ref{algo:vamp}.

First, we suggest to clip the precisions $\gamma_k$ and $\gammatil_k$ to a positive interval $[\gamma_{\min},\gamma_{\max}]$.
It is possible, though uncommon, for line~\ref{line:asvd} of Algorithm~\ref{algo:vampSVD} to return a negative $\alpha_k$, which will lead to negative $\gamma_k$ and $\gammatil_k$ if not accounted for.
For the numerical results in Section~\ref{sec:num}, we used $\gamma_{\min}=1\times 10^{-11}$ and $\gamma_{\max}=1\times 10^{11}$.

Second, we find that a small amount of damping can be helpful when $\Abf$ is highly ill-conditioned.
In particular, we suggest to replace lines~\ref{line:xsvd} and \ref{line:gamsvd} of Algorithm~\ref{algo:vampSVD} with the damped versions
\begin{align}
\xbfhat_k
&= \rho \gbf_1(\rbf_k,\gamma_k) + (1-\rho) \xbfhat_{\km1}
\label{eq:xdamp}\\
\gamma_{\kp1}
&= \rho \gammatil_{k}\bkt{\dbf_k}R/(N-\bkt{\dbf_k}R) + (1-\rho) \gamma_k
\label{eq:gamdamp}
\end{align}
for all iterations $k>1$, where $\rho\in(0,1]$ is a suitably chosen damping parameter.
Note that, when $\rho=1$, the damping has no effect.
For the numerical results in Section~\ref{sec:num}, we used $\rho=0.97$.

Third, rather than requiring VAMP to complete $\Kit$ iterations, we suggest that the iterations are stopped when the normalized difference $\|\rbf_{1k}-\rbf_{1,\km1}\|/\|\rbf_{1k}\|$ falls below a tolerance $\tau$.
For the numerical results in Section~\ref{sec:num}, we used $\tau=1\times 10^{-4}$.

\textb{We note that the three minor modifications described above are standard features of many AMP implementations, such as the one in the GAMPmatlab toolbox \cite{GAMP-code}.  However, as discussed in Section~\ref{sec:alternatives}, they are not enough to stabilize AMP for in the case of ill-conditioned or non-zero-mean $\Abf$.}

Finally, we note that the VAMP algorithm requires the user to choose
the measurement-noise precision $\gamma_w$ and
the denoiser $\gbf_1(\cdot,\gamma_k)$.
Ideally, the true noise precision $\gamma_{w0}$ is known and the signal $\xbf^0$ is i.i.d.\ with known prior $p(x_j)$, in which case the MMSE denoiser can be straightforwardly designed.
In practice, however, $\gamma_{w0}$ and $p(x_j)$ are usually unknown.
Fortunately, there is a simple expectation-maximization (EM)-based method to estimate both quantities on-line, whose details are given in \cite{fletcher2016emvamp}.
The numerical results in \cite{fletcher2016emvamp} show that the convergence and asymptotic performance of EM-VAMP is nearly identical to that of VAMP with known $\gamma_{w0}$ and $p(x_j)$.
For the numerical results in Section~\ref{sec:num}, however, we assume that $\gamma_{w0}$ and $p(x_j)$ are known.

Matlab implementations of VAMP and EM-VAMP can be found in the public-domain GAMPmatlab toolbox \cite{GAMP-code}.

\section{State Evolution} \label{sec:SE}

\subsection{Large-System Analysis} \label{sec:large}
Our primary goal is to understand the behavior of the VAMP algorithm for a certain class of matrices in the high-dimensional regime.
We begin with an overview of our analysis framework and follow with more details in later sections.

\subsubsection{Linear measurement model}
Our analysis considers a sequence of problems indexed by the signal
dimension $N$.  For each $N$, we assume that there is a ``true" vector $\xbf^0\in\R^N$
which is observed through measurements of the form,
\beq \label{eq:yAxslr}
    \ybf = \Abf\xbf^0 + \wbf \in \R^N, \quad \wbf \sim \Norm(\mathbf{0}, \gamma_{w0}^{-1}\Ibf_N),
\eeq
where $\Abf\in\R^{N\times N}$ is a known transform and $\wbf$ is Gaussian noise with precision $\gamma_{w0}$.
Note that we use $\gamma_{w0}$ to denote the ``true" noise precision
to distinguish it from $\gamma_w$, which is the noise precision postulated by the estimator.

For the transform $\Abf$,
our key assumption is that it can be modeled as a large, \emph{right-orthogonally invariant} random matrix.
Specifically, we assume that it has an SVD of the form
\beq \label{eq:ASVD}
    \Abf=\Ubf\Sbf\Vbf\tran, \quad \Sbf = \Diag(\sbf),
\eeq
where $\Ubf$ and $\Vbf$ are $N\times N$ orthogonal matrices
such that $\Ubf$ is deterministic and
$\Vbf$ is Haar distributed (i.e.\ uniformly distributed on the set of orthogonal matrices).
We refer to $\Abf$ as ``right-orthogonally invariant'' because the distribution of $\Abf$ is identical to that of $\Abf\Vbf_0$ for any fixed orthogonal matrix $\Vbf_0$.
We will discuss the distribution of the singular values $\sbf\in\R^N$ below.

Although we have assumed that $\Abf$ is square to streamline the analysis,
we make this assumption without loss of generality.
For example, by setting
$$
\Ubf=\begin{bmatrix}\Ubf_0&\zero\\\zero&\Ibf\end{bmatrix},
\quad
\sbf=\begin{bmatrix}\sbf_0\\\zero\end{bmatrix},
$$
our formulation can model a wide rectangular matrix whose SVD is $\Ubf_0\Sbf_0\Vbf\tran$ with $\diag(\Sbf_0)=\sbf_0$.
A similar manipulation allows us to model a tall rectangular matrix.

\subsubsection{Denoiser}
Our analysis applies to a fairly general class of denoising
functions $\gbf_1(\cdot,\gamma_{1k})$ indexed by the parameter $\gamma_{1k}\geq 0$.  Our main assumption is that the denoiser is separable,
meaning that it is of the form \eqref{eq:g1sep} for some scalar denoiser $g_1(\cdot,\gamma_{1k})$.
As discussed above, this separability assumption will occur for the MAP and MMSE denoisers
under the assumption of an i.i.d.\ prior.  However, we do not require the denoiser to be
MAP or MMSE for any particular prior.
We will impose certain Lipschitz continuity conditions on $g_1(\cdot,\textb{\gamma_{1k}})$ in the sequel.

\subsubsection{Asymptotic distributions}
It remains to describe the distributions of the
true vector $\xbf^0$ and the singular-value vector $\sbf$.
A simple model would be to assume that they
are random i.i.d.\ sequences that grow with $N$.  However, following the
Bayati-Montanari analysis \cite{BayatiM:11},
we will consider a more general framework where each of these vectors is modeled as deterministic sequence
for which the empirical distribution of the components converges in distribution.
When the vectors $\xbf^0$ and $\sbf$ are i.i.d.\ random sequences, they will satisfy this condition almost surely.
Details of this analysis framework are reviewed in Appendix~\ref{sec:empConv}.

Using the definitions in Appendix~\ref{sec:empConv},
we assume that the components of the singular-value vector $\sbf\in\R^N$ in \eqref{eq:ASVD}
converge empirically with second-order moments as
\beq \label{eq:Slim}
    \lim_{N \arr \infty} \{ s_n \}_{n=1}^N \PLeq S,
\eeq
for some positive random variable $S$.  We assume that $\Exp[S] > 0$ and $S \in [0,S_{max}]$
for some finite maximum value $S_{max}$.
Additionally, we assume that
the components of the true vector, $\xbf^0$, and the initial input to the denoiser, $\rbf_{10}$,
converge empirically as
\beq \label{eq:RX0lim}
    \lim_{N \arr \infty} \{ (r_{10,n}, x^0_n) \}_{n=1}^N \PLeq (R_{10},X^0),
\eeq
for some random variables $(R_{10},X^0)$.
Note that the convergence with second-order moments
requires that $\Exp [(X^0)^2] < \infty$ and $\Exp [R^2_{10}] < \infty$, so they have bounded
second moments.
We also assume that the initial second-order term, \textb{if dependent on $N$,} converges as
\beq \label{eq:gam10lim}
    \lim_{N \arr \infty}\gamma_{10}\textb{(N)} = \gammabar_{10},
\eeq
for some $\gammabar_{10} > 0$.

As stated above, most of our analysis will apply to general separable denoisers
$g_1(\cdot,\gamma_{1k})$.  However, some results will apply specifically to MMSE denoisers.
Under the assumption that the components of the true vector $\xbf^0$ are asymptotically
distributed like the random variable $X^0$\textb{, as in \eqref{eq:RX0lim}}, the MMSE denoiser~\eqref{eq:g1mmsesca}
and its derivative \eqref{eq:g1dervar} reduce to
\begin{align} \label{eq:g1mmsex0}
\begin{split}
    g_1(r_1,\gamma_1) &= \Exp\left[ X^0 | R_1=r_1 \right], \\
    g_1'(r_1,\gamma_1) &= \gamma_1\var\left[ X^0 | R_1=r_1 \right],
\end{split}
\end{align}
where $R_1$ is the random variable representing $X^0$ corrupted by AWGN noise,
i.e.,
\[
    R_1 = X^0 + P, \quad P \sim \Norm(0,\gamma_1^{-1}),
\]
with $P$ being independent of $X^0$.  Thus, the MMSE denoiser and its derivative
can be computed from the
posterior mean and variance of $X^0$ under an AWGN measurement.

\subsection{Error Functions}

Before describing the state evolution (SE) equations and the
analysis in the LSL, we need to introduce two
key functions:  \emph{error functions} and \emph{sensitivity functions}.
We begin by describing the error functions.

The error functions, in essence,
describe the mean squared error (MSE)
of the denoiser and LMMSE estimators under AWGN measurements.
\textb{Recall from Section~\ref{sec:large}, that we have assumed that
the denoiser $\gbf_1(\cdot,\gamma_1)$ is separable with some componentwise function
$g_1(\cdot,\gamma_1)$.}
For this function $g_1(\cdot,\gamma_1)$, define the error function as
\begin{align}
    \MoveEqLeft \Ecal_1(\gamma_1,\tau_1)
    := \Exp\left[ (g_1(R_1,\gamma_1)-X^0)^2 \right],  \nonumber \\
    & R_1 = X^0 + P, \quad P \sim \Norm(0,\tau_1). \label{eq:eps1}
\end{align}
The function $\Ecal_1(\gamma_1,\tau_1)$ thus represents the MSE of the
estimate $\Xhat = g_1(R_1,\gamma_1)$ from a measurement $R_1$
corrupted by Gaussian noise of variance $\tau_1$.
For the LMMSE estimator, we define the error function as
\begin{align}
    \MoveEqLeft \Ecal_2(\gamma_2,\tau_2)
    := \lim_{N \arr \infty}
        \frac{1}{N} \Exp \left[ \| \gbf_2(\rbf_2,\gamma_2) -\xbf^0 \|^2 \right], \nonumber \\
    & \rbf_2 = \xbf^0 + \qbf, \quad \qbf \sim \Norm(0,\tau_2 \Ibf), \nonumber \\
    & \ybf = \Abf\xbf^0 + \wbf, \quad \wbf \sim \Norm(0,\gamma_{w0}^{-1} \Ibf),
    \label{eq:eps2}
\end{align}
which is the average per component error of the vector estimate under Gaussian noise.
Note that $\Ecal_2(\gamma_2,\tau_2)$ is implicitly a function of the noise precision levels $\gamma_{w0}$ and $\gamma_w$ \textb{(through $\gbf_2$ from \eqref{eq:g2slr})}, but this dependence is omitted to simplify the notation.

We will say that both estimators are ``matched" when
\[
    \tau_1 = \gamma_1^{-1}, \quad \tau_2 = \gamma_2^{-1}, \quad \gamma_w = \gamma_{w0},
\]
so that the noise levels used by the estimators both match the true noise levels.
Under the matched condition, we will use the simplified notation
\[
    \Ecal_1(\gamma_1) := \Ecal_1(\gamma_1,\gamma_1^{-1}), \quad
    \Ecal_2(\gamma_2) := \Ecal_2(\gamma_2,\gamma_2^{-1}).
\]
The following lemma establishes some basic properties of the error functions.

\begin{lemma} \label{lem:errfn} Recall the error functions $\Ecal_1,\Ecal_2$ defined above.
\begin{enumerate}[(a)]
\item  For the MMSE denoiser \eqref{eq:g1mmsex0} under the matched condition
$\tau_1=\gamma_1^{-1}$, the error function  is the conditional variance
\beq \label{eq:E1match}
    \Ecal_1(\gamma_1) = \var\left[ X^0 | R_1 = X^0 \textb{+ P} \right],~\textb{P\sim\Norm(0,\gamma_1^{-1})} .
\eeq

\item The LMMSE error function is given by
\beq \label{eq:eps2Q}
    \Ecal_2(\gamma_2,\tau_2) = \lim_{N \arr \infty} \frac{1}{N} \Tr\left[ \Qbf^{-2}
        \tilde{\Qbf} \right] ,
\eeq
where $\Qbf$ and $\tilde{\Qbf}$ are the matrices
\beq \label{eq:QQtdef}
    \Qbf : = \gamma_w \Abf\tran\Abf + \gamma_2\Ibf, \quad
    \tilde{\Qbf} :=
        \frac{\gamma_w^2}{\gamma_{w0}}\Abf\tran\Abf + \tau_2\gamma_2^2\Ibf.
\eeq
Under the matched condition $\tau_2 = \gamma_2^{-1}$ and $\gamma_w = \gamma_{w0}$,
\beq \label{eq:eps2Qmatch}
    \Ecal_2(\gamma_2) = \lim_{N \arr \infty} \frac{1}{N} \Tr\left[ \Qbf^{-1} \right].
\eeq

\item The LMMSE error function is also given by
\beq \label{eq:eps2S}
    \Ecal_2(\gamma_2,\tau_2) = \Exp\left[ \frac{
    \gamma_w^2 S^2/\gamma_{w0} + \tau_2\gamma_2^2}{(\gamma_w S^2 + \gamma_2)^2}  \right],
\eeq
where $S$ is the random variable \eqref{eq:Slim} representing the distribution of the
singular values of $\Abf$.  For the matched condition
$\tau_2 = \gamma_2^{-1}$ and $\gamma_w = \gamma_{w0}$,
\beq \label{eq:eps2Smatch}
    \Ecal_2(\gamma_2) = \Exp\left[
        \frac{1}{\gamma_wS^2+ \gamma_2} \right].
\eeq
\end{enumerate}
\end{lemma}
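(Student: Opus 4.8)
The plan is to treat the three parts independently, since part (a) is a standard MMSE identity while (b) and (c) are direct computations using the explicit LMMSE estimator \eqref{eq:g2slr}. For part (a), I would first note that under the matched condition $\tau_1=\gamma_1^{-1}$ the noise variance assumed by the MMSE denoiser \eqref{eq:g1mmsex0} equals the actual noise variance in $R_1=X^0+P$, so that $g_1(R_1,\gamma_1)=\Exp[X^0\MID R_1]$ is genuinely the posterior mean. I would then apply the tower property to $\Ecal_1(\gamma_1)=\Exp[(\Exp[X^0\MID R_1]-X^0)^2]$: conditioning on $R_1$ turns the inner expectation into the posterior variance, giving $\Ecal_1(\gamma_1)=\Exp[\var[X^0\MID R_1]]$, which is \eqref{eq:E1match}. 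No asymptotics are needed here.

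For part (b), I would substitute the models $\ybf=\Abf\xbf^0+\wbf$ and $\rbf_2=\xbf^0+\qbf$ into \eqref{eq:g2slr}. With $\Qbf=\gamma_w\Abf\tran\Abf+\gamma_2\Ibf$, the numerator $\gamma_w\Abf\tran\ybf+\gamma_2\rbf_2$ simplifies to $\Qbf\xbf^0+\gamma_w\Abf\tran\wbf+\gamma_2\qbf$, so the error telescopes to
\begin{align}
\gbf_2(\rbf_2,\gamma_2)-\xbf^0 = \Qbf^{-1}\big(\gamma_w\Abf\tran\wbf+\gamma_2\qbf\big) . \nonumber
\end{align}
Since $\wbf$ and $\qbf$ are independent and zero-mean with covariances $\gamma_{w0}^{-1}\Ibf$ and $\tau_2\Ibf$, the cross terms vanish, and using $\Tr[\Abf\Qbf^{-2}\Abf\tran]=\Tr[\Qbf^{-2}\Abf\tran\Abf]$ I obtain $\Exp\|\gbf_2(\rbf_2,\gamma_2)-\xbf^0\|^2=\Tr[\Qbf^{-2}(\gamma_w^2\gamma_{w0}^{-1}\Abf\tran\Abf+\tau_2\gamma_2^2\Ibf)]=\Tr[\Qbf^{-2}\tilde{\Qbf}]$. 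Dividing by $N$ and letting $N\arr\infty$ gives \eqref{eq:eps2Q}; under the matched condition $\gamma_w=\gamma_{w0}$, $\tau_2=\gamma_2^{-1}$ one checks $\tilde{\Qbf}=\Qbf$, collapsing the trace to $\Tr[\Qbf^{-1}]$ and yielding \eqref{eq:eps2Qmatch}.

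For part (c), I would diagonalize using the SVD $\Abf=\Ubf\Sbf\Vbf\tran$ from \eqref{eq:ASVD}. Because $\Ubf\tran\Ubf=\Ibf$, we have $\Abf\tran\Abf=\Vbf\Sbf^2\Vbf\tran$, so $\Qbf$ and $\tilde{\Qbf}$ are simultaneously diagonalized by $\Vbf$. The trace in \eqref{eq:eps2Q} is invariant under this orthogonal conjugation, so it reduces to $\frac{1}{N}\sum_{n=1}^N(\gamma_w^2 s_n^2/\gamma_{w0}+\tau_2\gamma_2^2)/(\gamma_w s_n^2+\gamma_2)^2$. The final step is to pass to the limit via the empirical convergence of the singular values \eqref{eq:Slim}: since $\gamma_2>0$ and the $s_n$ lie in the compact set $[0,S_{max}]$, the summand is a bounded, continuous (hence $\mathrm{PL}(2)$) function of $s_n$, so its empirical average converges to $\Exp[\cdot]$ evaluated at $S$, giving \eqref{eq:eps2S}; the matched substitution then produces \eqref{eq:eps2Smatch}.

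The computations are all elementary, so I do not expect a genuine obstacle. The one place requiring care is the final step of part (c), where I must confirm that the summand is an admissible test function for the $\mathrm{PL}(2)$ convergence assumed in \eqref{eq:Slim}. The boundedness of the singular-value support together with the strict positivity of $\gamma_2$ (which keeps the denominator bounded away from zero) makes this immediate; the only real risk is bookkeeping in the algebra of part (b), particularly in correctly matching the assembled matrix to the definition of $\tilde{\Qbf}$ in \eqref{eq:QQtdef}.
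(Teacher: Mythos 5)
Your proposal is correct and follows essentially the same route as the paper's proof in Appendix~\ref{sec:errsenspf}: the same error decomposition $\gbf_2(\rbf_2,\gamma_2)-\xbf^0=\Qbf^{-1}(\gamma_2\qbf+\gamma_w\Abf\tran\wbf)$ with the cross terms vanishing by independence for part (b), the observation $\tilde{\Qbf}=\Qbf$ under matching, and the SVD diagonalization for part (c). You merely spell out details the paper leaves implicit, namely the tower-property argument in (a) and the justification that the summand in (c) is an admissible test function for the empirical convergence \eqref{eq:Slim}, both of which are handled correctly.
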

\begin{proof} See Appendix~\ref{sec:errsenspf}.
\end{proof}

\subsection{Sensitivity Functions}
The sensitivity functions describe the expected divergence of the estimator.
For the denoiser, the sensitivity function is defined as
\begin{align}
    \MoveEqLeft A_1(\gamma_1,\tau_1)
    := \Exp\left[ g_1'(R_1,\gamma_1) \right],  \nonumber \\
    & R_1 = X^0 + P, \quad P \sim \Norm(0,\tau_1), \label{eq:sens1}
\end{align}
which is the average derivative under a Gaussian noise input.  For the
LMMSE estimator, the sensitivity is defined as
\begin{align}
    \MoveEqLeft A_2(\gamma_2)
    := \lim_{N \arr \infty}
        \frac{1}{N} \Tr\left[ \frac{\partial \gbf_2(\rbf_2,\gamma_2)}{\partial \rbf_2}
            \right].
\end{align}

\begin{lemma} \label{lem:sens}
For the sensitivity functions above:
\begin{enumerate}[(a)]
\item  For the MMSE denoiser \eqref{eq:g1mmsex0} under the matched condition
$\tau_1=\gamma_1^{-1}$, the sensitivity function  is given by
\beq \label{eq:A1match}
    A_1(\gamma_1,\gamma_1^{-1}) = \gamma_1\var\left[ X^0 | R_1 = X^0 + \Norm(0,\gamma_1^{-1}) \right],
\eeq
which is the ratio of the conditional variance to the measurement variance $\gamma_1^{-1}$.

\item The LMMSE estimator's sensitivity function is given by
\[
    A_2(\gamma_2) = \lim_{N \arr \infty} \frac{1}{N}
        \gamma_2\Tr\left[ (\gamma_w \Abf\tran\Abf + \gamma_2\Ibf)^{-1} \right].
\]

\item The LMMSE estimator's sensitivity function can also be written as
\[
    A_2(\gamma_2) = \Exp\left[ \frac{\gamma_2}{\gamma_wS^2+ \gamma_2} \right].
\]
\end{enumerate}
\end{lemma}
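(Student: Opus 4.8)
The plan is to handle the three parts in sequence, closely mirroring the proofs of the corresponding parts of Lemma~\ref{lem:errfn}, since each sensitivity function differs from its error-function counterpart only in that a divergence (trace of a Jacobian, or expected derivative) replaces a squared error. Parts (a) and (b) are essentially direct substitutions, and part (c) is the only place where any analysis beyond algebra enters.

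For part (a), I would substitute the MMSE-denoiser derivative from \eqref{eq:g1mmsex0}, namely $g_1'(r_1,\gamma_1)=\gamma_1\var[X^0|R_1=r_1]$, directly into the definition \eqref{eq:sens1} of $A_1(\gamma_1,\tau_1)$. Under the matched condition $\tau_1=\gamma_1^{-1}$, the Gaussian perturbation $P\sim\Norm(0,\tau_1)$ appearing in \eqref{eq:sens1} has exactly the variance postulated internally by the denoiser, so the measurement model $R_1=X^0+P$ coincides with the one under which the posterior variance is computed. Pulling the constant $\gamma_1$ out of the expectation and recognizing $\Exp_{R_1}[\var[X^0|R_1]]$ as the expected conditional variance (written with the same shorthand as in \eqref{eq:E1match}) gives the claim; equivalently, $A_1(\gamma_1,\gamma_1^{-1})=\gamma_1\Ecal_1(\gamma_1)$ with $\Ecal_1$ from Lemma~\ref{lem:errfn}(a).

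For part (b), I would compute the Jacobian of the LMMSE map \eqref{eq:g2slr} explicitly. Since $\gbf_2(\rbf_2,\gamma_2)=(\gamma_w\Abf\tran\Abf+\gamma_2\Ibf)^{-1}(\gamma_w\Abf\tran\ybf+\gamma_2\rbf_2)$ is affine in $\rbf_2$, with the only $\rbf_2$-dependence entering through the linear term $\gamma_2\rbf_2$, its Jacobian is the constant matrix $\gamma_2(\gamma_w\Abf\tran\Abf+\gamma_2\Ibf)^{-1}$. Taking the normalized trace and the $N\to\infty$ limit then yields the stated expression; this is the same computation already recorded in \eqref{eq:a2slr}.

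For part (c), I would reduce the matrix trace from (b) to a scalar average over the singular values and then invoke empirical convergence. Substituting the SVD \eqref{eq:ASVD} gives $\Abf\tran\Abf=\Vbf\Sbf^2\Vbf\tran$, hence $\gamma_w\Abf\tran\Abf+\gamma_2\Ibf=\Vbf(\gamma_w\Sbf^2+\gamma_2\Ibf)\Vbf\tran$; since the trace is invariant under orthogonal conjugation by $\Vbf$, we obtain $\Tr[(\gamma_w\Abf\tran\Abf+\gamma_2\Ibf)^{-1}]=\sum_{n=1}^N(\gamma_w s_n^2+\gamma_2)^{-1}$, so that $A_2(\gamma_2)=\lim_{N\to\infty}\frac1N\sum_{n=1}^N \gamma_2/(\gamma_w s_n^2+\gamma_2)$. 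The assumed convergence \eqref{eq:Slim} of $\{s_n\}$ to the random variable $S$ then turns this average into $\Exp[\gamma_2/(\gamma_w S^2+\gamma_2)]$. The one step requiring care --- and the main (though still mild) obstacle --- is justifying this final passage: I must check that the test function $s\mapsto\gamma_2/(\gamma_w s^2+\gamma_2)$ is admissible for the $\PLeq$ mode of convergence in \eqref{eq:Slim}. This holds because, for $\gamma_2>0$, the function is bounded by $1$ and has bounded derivative on the compact support $[0,S_{max}]$ guaranteed by the assumption $S\in[0,S_{max}]$, so it is pseudo-Lipschitz of order two and the empirical averages converge to the expectation. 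The degenerate case $\gamma_2=0$ is excluded since the sensitivity function is needed only for $\gamma_2>0$.
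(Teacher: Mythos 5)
Your proposal is correct and follows exactly the paper's route: part (a) by averaging the derivative formula \eqref{eq:g1mmsex0} over $r_1$, part (b) by differentiating the affine map \eqref{eq:g2slr} (recovering \eqref{eq:a2slr}), and part (c) by diagonalizing via the SVD \eqref{eq:ASVD} and invoking the empirical convergence \eqref{eq:Slim}. The only difference is that you spell out the admissibility of the test function $s\mapsto\gamma_2/(\gamma_w s^2+\gamma_2)$ for $PL(2)$ convergence, a detail the paper's one-line proof leaves implicit.
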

\begin{proof} See Appendix~\ref{sec:errsenspf}.
\end{proof}

\subsection{State Evolution Equations}

We can now describe our main result, which is the SE equations
for VAMP.
For a given iteration $k \geq 1$, consider the set of components,
\[
     \{ (\xhat_{1k,n},r_{1k,n},x^0_n), ~ n=1,\ldots,N \}.
\]
This set represents the components of the true vector $\xbf^0$,
its corresponding estimate $\xbfhat_{1k}$ and the denoiser input
$\rbf_{1k}$.  \textb{Theorem~\ref{thm:se}} below will 
show that, under certain assumptions,
these components converge empirically as
\beq \label{eq:limrx1}
    \lim_{N \arr \infty} \{ (\xhat_{1k,n},r_{1k,n},x^0_n) \}
    \PLeq (\Xhat_{1k},R_{1k},X^0),
\eeq
where the random variables $(\Xhat_{1k},R_{1k},X^0)$ are given by
\begin{subequations} \label{eq:RX0var}
\begin{align}
    R_{1k} &= X^0 + P_k, \quad P_k \sim \Norm(0,\tau_{1k}), \\
    \Xhat_{1k} &= g_1(R_{1k},\gammabar_{1k}),
\end{align}
\end{subequations}
for constants $\gammabar_{1k}$ and $\tau_{1k}$ that will be defined below.
Thus, each component $r_{1k,n}$ appears as the true component $x^0_n$ plus
Gaussian noise.  The corresponding estimate $\xhat_{1k,n}$ then appears as the
denoiser output with $r_{1k,n}$ as the input.  Hence, the asymptotic behavior
of any component $x^0_n$ and its corresponding $\xhat_{1k,n}$ is identical to
a simple scalar system.  We will refer to \eqref{eq:limrx1}-\eqref{eq:RX0var} as the denoiser's \emph{scalar equivalent model}.

For the LMMSE estimation function,
we define the transformed error and transformed noise,
\beq \label{eq:qerrdef}
    \qbf_k := \Vbf\tran(\rbf_{2k}-\xbf^0), \quad \xibf := \Ubf\tran\wbf,
\eeq
where $\Ubf$ and $\Vbf$ are the matrices in the SVD decomposition \eqref{eq:ASVD}.
\textb{Theorem~\ref{thm:se} will also show that}  
these transformed errors and singular values $s_n$ converge as,
\beq \label{eq:limqxi}
    \lim_{N \arr \infty} \{ (q_{k,n},\xi_n,s_n) \}
    \PLeq (Q_k,\Xi,S),
\eeq
to a set of random variables $(Q_k,\Xi,S)$.
These random variables are independent, with
$S$ defined in the limit \eqref{eq:Slim} and
\beq \label{eq:QXivar}
    Q_k \sim \Norm(0,\tau_{2k}), \quad \Xi \sim \Norm(0,\gamma_{w0}^{-1}),
\eeq
where  $\tau_{2k}$ is a variance that will be defined below and $\gamma_{w0}$
is the noise precision in the measurement model \eqref{eq:yAxslr}.
Thus \eqref{eq:limqxi}-\eqref{eq:QXivar} is a scalar equivalent model for the LMMSE estimator.

The variance terms are defined recursively through what are called \emph{state evolution}
equations,
\begin{subequations} \label{eq:se}
\begin{align}
    \alphabar_{1k} &= A_1(\gammabar_{1k},\tau_{1k}) \label{eq:a1se} \\
    \etabar_{1k} &= \frac{\gammabar_{1k}}{\alphabar_{1k}}, \quad
    \gammabar_{2k} = \etabar_{1k} - \gammabar_{1k} \label{eq:eta1se} \\\
    \tau_{2k} &= \frac{1}{(1-\alphabar_{1k})^2}\left[
        \Ecal_1(\gammabar_{1k},\tau_{1k}) - \alphabar_{1k}^2\tau_{1k} \right],
            \label{eq:tau2se} \\
    \alphabar_{2k} &= A_2(\gammabar_{2k},\tau_{2k}) \label{eq:a2se} \\
    \etabar_{2k} &= \frac{\gammabar_{2k}}{\alphabar_{2k}}, \quad
    \gammabar_{1,\kp1} = \etabar_{2k} - \gammabar_{2k} \label{eq:eta2se} \\
    \tau_{1,\kp1} &= \frac{1}{(1-\alphabar_{2k})^2}\left[
        \Ecal_2(\gammabar_{2k},\tau_{2k}) - \alphabar_{2k}^2\tau_{2k} \right],
            \label{eq:tau1se}
\end{align}
\end{subequations}
which are initialized with
\beq
    \tau_{10} = \Exp[(R_{10}-X^0)^2],
\eeq
and $\gammabar_{10}$ defined from the limit \eqref{eq:gam10lim}.

\begin{theorem} \label{thm:se}
Under the above assumptions and definitions, assume additionally that for all iterations $k$:
\begin{enumerate}[(i)]
\item The solution $\alphabar_{1k}$ from the SE equations \eqref{eq:se} satisfies
\beq \label{eq:asecon}
    \alphabar_{1k} \in (0,1).
\eeq
\item The functions $A_i(\gamma_i,\tau_i)$ and $\Ecal_i(\gamma_i,\tau_i)$
are continuous at $(\gamma_i,\tau_i)=(\gammabar_{ik},\tau_{ik})$.
\item The denoiser function $g_1(r_1,\gamma_1)$ and its derivative
 $g_1'(r_1,\gamma_1)$
are uniformly Lipschitz in $r_1$ at $\gamma_1=\gammabar_{1k}$.
(See  Appendix~\ref{sec:empConv} for a precise definition of uniform Lipschitz continuity.)
\end{enumerate}
Then, for any fixed iteration $k \geq 0$,
\beq \label{eq:aglim}
    \lim_{N \arr \infty} (\alpha_{ik},\eta_{ik},\gamma_{ik}) =
    (\alphabar_{ik},\etabar_{ik}, \gammabar_{ik})
\eeq
almost surely.
In addition, the empirical limit \eqref{eq:limrx1} holds almost surely for all $k > 0$,
and \eqref{eq:limqxi} holds almost surely for all $k \geq 0$.
\end{theorem}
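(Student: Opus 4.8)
The plan is to follow the conditioning-and-induction strategy of Bayati and Montanari \cite{BayatiM:11}, adapted to the Haar-distributed right singular-vector matrix $\Vbf$. The first step is to recast VAMP in its LMMSE form (Algorithm~\ref{algo:vamp}) as an abstract recursion driven entirely by multiplications with $\Vbf$ and $\Vbf\tran$. Substituting the SVD $\Abf=\Ubf\Sbf\Vbf\tran$ into the LMMSE estimator \eqref{eq:g2slr} turns every occurrence of $\Abf$ and $\Abf\tran$ into a deterministic map (built from $\Sbf$ and $\Ubf$) composed with $\Vbf$ or $\Vbf\tran$. Working in the transformed coordinates $\qbf_k=\Vbf\tran(\rbf_{2k}-\xbf^0)$ and $\xibf=\Ubf\tran\wbf$ from \eqref{eq:qerrdef}, the algorithm becomes a two-phase recursion: one phase applies a separable componentwise Lipschitz nonlinearity (the denoiser $\gbf_1$ and its divergence), while the other applies $\Vbf$ or $\Vbf\tran$ followed by a diagonal spectral operation depending on $\Sbf$ and the scalars $\gamma_{2k},\gamma_w$.

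The core of the argument is an induction over the iteration index $k$ that simultaneously establishes the two scalar-equivalent models \eqref{eq:limrx1}--\eqref{eq:RX0var} and \eqref{eq:limqxi}--\eqref{eq:QXivar}, together with the almost-sure convergence of the scalars \eqref{eq:aglim}. At each half-step I would condition on the sigma-algebra generated by all previously computed iterates. This fixes a finite (independent of $N$) number of linear constraints of the form $\Vbf\abf=\bbf$; the conditional law of a Haar matrix subject to such constraints decomposes into a deterministic part, accounting for the components of any new input lying in the span of the past inputs, plus an independent Haar matrix acting on the orthogonal complement. The fresh action on the complement produces, in the large-$N$ limit, a new Gaussian component: one shows that $\qbf_k$ acquires an asymptotically $\Norm(0,\tau_{2k})$ entry independent of $S$ and $\Xi$, with $\tau_{2k}$ given by \eqref{eq:tau2se}. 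Assumption~(iii) (uniform Lipschitz continuity of $g_1$ and $g_1'$) together with the $PL(2)$ convergence framework of Appendix~\ref{sec:empConv} then let me push these empirical limits through the separable nonlinearities, so that empirical averages such as $\alpha_{1k}=\bkt{\gbf_1'(\rbf_{1k},\gamma_{1k})}$ converge to their expectations under the scalar model, i.e.\ to the sensitivity and error functions $A_i,\Ecal_i$ of Lemmas~\ref{lem:errfn}--\ref{lem:sens}. Matching these against the update rules of Algorithm~\ref{algo:vamp} reproduces exactly the SE equations \eqref{eq:se}; assumption~(i), $\alphabar_{1k}\in(0,1)$, guarantees the denominators $1-\alpha$ stay bounded away from zero so the recursion is well posed, and assumption~(ii) supplies the continuity needed to pass from empirical quantities to their SE limits.

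The main obstacle will be the Haar conditioning lemma and the attendant Gaussianity and independence claims. Unlike the i.i.d.\ Gaussian setting of \cite{BayatiM:11}, where conditioning on linear constraints yields a clean Gaussian residual, an orthogonal $\Vbf$ requires careful bookkeeping: the constrained conditional distribution is only approximately Gaussian in its new directions, and one must control the approximation error uniformly as the number of accumulated constraints (growing with $k$, but fixed relative to $N$) interacts with the $N\to\infty$ limit. Establishing that the newly generated noise $Q_k$ is asymptotically Gaussian, jointly independent of $S$ and $\Xi$, and has precisely the variance $\tau_{2k}$, while simultaneously showing the requisite empirical averages concentrate almost surely, is the technical heart of the proof. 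Once this Haar analogue of the Bayati-Montanari conditioning lemma is in hand, the remaining steps reduce to routine applications of the $PL(2)$ calculus and the closed-form expressions for $A_i$ and $\Ecal_i$.
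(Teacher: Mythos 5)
Your proposal follows essentially the same route as the paper: the paper rewrites the LMMSE-form algorithm in the error coordinates $\pbf_k=\rbf_{1k}-\xbf^0$, $\vbf_k=\rbf_{2k}-\xbf^0$ as an abstract two-phase recursion driven by $\Vbf$ and $\Vbf\tran$ (Theorem~\ref{thm:genConv}), then runs exactly your conditioning-and-induction argument, with your anticipated ``Haar conditioning lemma'' appearing as Lemmas~\ref{lem:orthogRep}--\ref{lem:orthogLin} and the fresh-Gaussian step as Lemma~\ref{lem:orthogGaussLim}. The only inaccuracy is your worry that the conditional law is ``only approximately Gaussian'': conditioning a Haar matrix on the linear constraints is \emph{exactly} a smaller Haar matrix on the orthogonal complement, so the residual is an exactly isotropic vector (a normalized Gaussian plus an $O(k)$-dimensional correction of vanishing norm), and the remaining work is the Stein's-lemma computation showing the Onsager-corrected updates decorrelate from past iterates, which yields the independence and the variance recursion \eqref{eq:tau2se}.
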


\subsection{Mean Squared Error}
One important
use of the scalar equivalent model is to predict the asymptotic performance
of the VAMP algorithm in the LSL.  For example, define the asymptotic
mean squared error (MSE) of the iteration-$k$ estimate $\xbfhat_{ik}$ as
\beq \label{eq:mseLim}
    \MSE_{ik} := \lim_{N \arr \infty} \frac{1}{N}\|\xbfhat_{ik}-\xbf^0\|^2.
\eeq
For this MSE, we claim that
\beq \label{eq:mseEcal}
    \MSE_{ik} = \Ecal_i(\gammabar_{ik},\tau_{ik}).
\eeq
To prove \eqref{eq:mseEcal} for $i=1$, we write
\begin{align*}
    \MSE_{1k}
    &=\lim_{N \arr \infty}
        \frac{1}{N} \sum_{n=1}^N (\xhat_{1k,n}-x^0_n)^2     \\
    &\stackrel{(a)}{=} \Exp[(\Xhat_{1k}-X^0)^2] \\
    &\stackrel{(b)}{=} \Exp[(g_1(R_1,\gammabar_{1k})-X^0)^2]
     \stackrel{(c)}{=} \Ecal_1(\gammabar_{1k},\tau_{1k})
\end{align*}
where (a) and (b) follow from the convergence in \eqref{eq:limrx1} and the scalar equivalent model \eqref{eq:limrx1},
and where (c) follows from \eqref{eq:eps1}.  Using the scalar equivalent model~\eqref{eq:limqxi},
the definition of $\Ecal_2(\cdot)$ in \eqref{eq:eps2}, and calculations similar to the proof
of Lemma~\ref{lem:errfn}, one can also show that \eqref{eq:mseEcal} holds for $i=2$.

Interestingly, this type of calculation can be used to compute any other componentwise distortion metric.
Specifically, given any distortion function $d(x,\xhat)$ that is pseudo-Lipschitz of order two,
its average value is given by
\[
    \lim_{N \arr \infty}
        \frac{1}{N} \sum_{n=1}^N d(x^0_n,\xhat_{1k,n}) = \Exp\left[ d(X^0,\Xhat_{1k}) \right],
\]
where the expectation is from  the scalar equivalent model \eqref{eq:limrx1}.

\subsection{Contractiveness of the Denoiser}
An essential requirement of Theorem~\ref{thm:se} is the condition~\eqref{eq:asecon}
that $\alphabar_{1k} \in (0,1)$.  This assumption requires that, in a certain average,
the denoiser function $g_1(\cdot,\gamma_1)$ is increasing (i.e., $g_1'(r_{1n},\gamma_1) > 0$)
and is a contraction (i.e., $g_1'(r_{1n},\gamma_1) < 1$).
If these conditions are not met,
then $\alphabar_{1k} \leq 0$ or $\alphabar_{1k} \geq 1$,
and either the estimated precision $\etabar_{1k}$ or $\gammabar_{2k}$
in \eqref{eq:eta1se} may be negative, causing subsequent updates to be invalid.
Thus, $\alphabar_{1k}$ must be in the range $(0,1)$.
There are two important conditions under which
this increasing contraction property are provably guaranteed:

\noindent
\paragraph*{Strongly convex penalties}  Suppose that  $g_1(r_{1n},\gamma_1)$ is the
either the MAP denoiser \eqref{eq:gmapsca} or the MMSE denoiser
\eqref{eq:g1mmsesca} for a density $p(x_n)$ that is strongly log-concave.
That is, there exists constants $c_1,c_2 > 0$ such that
\[
    c_1 \leq -\frac{\partial^2}{\partial x_n^2} \ln p(x_n) \leq c_2.
\]
Then, using results from log-concave functions \cite{brascamp2002extensions},
it is shown in \cite{rangan2015admm} that
\[
    g_1'(r_{1n},\gamma_1) \in
        \left[ \frac{\gamma_1}{c_2 + \gamma_1}, \frac{\gamma_1}{c_1 + \gamma_1}\right]
        \subset (0,1),
\]
for all $r_{1n}$ and $\gamma_1 > 0$.  Hence, from the definition of the
sensitivity function \eqref{eq:sens1},
the sensitivity $\alphabar_{1k}$ in \eqref{eq:a1se} will be in the range $(0,1)$.

\noindent
\paragraph*{Matched MMSE denoising}  Suppose that  $g_1(r_{1n},\gamma_1)$ is the
MMSE denoiser in the matched condition where $\gammabar_{1k}=\tau_{1k}^{-1}$
for some iteration $k$.
From \eqref{eq:A1match},
\[
    A_1(\gamma_1,\gamma_1^{-1}) = \gamma_1\var\left[ X^0 | R_1 = X^0 + \Norm(0,\gamma_1^{-1})
        \right].
\]
Since the conditional variance is positive, $A_1(\gamma_1,\gamma_1^{-1}) > 0$.
Also, since the variance is bounded above by the MSE of a linear estimator,
\begin{align*}
    \MoveEqLeft \gamma_1\var\left[ X^0 | R_1 = X^0 + \Norm(0,\gamma_1^{-1})\right]\\
    &\leq \gamma_1\frac{\gamma_1^{-1}\tau_{x_0}}{\tau_{x_0}+\gamma_1^{-1}}
    = \frac{\gamma_1\tau_{x_0}}{1+ \gamma_1\tau_{x_0}} < 1,
\end{align*}
where $\tau_{x0} = \var(X^0)$.
Thus, we have $A_1(\gamma_1,\gamma_1^{-1}) \in (0,1)$ and
$\alphabar_{1k} \in (0,1)$.

\medskip
In the case when the prior is not log-concave and the estimator uses an denoiser
that is not perfectly matched, $\alphabar_{1k}$ may not be in the valid range $(0,1)$.
In these cases, VAMP may obtain invalid (i.e.\ negative) variance estimates.

\section{MMSE Denoising, Optimality, and Connections to the Replica Method} \label{sec:replica}

An important special case of the VAMP algorithm
is when we apply the MMSE optimal denoiser under matched $\gamma_w$.  In this case,
the SE equations simplify considerably.

\begin{theorem} \label{thm:seMmse}  Consider the SE equations \eqref{eq:se} with
the MMSE optimal denoiser
\eqref{eq:g1mmsex0}, matched $\gamma_w=\gamma_{w0}$, and matched initial condition $\gammabar_{10} = \tau_{10}^{-1}$.
Then, for all iterations $k \geq 0$,
\begin{subequations} \label{eq:sematch}
\begin{align}
    \etabar_{1k} &= \frac{1}{\Ecal_1(\gammabar_{1k})}, \quad
    \gammabar_{2k} = \tau_{2k}^{-1} = \etabar_{1k} - \gammabar_{1k},
        \label{eq:eta1sematch} \\
    \etabar_{2k} &= \frac{1}{\Ecal_2(\gammabar_{2k})}, \quad
    \gammabar_{1,\kp1} = \tau_{1,\kp1}^{-1} = \etabar_{2k} - \gammabar_{2k}.
        \label{eq:eta2sematch}
\end{align}
In addition, for estimators $i=1,2$,  $\etabar_{ik}$ is the inverse MSE:
\beq \label{eq:etammse}
    \etabar_{ik}^{-1} = \lim_{N \arr \infty} \frac{1}{N} \|\xbfhat_{ik}-\xbf^0\|^2.
\eeq
\end{subequations}
\end{theorem}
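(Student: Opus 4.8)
The plan is to proceed by induction on $k$, with the inductive hypothesis that the denoiser operates in the \emph{matched} regime at the start of iteration $k$, i.e.\ $\tau_{1k}=\gammabar_{1k}^{-1}$. The base case $k=0$ is exactly the matched initial condition $\gammabar_{10}=\tau_{10}^{-1}$ assumed in the hypotheses. Everything in the statement is a purely algebraic consequence of the SE recursion \eqref{eq:se}; the probabilistic content lives entirely in the error- and sensitivity-function lemmas, so the task reduces to showing that the matched relation $\tau=\gamma^{-1}$ is a structure the recursion preserves as it alternates between the denoiser and LMMSE half-steps.

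The engine of the proof is a pair of identities that hold only in the matched MMSE setting. For the denoiser, Lemma~\ref{lem:errfn}(a) gives $\Ecal_1(\gamma_1)=\var[X^0\mid R_1=X^0+\Norm(0,\gamma_1^{-1})]$ while Lemma~\ref{lem:sens}(a) gives $A_1(\gamma_1,\gamma_1^{-1})=\gamma_1\var[X^0\mid R_1=X^0+\Norm(0,\gamma_1^{-1})]$ with the same conditional variance, so $A_1(\gamma_1,\gamma_1^{-1})=\gamma_1\Ecal_1(\gamma_1)$. For the LMMSE stage, comparing Lemma~\ref{lem:sens}(c) with the matched form in Lemma~\ref{lem:errfn}(c) gives $A_2(\gamma_2)=\gamma_2\Ecal_2(\gamma_2)$, since the integrand of $A_2$ is $\gamma_2$ times that of $\Ecal_2$; here the reduction of $\Ecal_2$ to $\Exp[1/(\gamma_w S^2+\gamma_2)]$ is exactly where the matched noise assumption $\gamma_w=\gamma_{w0}$ is used. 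In words, each identity says that in the matched case the \emph{sensitivity equals precision times MSE}.

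Given $\tau_{1k}=\gammabar_{1k}^{-1}$, I would run one half-iteration of \eqref{eq:se}. The first identity yields $\alphabar_{1k}=A_1(\gammabar_{1k},\gammabar_{1k}^{-1})=\gammabar_{1k}\Ecal_1(\gammabar_{1k})$, whence \eqref{eq:eta1se} gives $\etabar_{1k}=\gammabar_{1k}/\alphabar_{1k}=1/\Ecal_1(\gammabar_{1k})$ and $\gammabar_{2k}=\etabar_{1k}-\gammabar_{1k}=(1-\alphabar_{1k})/\Ecal_1(\gammabar_{1k})$. Substituting $\alphabar_{1k}^2\tau_{1k}=\gammabar_{1k}\Ecal_1^2$ into \eqref{eq:tau2se} collapses the bracket to $\Ecal_1(1-\alphabar_{1k})$, so $\tau_{2k}=\Ecal_1/(1-\alphabar_{1k})=\gammabar_{2k}^{-1}$, establishing the matched relation for the LMMSE stage. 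Repeating the identical algebra with the second identity and \eqref{eq:a2se}--\eqref{eq:tau1se} gives $\etabar_{2k}=1/\Ecal_2(\gammabar_{2k})$, $\gammabar_{1,\kp1}=\etabar_{2k}-\gammabar_{2k}$, and $\tau_{1,\kp1}=\gammabar_{1,\kp1}^{-1}$, which is precisely the inductive hypothesis at $k+1$. This closes the induction and proves \eqref{eq:eta1sematch}--\eqref{eq:eta2sematch}.

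The inverse-MSE claim \eqref{eq:etammse} then follows immediately from the already-established formula \eqref{eq:mseEcal}: under $\tau_{ik}=\gammabar_{ik}^{-1}$ we have $\MSE_{ik}=\Ecal_i(\gammabar_{ik},\tau_{ik})=\Ecal_i(\gammabar_{ik})=\etabar_{ik}^{-1}$. I expect no serious obstacle, as the argument is elementary once the two matched identities are in hand; the only points requiring care are the bookkeeping that verifies the bracketed terms in \eqref{eq:tau2se} and \eqref{eq:tau1se} factor cleanly, and confirming that $\gamma_w=\gamma_{w0}$ is genuinely what reduces $\Ecal_2$ to its one-argument matched form. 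A worthwhile sanity check is that $\alphabar_{ik}\in(0,1)$ throughout this regime (guaranteed for $i=1$ by the ``Matched MMSE denoising'' discussion and for $i=2$ by $A_2(\gamma_2)=\Exp[\gamma_2/(\gamma_w S^2+\gamma_2)]\in(0,1)$), so that the divisions by $1-\alphabar_{ik}$ and by $\alphabar_{ik}$ are legitimate.
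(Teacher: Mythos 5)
Your proposal is correct and follows essentially the same route as the paper's proof in Appendix~\ref{sec:seMmsePf}: induction on the matched relation $\tau_{1k}=\gammabar_{1k}^{-1}$, the key identity $\alphabar_{1k}=\gammabar_{1k}\Ecal_1(\gammabar_{1k})$ (from \eqref{eq:A1match} and \eqref{eq:E1match}, with the analogous identity for the LMMSE stage handled by the paper's ``similar argument''), the same algebraic collapse of \eqref{eq:tau2se} and \eqref{eq:tau1se}, and \eqref{eq:etammse} via \eqref{eq:mseEcal}. Your explicit statement of $A_2(\gamma_2)=\gamma_2\Ecal_2(\gamma_2)$ and the sanity check that $\alphabar_{ik}\in(0,1)$ are welcome additions but do not change the argument.
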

\begin{proof}  See Appendix~\ref{sec:seMmsePf}.
\end{proof}

It is useful to compare this result with
the work \cite{tulino2013support}, which uses the \emph{replica method} from statistical physics
to predict the asymptotic MMSE error in the LSL.  To state the result,
given a positive semidefinite matrix $\Cbf$, we define its Stieltjes transform as
\beq \label{eq:stieltjes}
    S_{\Cbf}(\omega) = \frac{1}{N} \Tr\left[ (\Cbf - \omega \Ibf_N)^{-1} \right]
     = \frac{1}{N} \sum_{n=1}^N \frac{1}{\lambda_n - \omega},
\eeq
where $\lambda_n$ are the eigenvalues of $\Cbf$. Also, let $R_{\Cbf}(\omega)$
denote the so-called $R$-transform of $\Cbf$, given by
\beq \label{eq:rtrans}
    R_{\Cbf}(\omega) =  S_{\Cbf}^{-1}(-\omega) - \frac{1}{\omega},
\eeq
where the inverse $S_{\Cbf}^{-1}(\cdot)$ is in terms of composition of functions.
The Stieltjes and $R$-transforms are discussed in detail in \cite{TulinoV:04}.
The Stieltjes and $R$-transforms can be
extended to random matrix sequences by taking limits as $N \arr\infty$
(for matrix sequences where these limits converge almost surely).

Now suppose that $\xbfhat = \Exp[\xbf^0|\ybf]$ is the MMSE estimate of $\xbf^0$ given $\ybf$.
Let $\etabar^{-1}$ be the asymptotic inverse MSE
\[
    \etabar^{-1} := \lim_{N \arr \infty} \frac{1}{N} \|\xbfhat-\xbf^0\|^2.
\]
Using a so-called replica symmetric analysis, it is argued in
\cite{tulino2013support} that this MSE should satisfy the fixed point
equations
\beq \label{eq:fixrep}
    \gammabar_1 = R_{\Cbf}(-\etabar^{-1}), \quad
    \etabar^{-1} =  \Ecal_1( \gammabar_1 ),
\eeq
where $\Cbf=\gamma_{w0}\Abf\tran\Abf$.
A similar result is given in \cite{kabashima2014signal}.

\begin{theorem} \label{thm:replica}  Let $\gammabar_i,\etabar_i$ be any fixed point
solutions to the SE equations \eqref{eq:sematch} of VAMP under MMSE denoising and matched $\gamma_w=\gamma_{w0}$.
Then $\etabar_1=\etabar_2$.  If we define $\etabar:=\etabar_i$ as the common value,
then $\gammabar_1$ and $\etabar$ satisfy the replica fixed point equation \eqref{eq:fixrep}.
\end{theorem}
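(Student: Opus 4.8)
The plan is to exploit the fact that at a fixed point the SE equations \eqref{eq:sematch} collapse into four scalar relations, and that under the matched condition the LMMSE error function $\Ecal_2$ is precisely the Stieltjes transform of $\Cbf=\gamma_{w0}\Abf\tran\Abf$. First I would drop the iteration index from \eqref{eq:sematch} and write the four fixed-point relations: $\etabar_1 = 1/\Ecal_1(\gammabar_1)$, $\gammabar_2 = \etabar_1 - \gammabar_1$, $\etabar_2 = 1/\Ecal_2(\gammabar_2)$, and $\gammabar_1 = \etabar_2 - \gammabar_2$. The claim $\etabar_1 = \etabar_2$ is then immediate: the second relation gives $\etabar_1 = \gammabar_1 + \gammabar_2$ and the fourth gives $\etabar_2 = \gammabar_1 + \gammabar_2$, so both equal the common value $\etabar := \gammabar_1 + \gammabar_2$.

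For the replica fixed point \eqref{eq:fixrep}, the second equation $\etabar^{-1} = \Ecal_1(\gammabar_1)$ is simply a restatement of $\etabar_1 = 1/\Ecal_1(\gammabar_1)$ together with $\etabar = \etabar_1$. The substance is in the first equation $\gammabar_1 = R_{\Cbf}(-\etabar^{-1})$. Here I would unwind the definition of the $R$-transform \eqref{eq:rtrans} at argument $-\etabar^{-1}$, giving $R_{\Cbf}(-\etabar^{-1}) = S_{\Cbf}^{-1}(\etabar^{-1}) + \etabar$, so that the target reduces to showing $S_{\Cbf}^{-1}(\etabar^{-1}) = \gammabar_1 - \etabar = -\gammabar_2$, i.e., equivalently $S_{\Cbf}(-\gammabar_2) = \etabar^{-1}$.

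The key step is to identify $\Ecal_2$ with the Stieltjes transform. Under the matched condition $\gamma_w = \gamma_{w0}$, Lemma~\ref{lem:errfn}(b), equation \eqref{eq:eps2Qmatch}, gives $\Ecal_2(\gammabar_2) = \lim_{N\arr\infty}\frac{1}{N}\Tr[(\gamma_{w0}\Abf\tran\Abf + \gammabar_2\Ibf)^{-1}]$, which is exactly $S_{\Cbf}(-\gammabar_2)$ by the definition \eqref{eq:stieltjes} with $\Cbf = \gamma_{w0}\Abf\tran\Abf$. Combining this with the third fixed-point relation $\Ecal_2(\gammabar_2) = 1/\etabar_2 = \etabar^{-1}$ yields $S_{\Cbf}(-\gammabar_2) = \etabar^{-1}$, hence $S_{\Cbf}^{-1}(\etabar^{-1}) = -\gammabar_2$, and substituting back gives $R_{\Cbf}(-\etabar^{-1}) = -\gammabar_2 + \etabar = \gammabar_1$, as desired.

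I do not anticipate a serious obstacle: once the fixed-point relations are in hand, the only nontrivial ingredient is the Stieltjes-transform identity from Lemma~\ref{lem:errfn}(b), and the remainder is routine algebra of composing and inverting the transforms. The one point I would handle with care is the well-definedness of $S_{\Cbf}^{-1}$ at the relevant argument; since $\Cbf + \gammabar_2\Ibf \succ 0$ for $\gammabar_2 > 0$ and $S_{\Cbf}$ is strictly monotone along the negative real axis, the inverse evaluated at $\etabar^{-1}$ is unambiguous, so this causes no difficulty.
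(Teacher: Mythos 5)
Your proposal is correct and follows essentially the same route as the paper's proof: both derive $\etabar_1=\etabar_2=\gammabar_1+\gammabar_2$ from the fixed-point relations, identify $\Ecal_2(\gammabar_2)=S_{\Cbf}(-\gammabar_2)$ via Lemma~\ref{lem:errfn} (you cite the trace form \eqref{eq:eps2Qmatch}, the paper the equivalent spectral form \eqref{eq:eps2Smatch}), and then invert the Stieltjes transform to obtain $\gammabar_1=\etabar+S_{\Cbf}^{-1}(\etabar^{-1})=R_{\Cbf}(-\etabar^{-1})$. Your added remark on the well-definedness of $S_{\Cbf}^{-1}$ at $\etabar^{-1}$ is a minor refinement the paper leaves implicit.
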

\begin{proof}  Note that we have dropped the iteration index $k$ since we are discussing
a fixed point.  First, \eqref{eq:sematch} shows that, at any fixed point,
\[
    \gammabar_1+\gammabar_2 = \etabar_1 = \etabar_2,
\]
so that $\etabar_1=\etabar_2$.  Also, in the matched case, \eqref{eq:eps2Smatch} shows that
\[
    \Ecal_2(\gammabar_2) = S_{\Cbf}(-\gammabar_2) .
\]
Since $\etabar^{-1} = \Ecal_2(\gammabar_2)$, we have that
\[
    \gammabar_1 = \etabar-\gammabar_2 = \etabar+S_\Cbf^{-1}(\etabar^{-1}) = R_\Cbf(-\etabar^{-1}).
\]
Also,  $\etabar^{-1}=\etabar_1^{-1} = \Ecal(\gammabar_1)$.
\end{proof}

The consequence of Theorem~\ref{thm:replica} is that, if the replica equations \eqref{eq:fixrep}
have a unique fixed point, then the MSE achieved by the VAMP algorithm
exactly matches the Bayes optimal MSE as predicted by the replica method.  Hence, if this
replica prediction is correct, then the VAMP method provides a computationally efficient
method for finding MSE optimal estimates under very general priors---including priors
for which the associated penalty functions are not convex.

The replica method, however, is generally heuristic.
But in the case of i.i.d.\ Gaussian matrices,
it has recently been proven that the replica prediction is correct
\cite{reeves2016replica,barbier2016mutual}.

\section{Numerical Experiments} \label{sec:num}

In this section, we present numerical experiments that compare
the VAMP\footnote{A Matlab implementation of VAMP can be found in the public-domain GAMPmatlab toolbox \cite{GAMP-code}.}
Algorithm~\ref{algo:vampSVD} to
the VAMP state evolution from Section~\ref{sec:SE},
the replica prediction from \cite{tulino2013support},
the AMP Algorithm~\ref{algo:amp} from \cite{DonohoMM:10-ITW1},
the S-AMP algorithm from \cite[Sec.~IV]{cakmak2014samp},
the adaptively damped (AD) GAMP algorithm from \cite{Vila:ICASSP:15},
and the support-oracle MMSE estimator, whose MSE lower bounds that achievable by any practical method.
In all cases, we consider the recovery of vectors $\xbf^0\in\R^N$
from AWGN-corrupted measurements $\ybf\in\R^M$ constructed from \eqref{eq:yAx}, where
$\xbf^0$ was drawn i.i.d.\ zero-mean Bernoulli-Gaussian with $\Pr\{x^0_j\neq 0\}=0.1$,
where $\wbf\sim\Norm(\zero,\Ibf/\gamma_{w0})$,
and where $M=512$ and $N=1024$.
All methods under test were matched to the true signal and noise statistics.
When computing the support-oracle MMSE estimate, the support of $\xbf^0$ is assumed to be known, in which case the problem reduces to estimating the non-zero coefficients of $\xbf^0$.
Since these non-zero coefficients are Gaussian, their MMSE estimate can be computed in closed form.
For VAMP we used the implementation enhancements described in Section~\ref{sec:implementation}.
For line~\ref{line:gamma} of AMP Algorithm~\ref{algo:amp}, we used $1/\gamma_{\kp1}=1/\gamma_{w0}+\frac{N}{M}\alpha_k/\gamma_k$, as specified in \cite[Eq.\ (25)]{DonohoMM:10-ITW1}.
For the AMP, S-AMP, and AD-GAMP algorithms, we allowed a maximum of $1000$ iterations, and for the VAMP algorithm we allowed a maximum of $100$ iterations.

\subsection{Ill-conditioned $\Abf$} \label{sec:ill}

First we investigate algorithm robustness to the condition number of $\Abf$.
For this study, realizations of $\Abf$ were constructed from the SVD $\Abf=\Ubfbar\Diag(\sbfbar)\Vbfbar\tran\in\R^{M\times N}$ with geometric singular values $\sbfbar\in\R^M$.
That is, $\bar{s}_i/\bar{s}_{i-1}=\rho~\forall i$, with $\rho$ chosen to achieve a desired condition number $\kappa(\Abf):=\bar{s}_1/\bar{s}_M$ and with $\bar{s}_1$ chosen so that $\|\Abf\|_F^2=N$.
The singular vector matrices $\Ubfbar,\Vbfbar$ were drawn uniformly at random from the group of orthogonal matrices, i.e., from the Haar distribution.
Finally, the signal and noise variances were set to achieve a signal-to-noise ratio (SNR) $\Exp[\|\Abf\xbf\|^2]/\Exp[\|\wbf\|^2]$ of $40$~dB.

Figure~\ref{fig:nmse_vs_cond} plots the median normalized MSE (NMSE) achieved by each algorithm over $500$ independent realizations of $\{\Abf,\xbf,\wbf\}$, where $\text{NMSE}(\xbfhat):=\|\xbfhat-\xbf^0\|^2/\|\xbf^0\|^2$.
To enhance visual clarity, NMSEs were clipped to a maximum value of $1$.
Also, error bars are shown that (separately) quantify the positive and negative standard deviations of VAMP's NMSE from the median value.
The NMSE was evaluated for condition numbers $\kappa(\Abf)$ ranging from $1$ (i.e., row-orthogonal $\Abf$) to $1\times10^6$ (i.e., highly ill-conditioned $\Abf$).

\begin{figure}[t]
\centering
\psfrag{SNR=40dB, N=1024, M=512, rho=0.2, U=Haar, V=Haar, isCmplx=0, median of 500}{}
\psfrag{condition number}[t][t][0.7]{\sf condition number $\kappa(\Abf)$}
\psfrag{median NMSE [dB]}[b][b][0.7]{\sf median NMSE [dB]}
\psfrag{damped GAMP}[lB][lB][0.42]{\sf AD-GAMP}
\includegraphics[width=\figsize]{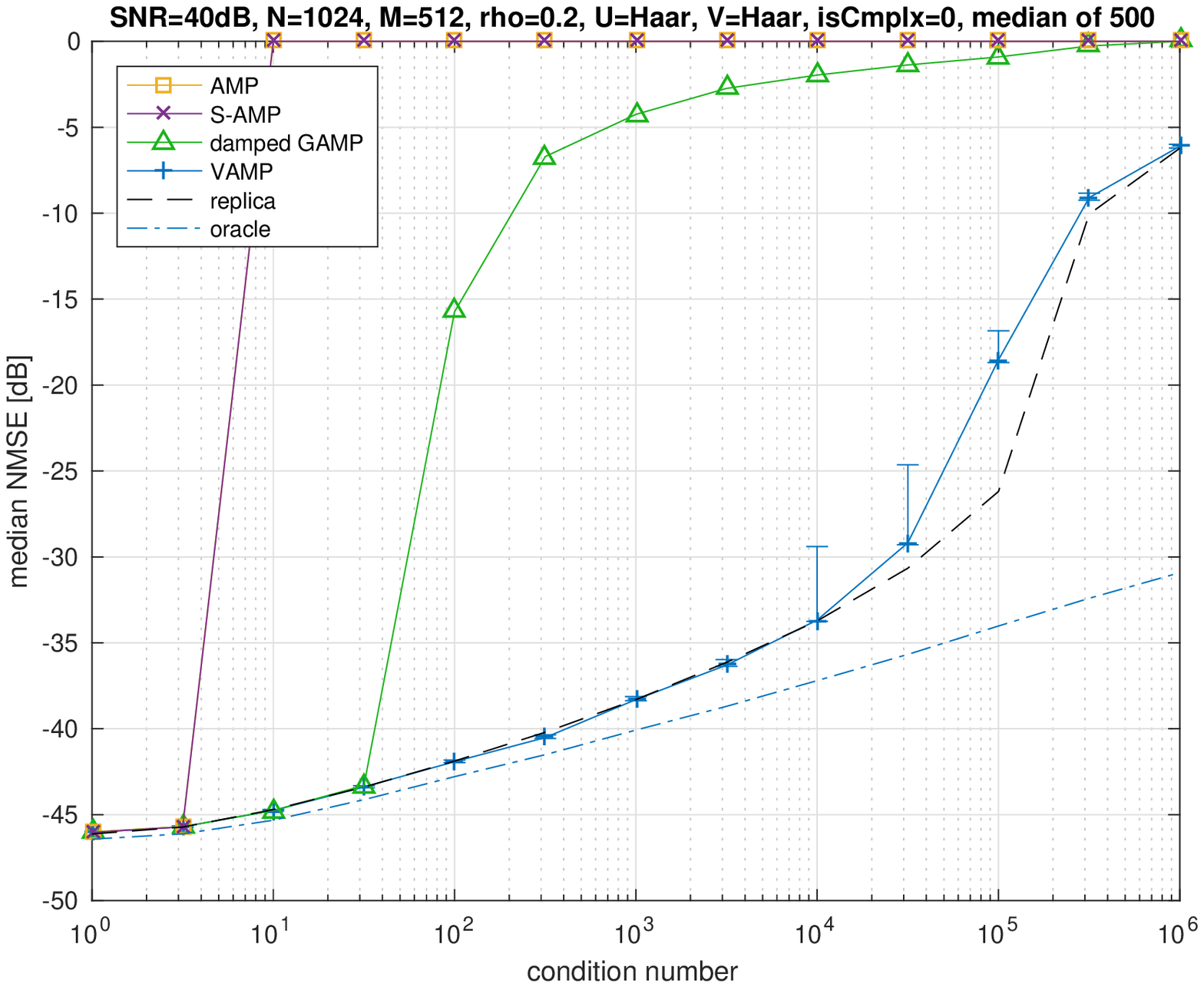}
\caption{NMSE versus condition number $\kappa(\Abf)$ at final algorithm iteration.  The reported NMSE is the median over $500$ realizations, with error bars shown on the VAMP trace.
\label{fig:nmse_vs_cond}}
\end{figure}

In Figure~\ref{fig:nmse_vs_cond}, we see that AMP and S-AMP diverged for even mildly ill-conditioned $\Abf$.
We also see that, while adaptive damping helped to extend the operating range of AMP, it had a limited effect.
In contrast, Figure~\ref{fig:nmse_vs_cond} shows that VAMP's NMSE stayed relatively close to the replica prediction for all condition numbers $\kappa(\Abf)$.
The small gap between VAMP and the replica prediction is due to finite-dimensional effects; the SE analysis from Section~\ref{sec:SE} establishes that this gap closes in the large-system limit.
Finally, Figure~\ref{fig:nmse_vs_cond} shows that the oracle bound is close to the replica prediction at small $\kappa(\Abf)$ but not at large $\kappa(\Abf)$.

Figure~\ref{fig:nmse_vs_iter_cond}(a) plots NMSE versus algorithm iteration for condition number $\kappa(\Abf)=1$ and Figure~\ref{fig:nmse_vs_iter_cond}(b) plots the same for $\kappa(\Abf)=1000$, again with error bars on the VAMP traces.
Both figures show that the VAMP trajectory stayed very close to the VAMP-SE trajectory at every iteration.
The figures also show that VAMP converges a bit quicker than AMP, S-AMP, and AD-GAMP when $\kappa(\Abf)=1$, and that VAMP's convergence rate is relatively insensitive to the condition number $\kappa(\Abf)$.

\begin{figure}[t]
\centering
\psfrag{condition number=1}[b][b][0.7]{(a)}
\psfrag{condition number=1000}[b][b][0.7]{(b)}
\psfrag{iterations}[t][t][0.7]{\sf iteration}
\psfrag{median NMSE [dB]}[b][b][0.7]{\sf median NMSE [dB]}
\psfrag{damped GAMP}[lB][lB][0.42]{\sf AD-GAMP}
\includegraphics[width=\figsize]{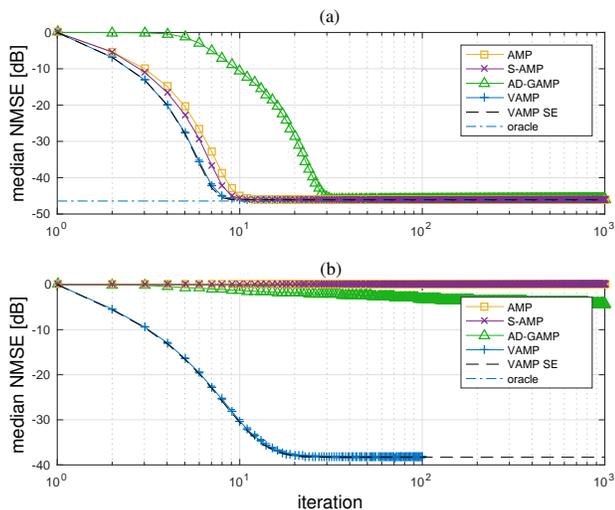}
\caption{NMSE versus algorithm iteration for condition number $\kappa(\Abf)=1$ in (a) and $\kappa(\Abf)=1000$ in (b).  The reported NMSE is the median over $500$ realizations, with error bars shown on the VAMP traces.
\label{fig:nmse_vs_iter_cond}}
\end{figure}

\subsection{Non-zero-mean $\Abf$} \label{sec:nzmean}

In this section, we investigate algorithm robustness to the componentwise mean of $\Abf$.
For this study, realizations of $\Abf$ were constructed by first drawing an i.i.d.\ $\Norm(\mu,1/M)$ matrix and then scaling it so that $\|\Abf\|_F^2=N$ (noting that essentially no scaling is needed when $\mu\approx 0$).
As before, the signal and noise variances were set to achieve an SNR of $40$~dB.
For AD-GAMP, we used the mean-removal trick proposed in \cite{Vila:ICASSP:15}.

Figure~\ref{fig:nmse_vs_mean} plots the NMSE achieved by each algorithm over $200$ independent realizations of $\{\Abf,\xbf,\wbf\}$.
The NMSE was evaluated for mean parameters $\mu$ between $0.001$ and $10$.
Note that, when $\mu>0.044$, the mean is larger than the standard deviation.
Thus, the values of $\mu$ that we consider are quite extreme relative to past studies like \cite{Caltagirone:14-ISIT}.

\begin{figure}[t]
\centering
\psfrag{SNR=40dB, N=1024, M=512, rho=0.2, isCmplx=0, median of 200}{}
\psfrag{nzmean}[t][t][0.7]{\sf mean $\mu$ of $\Abf$}
\psfrag{median NMSE [dB]}[b][b][0.7]{\sf median NMSE [dB]}
\psfrag{damped GAMP}[lB][lB][0.42]{\sf MAD-GAMP}
\includegraphics[width=\figsize]{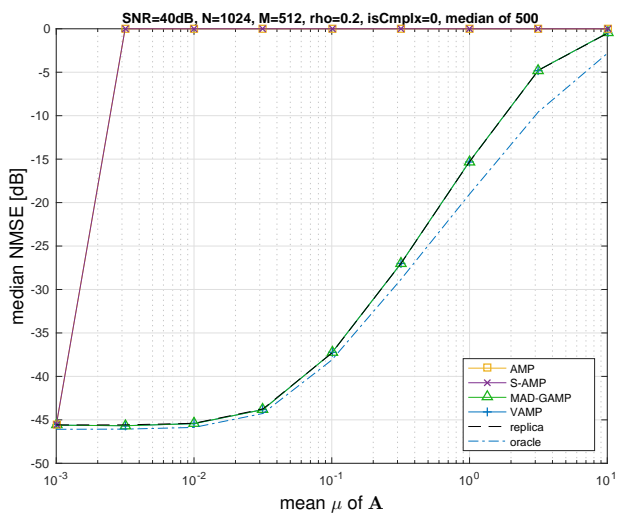}
\caption{NMSE versus mean $\mu$ at final algorithm iteration.  The reported NMSE is the median over $200$ realizations, with error bars shown on the VAMP trace.
\label{fig:nmse_vs_mean}}
\end{figure}

Figure~\ref{fig:nmse_vs_mean} shows that AMP and S-AMP diverged for even mildly mean-perturbed $\Abf$.
In contrast, the figure shows that VAMP and mean-removed AD-GAMP (MAD-GAMP) closely matched the replica prediction for all mean parameters $\mu$.
It also shows a relatively small gap between the replica prediction and the oracle bound, especially for small $\mu$.

\begin{figure}[t]
\centering
\psfrag{nzmean=0.001}[b][b][0.7]{(a)}
\psfrag{nzmean=1}[b][b][0.7]{(b)}
\psfrag{iterations}[t][t][0.7]{\sf iteration}
\psfrag{median NMSE [dB]}[b][b][0.7]{\sf median NMSE [dB]}
\psfrag{damped GAMP}[lB][lB][0.42]{\sf MAD-GAMP}
\includegraphics[width=\figsize]{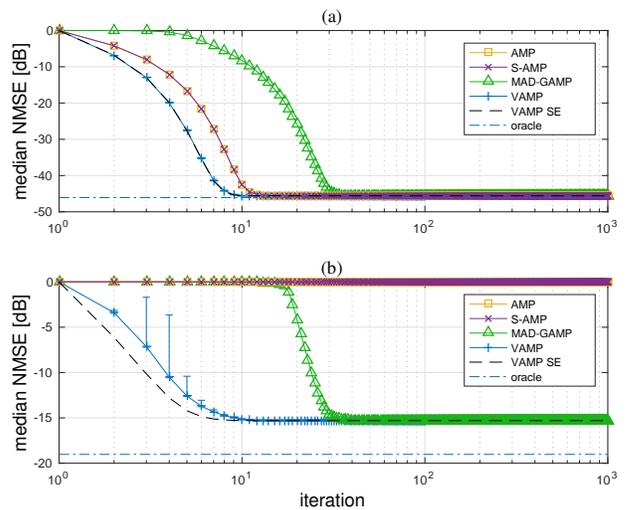}
\caption{NMSE versus algorithm iteration when $\Abf$ has mean $\mu=0.001$ in (a) and $\mu=1$ in (b).  The reported NMSE is the median over $200$ realizations, with error bars shown on the VAMP traces.
\label{fig:nmse_vs_iter_mean}}
\end{figure}

Figure~\ref{fig:nmse_vs_iter_mean}(a) plots NMSE versus algorithm iteration for matrix mean $\mu=0.001$ and Figure~\ref{fig:nmse_vs_iter_mean}(b) plots the same for $\mu=1$.
When $\mu=0.001$, VAMP closely matched its SE at all iterations and converged noticeably quicker than AMP, S-AMP, and MAD-VAMP.
When $\mu=1$, there was a small but noticeable gap between VAMP and its SE for the first few iterations, although the gap closed after about $10$ iterations.
This gap may be due to the fact that the random matrix $\Abf$ used for this experiment was not right-\textb{orthogonally} invariant, since the dominant singular vectors are close to (scaled versions of) the $\one$s vector for sufficiently large $\mu$.

\subsection{Row-orthogonal $\Abf$} \label{sec:SNR}

In this section we investigate algorithm NMSE versus SNR for row-orthogonal $\Abf$, i.e., $\Abf$ constructed as in Section~\ref{sec:ill} but with $\kappa(\Abf)=1$.
Previous studies \cite{cakmak2015samp,kabashima2014signal} have demonstrated that, when $\Abf$ is \textb{orthogonally} invariant but not i.i.d.\ Gaussian (e.g., row-orthogonal),
the fixed points of S-AMP and diagonal-restricted EC are better than those of AMP because the former approaches exploit the singular-value spectrum of $\Abf$, whereas AMP does not.

Table~\ref{tab:nmse_vs_snr} reports the NMSE achieved by VAMP, S-AMP, and AMP at three levels of SNR: $10$~dB, $20$~dB, and $30$~dB.
The NMSEs reported in the table were computed from an average of $1000$ independent realizations of $\{\Abf,\xbf,\wbf\}$.
Since the NMSE differences between the algorithms are quite small, the table also reports the standard error on each NMSE estimate to confirm its accuracy.

Table~\ref{tab:nmse_vs_snr} shows that VAMP and S-AMP gave nearly identical NMSE at all tested SNRs, which is expected because these two algorithms share the same fixed points.
The table also shows that VAMP's NMSE was strictly better than AMP's NMSE at low SNR (as expected), but that the NMSE difference narrows as the SNR increases.
Finally, the table reports the replica prediction of the NMSE, which is about $3\%$ lower (i.e., $-0.15$~dB) than VAMP's empirical NMSE at each SNR.
We attribute this difference to finite-dimensional effects.

\begin{table}[t]
\centering
\begin{tabular}{@{}c@{~} | @{~}c @{~~} r@{}l @{~~} r@{}l @{~~} r@{}l@{}}
SNR &replica  &VAMP&(stderr)      &S-AMP&(stderr)      &AMP&(stderr) \\\hline
10 dB&5.09e-02&5.27e-02&(4.3e-04)&5.27e-02&(4.3e-04)&5.42e-02&(4.2e-04)\\
20 dB&3.50e-03&3.57e-03&(2.7e-05)&3.58e-03&(2.7e-05)&3.62e-03&(2.6e-05)\\
30 dB&2.75e-04&2.84e-04&(2.2e-06)&2.85e-04&(2.2e-06)&2.85e-04&(2.1e-06)\\
\end{tabular}
\medskip

\caption{Average NMSE versus SNR for row-orthogonal $\Abf$, where the average was computed from $1000$ realizations. Standard error deviations are also reported.
\label{tab:nmse_vs_snr}}
\end{table}

\subsection{Discussion}

Our numerical results confirm what is already known about the \emph{fixed points} of diagonally restricted EC (via VAMP) and S-AMP.
That is,
when $\Abf$ is large and right-\textb{orthogonally} invariant,
they agree with each other and with the replica prediction;
and when $\Abf$ is large i.i.d.\ Gaussian (which is a special case of right-\textb{orthogonally} invariant \cite{TulinoV:04}),
they furthermore agree with the fixed points of AMP
\cite{cakmak2015samp,kabashima2014signal}.

But our numerical results also clarify that it is not enough for an algorithm to have good fixed points, because it may not converge to its fixed points.
For example, although the fixed points of S-AMP are good (i.e., replica matching) for \emph{any} large right-\textb{orthogonally} invariant $\Abf$, our numerical results indicate that S-AMP converges only for a small subset of large right-\textb{orthogonally} invariant $\Abf$: those with singular-value spectra similar (or flatter than) i.i.d.\ Gaussian $\Abf$.

The SE analysis from Section~\ref{sec:SE} establishes that, in the large-system limit and under matched priors, VAMP is guaranteed to converge to a fixed point that is also a fixed point of the replica equation \eqref{eq:fixrep}.
Our numerical results suggest that, even with large but finite-dimensional right \textb{orthogonally} invariant $\Abf$ (i.e., $512\times 1024$ in our simulations),
VAMP attains NMSEs that are very close to the replica prediction.

\section{Conclusions} \label{sec:conc}

In this paper, we considered the standard linear regression (SLR) problem \eqref{eq:yAx}, where the goal is to recover the vector $\xbf^0$ from noisy linear measurements $\ybf=\Abf\xbf^0+\wbf$.
Our work is inspired by Donoho, Maleki, and Montanari's AMP algorithm \cite{DonohoMM:09}, which offers a computationally efficient approach to SLR.
AMP has the desirable property that its behavior is rigorously characterized under large i.i.d.\ sub-Gaussian $\Abf$ by a scalar state evolution whose fixed points, when unique, are Bayes optimal \cite{BayatiM:11}.
A major shortcoming of AMP, however, is its fragility with respect to the i.i.d.\ sub-Gaussian model on $\Abf$: even small perturbations from this model can cause AMP to diverge.

In response, we proposed a vector AMP (VAMP) algorithm that (after performing an initial SVD) has similar complexity to AMP but is much more robust with respect to the matrix $\Abf$.
Our main contribution is establishing that VAMP's behavior can be rigorously characterized by a scalar state-evolution that holds for large, right-\textb{orthogonally} invariant $\Abf$.
The fixed points of VAMP's state evolution are, in fact,
consistent with the replica prediction of the minimum mean-squared
error recently derived in \cite{tulino2013support}.
We also showed how VAMP can be derived as an approximation of belief propagation on a factor graph with vector-valued nodes, hence the name ``vector AMP.''
Finally, we presented numerical experiments to demonstrate VAMP's robust convergence for ill-conditioned and mean-perturbed matrices $\Abf$ that cause earlier AMP algorithms to diverge.

As future work, it would be interesting to extend VAMP to the generalized linear model, where the outputs $\Abf\xbf^0$ are non-linearly mapped to $\ybf$.
Also, it would be interesting to design and analyze extensions of VAMP that are robust to \textb{more general models for $\Abf$, such as the case where $\Abf$ is statistically coupled to $\xbf^0$.}

\appendices

\section{Message-Passing Derivation of VAMP} \label{sec:EP}

In this appendix, we detail the message-passing derivation of Algorithm~\ref{algo:vamp}.
Below, we will use $k$ to denote the VAMP iteration and $n$ to index the elements of $N$-dimensional vectors like $\xbf_{1},\rbf_{1k}$ and $\xbfhat_{1k}$.
We start by initializing the message-passing with
$\msg{\delta}{\xbf_1}(\xbf_1)=\Norm(\xbf_1;\rbf_{10},\gamma_{10}^{-1}\Ibf)$.
The following steps are then repeated for $k=0,1,2,\dots$.

From Rule~\ref{rule:b}, we first set
the approximate belief on $\xbf_1$ as
$\Norm(\xbf_1;\xbfhat_{1k},\eta_{1k}^{-1}\Ibf)$,
where
$\xbfhat_{1k} = \Exp[\xbf_1|\beltil(\xbf_1)]$ and
$\eta_{1k}^{-1} = \bkt{\diag(\Cov[\xbf_1|\beltil(\xbf_1)])}$
for the SP belief
$\beltil(\xbf_1)\propto p(\xbf_1)\Norm(\xbf_1;\rbf_{1k},\gamma_{1k}^{-1}\Ibf)$.
With an i.i.d.\ prior $p(\xbf_1)$ as in \eqref{eq:pxiid},
we have that $[\xbfhat_{1k}]_n= g_1(r_{1k,n},\gamma_{1k})$
for the conditional-mean estimator $g_1(\cdot,\gamma_{1k})$
given in \eqref{eq:g1mmsesca},
yielding line~\ref{line:x1} of Algorithm~\ref{algo:vamp}.
Furthermore, from \eqref{eq:g1dervar} we see that
the corresponding conditional covariance is
$\gamma_{1k}^{-1}g_1'(r_{1k,n},\gamma_{1k})$,
yielding lines~\ref{line:a1}-\ref{line:eta1} of Algorithm~\ref{algo:vamp}.

Next, Rule~\ref{rule:v2f} says to set the message $\msg{\xbf_1}{\delta}(\xbf_1)$
proportional to
$\Norm(\xbf_1;\xbfhat_{1k},\eta_{1k}^{-1}\Ibf)
 /\Norm(\xbf_1;\rbf_{1k},\gamma_{1k}^{-1}\Ibf)$.
Since
\begin{align}
\lefteqn{
\Norm(\xbf;\xbfhat,\eta^{-1}\Ibf)/\Norm(\xbf;\rbf,\gamma^{-1}\Ibf)
}\nonumber\\
&\propto \Norm\big(\xbf;(\xbfhat\eta-\rbf\gamma)/(\eta-\gamma),(\eta-\gamma)^{-1}\Ibf\big)
\label{eq:gauss_div},
\end{align}
we have
$\msg{\xbf_1}{\delta}(\xbf_1)=\Norm(\xbf_1;\rbf_{2k},\gamma_{2k}^{-1}\Ibf)$
for
$\rbf_{2k}=(\xbfhat_{1k}\eta_{1k}-\rbf_{1k}\gamma_{1k})/(\eta_{1k}-\gamma_{1k})$
and
$\gamma_{2k}=\eta_{1k}-\gamma_{1k}$,
yielding lines~\ref{line:gam2}-\ref{line:r2} of Algorithm~\ref{algo:vamp}.
Rule~\ref{rule:f2v} then implies that the message $\msg{\xbf_1}{\delta}(\xbf_1)$
will flow rightward through the $\delta$ node unchanged, manifesting as
$\msg{\delta}{\xbf_2}(\xbf_2)=\Norm(\xbf_2;\rbf_{2k},\gamma_{2k}^{-1}\Ibf)$
on the other side.

Rule~\ref{rule:b} then says to set the approximate belief on $\xbf_2$ at
$\Norm(\xbf_2;\xbfhat_{2k},\eta_{2k}^{-1}\Ibf)$,
where
$\xbfhat_{2k} = \Exp[\xbf_2|\beltil(\xbf_2)]$ and
$\eta_{2k}^{-1} = \bkt{\diag(\Cov[\xbf_2|\beltil(\xbf_2)])}$
for the SP belief
$\beltil(\xbf_2)\propto \Norm(\xbf_2;\rbf_{2k},\gamma_{2k}^{-1}\Ibf)
 \Norm(\ybf;\Abf\xbf_2,\gamma_w^{-1}\Ibf)$.
Using standard manipulations, it can be shown that this belief is Gaussian with mean
\begin{align}
\xbfhat_{2k}
&= \left( \gamma_w \Abf\tran\Abf + \gamma_{2k}\Ibf\right)^{-1}
        \left( \gamma_w\Abf\tran\ybf + \gamma_{2k}\rbf_{2k} \right)
\label{eq:x2}
\end{align}
and covariance $(\gamma_w\Abf\tran\Abf+\gamma_{2k}\Ibf)^{-1}$.
The equivalence between \eqref{eq:x2} and \eqref{eq:g2slr}
explains line~\ref{line:x2} of Algorithm~\ref{algo:vamp}.
Furthermore, it can be seen by inspection that the average of the diagonal of this covariance matrix coincides with
$\gamma_{2k}^{-1} \bkt{\gbf_2'(\rbf_{2k},\gamma_{2k})}$
for $\bkt{\gbf_2'(\rbf_{2k},\gamma_{2k})}$ from \eqref{eq:a2slr},
thus explaining lines~\ref{line:a2}-\ref{line:eta2} of Algorithm~\ref{algo:vamp}.

Rule~\ref{rule:v2f} then says to set the message $\msg{\xbf_2}{\delta}(\xbf_2)$ at
$\Norm(\xbf_2;\xbfhat_{2k},\eta_{2k}^{-1}\Ibf)
 /\Norm(\xbf_2;\rbf_{2k},\gamma_{2k}^{-1}\Ibf)$,
which \eqref{eq:gauss_div} simplifies to
$\Norm(\xbf_2;\rbf_{1,\kp1},\gamma_{1,\kp1}^{-1}\Ibf)$
for
$\rbf_{1,\kp1}=(\xbfhat_{2k}\eta_{2k}-\rbf_{2k}\gamma_{2k})/(\eta_{2k}-\gamma_{2k})$
and
$\gamma_{1,\kp1}=\eta_{2k}-\gamma_{2k}$,
yielding lines~\ref{line:gam1}-\ref{line:r1} of Algorithm~\ref{algo:vamp}.
Finally, Rule~\ref{rule:f2v} implies that the message $\msg{\xbf_2}{\delta}(\xbf_2)$
flows left through the $\delta$ node unchanged, manifesting as
$\msg{\delta}{\xbf_1}(\xbf_1)=\Norm(\xbf_1;\rbf_{1\kp1},\gamma_{1,\kp1}^{-1}\Ibf)$
on the other side.
The above messaging sequence is then repeated with $k\leftarrow k+1$.

\section{Convergence of Vector Sequences} \label{sec:empConv}
We review some definitions from the Bayati-Montanari paper \cite{BayatiM:11}, since we will use the same
analysis framework in this paper.
Fix a dimension $r > 0$, and suppose that, for each $N$,
$\xbf(N)$ is a vector of the form
\[
    \xbf(N) = (\xbf_1(N),\ldots,\xbf_N(N)),
\]
\textb{with vector sub-components} $\xbf_n(N) \in \R^r$.  Thus, the total dimension
of $\xbf(N)$ is $rN$.  In this case, we will say that
$\xbf(N)$ is a \emph{block vector sequence that scales with $N$
under blocks $\xbf_n(N) \in \R^r$.}
When $r=1$, so that the blocks are scalar, we will simply say that
$\xbf(N)$ is a \emph{vector sequence that scales with $N$}.
Such vector sequences can be deterministic or random.
In most cases, we will omit the notational dependence on $N$ and simply write $\xbf$.

Now, given $p \geq 1$,
a function $\fbf:\R^s \arr \R^r$ is called \emph{pseudo-Lipschitz of order $p$},
if there exists a constant $C > 0$ such that for all $\xbf_1,\xbf_2 \in\R^s$,
\[
    \|\fbf(\xbf_1)-\fbf(\xbf_2)\| \leq C\|\xbf_1-\xbf_2\|\left[ 1 + \|\xbf_1\|^{p-1}
    + \|\xbf_2\|^{p-1} \right].
\]
Observe that in the case $p=1$, pseudo-Lipschitz continuity reduces to
the standard Lipschitz continuity.

\textb{Now suppose that $\xbf=\xbf(N)$ is a block vector sequence,
which may be deterministic or random.}
Given $p \geq 1$, we will say that $\xbf=\xbf(N)$ converges
\emph{empirically with $p$-th order moments} if there exists a random variable
$X \in \R^r$ such that
\begin{enumerate}[(i)]
\item $\Exp|X|^p < \infty$; and
\item for any scalar-valued pseudo-Lipschitz continuous function $f(\cdot)$ of order $p$,
\beq \label{eq:plplim}
    \lim_{N \arr \infty} \frac{1}{N} \sum_{n=1}^N f(x_n(N)) = \Exp\left[ f(X) \right] \mbox{ a.s.}.
\eeq
\end{enumerate}
Thus, the empirical mean of the components $f(x_n(N))$ converges to
the expectation $\Exp[ f(X) ]$.
\textb{When $\xbf$ converges empirically with $p$-th order moments},
we will write, with some abuse of notation,
\beq \label{eq:plLim}
    \lim_{N \arr \infty} \left\{ x_n \right\}_{n=1}^N \stackrel{PL(p)}{=} X,
\eeq
where, as usual, we have omitted the dependence $x_n=x_n(N)$.
\textb{Note that the almost sure convergence in condition (ii) applies to the case
where $\xbf(N)$ is a random vector sequence.  Importantly,
this condition holds pointwise over each function $f(\cdot)$.
It is shown in \cite[Lemma 4]{BayatiM:11} that, if condition (i) is true and
condition (ii) is true for any bounded continuous functions $f(x)$
as well as $f(x)=x^p$, then condition (ii) holds for all pseudo-Lipschitz functions
of order $p$. }

We conclude with one final definition.
Let $\phibf(\rbf,\gamma)$ be a function on $\rbf \in \R^s$ and $\gamma \in \R$.
We say that $\phibf(\rbf,\gamma)$ is \emph{uniformly Lipschitz continuous} in $\rbf$
at $\gamma=\gammabar$ if there exists constants
$L_1$ and $L_2 \geq 0$ and an open neighborhood $U$ of $\gammabar$, such that
\beq \label{eq:unifLip1}
    \|\phibf(\rbf_1,\gamma)-\phibf(\rbf_2,\gamma)\| \leq L_1\|\rbf_1-\rbf_2\|,
\eeq
for all $\rbf_1,\rbf_2 \in \R^s$ and $\gamma \in U$; and
\beq \label{eq:unifLip2}
    \|\phibf(\rbf,\gamma_1)-\phibf(\rbf,\gamma_2)\| \leq L_2\left(1+\|\rbf\|\right)|\gamma_1-\gamma_2|,
\eeq
for all $\rbf \in \R^s$ and $\gamma_1,\gamma_2 \in U$.

\section{Proof of Lemmas~\ref{lem:errfn} and \ref{lem:sens}} \label{sec:errsenspf}

For Lemma~\ref{lem:errfn}, part (a) follows immediately from
\eqref{eq:g1mmsex0} and \eqref{eq:eps1}.
To prove part (b), suppose
\[
    \ybf = \Abf\xbf^0 + \wbf, \quad \rbf_2 = \xbf^0+\qbf.
\]
Then, the error is given by
\begin{align}
    \MoveEqLeft \gbf_2(\rbf_{2},\gamma_2) - \xbf^0
    \stackrel{(a)}{=} \left( \gamma_w \Abf\tran\Abf + \gamma_2\Ibf\right)^{-1}
        \nonumber \\
    & \quad \times \left( \gamma_w\Abf\tran\Abf\xbf^0 + \gamma_w\Abf\tran\wbf
    + \gamma_2\rbf_{2} \right) -\xbf^0 \nonumber \\
   &\stackrel{(b)}{=}
   \left( \gamma_w \Abf\tran\Abf + \gamma_2\Ibf\right)^{-1}
    \left( \gamma_2\qbf+ \gamma_w\Abf\tran\wbf\right),  \nonumber \\
   &\stackrel{(c)}{=} \Qbf^{-1}
    \left( \gamma_2\qbf+ \gamma_w\Abf\tran\wbf\right), \nonumber
\end{align}
where (a) follows by substituting $\ybf = \Abf\xbf^0 + \wbf$ into \eqref{eq:g2slr};
part (b) follows from the substitution $\textb{\rbf_2} = \xbf^0+\qbf$ and collecting the terms
with $\xbf^0$; and (c) follows from the definition of $\Qbf$ in \eqref{eq:QQtdef}.
Hence, the error covariance matrix is given
\begin{align*}
    \MoveEqLeft \Exp\left[ (\gbf_2(\rbf_{2},\gamma_2) - \xbf^0)(\gbf_2(\rbf_{2},\gamma_2) - \xbf^0)\tran
        \right] \\
    &= \Qbf^{-1}\left[ \gamma_2^2\Exp[\qbf\qbf\tran] + \gamma_w^2\Abf\Exp[\wbf\wbf\tran]\Abf\tran
    \right]\Qbf^{-1} \\
    &= \Qbf^{-1}\tilde{\Qbf}\Qbf^{-1},
\end{align*}
where we have used the
the fact that $\qbf$ and $\wbf$ are independent Gaussians
with variances $\tau_2$ and $\gamma_{w0}^{-1}$.    This proves \eqref{eq:eps2Q}.
\textb{Then,} under the matched condition, \textb{we have that} $\Qbf=\tilde{\Qbf}$, which proves \eqref{eq:eps2Qmatch}.
Part (c) of Lemma~\ref{lem:errfn} follows from part (b) by using the SVD \eqref{eq:ASVD}.

For Lemma~\ref{lem:sens}, part (a) follows from averaging \eqref{eq:g1mmsex0} over $r_1$.
Part (b) follows by taking the derivative in \eqref{eq:g2slr} and part (c)
follows from using the SVD~\eqref{eq:ASVD}.

\section{Orthogonal Matrices Under Linear Constraints}

In preparation for proving Theorem~\ref{thm:se},
we derive various results on orthogonal matrices subject to linear constraints.
To this end, suppose $\Vbf \in \R^{N \x N}$ is an orthogonal matrix
satisfying linear constraints
\beq \label{eq:AVB}
    \Abf = \Vbf\Bbf,
\eeq
for some matrices $\Abf, \Bbf \in \R^{N \x s}$ for some $s$.  Assume $\Abf$ and $\Bbf$
are full column rank (hence $s \leq N$).  Let
\beq \label{eq:UABdef}
    \Ubf_\Abf = \Abf(\Abf\tran\Abf)^{-1/2}, \quad
    \Ubf_\Bbf = \Bbf(\Bbf\tran\Bbf)^{-1/2}.
\eeq
Also, let $\Ubf_{\Abf^\perp}$ and $\Ubf_{\Bbf^\perp}$ be any $N \x (N-s)$
matrices whose columns are
an orthonormal bases for $\mathrm{Range}(\Abf)^\perp$ and
$\mathrm{Range}(\Bbf)^\perp$, respectively.
Define
\beq \label{eq:Vtdef}
    \tilde{\Vbf} := \Ubf_{\Abf^\perp}\tran\Vbf\Ubf_{\Bbf^\perp},
\eeq
which has dimension $(N-s) \x (N-s)$.

\begin{lemma}  \label{lem:orthogRep}
Under the above definitions $\tilde{\Vbf}$ satisfies
\beq \label{eq:VVtrep}
    \Vbf = \Abf(\Abf\tran\Abf)^{-1}\Bbf\tran + \Ubf_{\Abf^\perp}\tilde{\Vbf}\Ubf_{\Bbf^\perp}\tran.
\eeq
\end{lemma}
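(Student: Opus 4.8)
The plan is to resolve the identity on both the domain and the range side of $\Vbf$ using the two orthonormal systems attached to $\Bbf$ and $\Abf$ respectively, and then to use the constraint $\Abf=\Vbf\Bbf$ from \eqref{eq:AVB} to discard the cross terms. Everything reduces to a two-sided orthogonal block decomposition of $\Vbf$.

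First I would record the two resolutions of the identity. The columns of $\Ubf_\Abf$ from \eqref{eq:UABdef} form an orthonormal basis of $\Range(\Abf)$ and those of $\Ubf_{\Abf^\perp}$ an orthonormal basis of $\Range(\Abf)^\perp$, so $[\,\Ubf_\Abf\ \Ubf_{\Abf^\perp}\,]$ is an $N\times N$ orthogonal matrix and $\Ibf = \Ubf_\Abf\Ubf_\Abf\tran + \Ubf_{\Abf^\perp}\Ubf_{\Abf^\perp}\tran$; the analogous identity holds with $\Abf$ replaced by $\Bbf$. Writing $\Vbf = \Ibf\,\Vbf\,\Ibf$ and substituting both resolutions expands $\Vbf$ into four blocks, one for each pairing of $\{\Ubf_\Abf,\Ubf_{\Abf^\perp}\}$ on the left with $\{\Ubf_\Bbf,\Ubf_{\Bbf^\perp}\}$ on the right.

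Next I would kill the two off-diagonal blocks. Since $\Vbf$ is orthogonal, $\Abf=\Vbf\Bbf$ is equivalent both to $\Abf\tran\Vbf = \Bbf\tran$ and to $\Vbf\Bbf=\Abf$. Using the first, $\Ubf_\Abf\tran\Vbf\Ubf_{\Bbf^\perp} = (\Abf\tran\Abf)^{-1/2}\Bbf\tran\Ubf_{\Bbf^\perp} = \zero$ because $\Bbf\tran\Ubf_{\Bbf^\perp}=\zero$; using the second, $\Ubf_{\Abf^\perp}\tran\Vbf\Ubf_\Bbf = \Ubf_{\Abf^\perp}\tran\Abf(\Bbf\tran\Bbf)^{-1/2}=\zero$ because $\Ubf_{\Abf^\perp}\tran\Abf=\zero$. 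Geometrically this is just the statement that the orthogonal map $\Vbf$ carries $\Range(\Bbf)$ onto $\Range(\Abf)$ and hence $\Range(\Bbf)^\perp$ onto $\Range(\Abf)^\perp$, so only the two diagonal blocks survive.

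Finally I would identify the surviving blocks. The $\perp$-$\perp$ block is $\Ubf_{\Abf^\perp}(\Ubf_{\Abf^\perp}\tran\Vbf\Ubf_{\Bbf^\perp})\Ubf_{\Bbf^\perp}\tran = \Ubf_{\Abf^\perp}\tilde{\Vbf}\Ubf_{\Bbf^\perp}\tran$ directly from the definition \eqref{eq:Vtdef}. For the range-range block I would use that $\Ubf_\Abf\Ubf_\Abf\tran = \Abf(\Abf\tran\Abf)^{-1}\Abf\tran$ and $\Ubf_\Bbf\Ubf_\Bbf\tran = \Bbf(\Bbf\tran\Bbf)^{-1}\Bbf\tran$ are the orthogonal projectors onto $\Range(\Abf)$ and $\Range(\Bbf)$; the middle factor $\Abf\tran\Vbf\Bbf$ then collapses via $\Abf\tran\Vbf=\Bbf\tran$ to $\Bbf\tran\Bbf$, leaving exactly $\Abf(\Abf\tran\Abf)^{-1}\Bbf\tran$ as in \eqref{eq:VVtrep}. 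The only bookkeeping point to keep straight is that the equivalent substitution $\Vbf\Bbf=\Abf$ would instead produce $\Abf(\Bbf\tran\Bbf)^{-1}\Bbf\tran$, which agrees because $\Abf\tran\Abf=\Bbf\tran\Bbf$ (immediate from $\Abf=\Vbf\Bbf$ and $\Vbf\tran\Vbf=\Ibf$). There is otherwise no real obstacle; the content is entirely the two-sided orthogonal decomposition together with the constraint, and the main care needed is simply not to mix up the two equivalent forms of that constraint when simplifying the range-range block.
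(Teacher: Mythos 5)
Your proposal is correct and follows essentially the same route as the paper: the paper likewise writes $\Vbf = (\Pbf_\Abf+\Pbf_\Abf^\perp)\Vbf(\Pbf_\Bbf+\Pbf_\Bbf^\perp)$, uses $\Vbf\tran\Abf=\Bbf$ to annihilate the two cross blocks, and simplifies the range-range block to $\Abf(\Abf\tran\Abf)^{-1}\Bbf\tran$ via $\Abf\tran\Abf=\Bbf\tran\Bbf$. The only cosmetic difference is that you phrase the two-sided decomposition through resolutions of the identity while the paper works with the projectors $\Pbf_\Abf=\Ubf_\Abf\Ubf_\Abf\tran$ and their complements.
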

\begin{proof}
\textb{Let $\Pbf_\Abf:=\Ubf_\Abf\Ubf_\Abf\tran$ and $\Pbf_\Abf^\perp:=\Ubf_{\Abf^\perp}\Ubf_{\Abf^\perp}\tran$ are
the orthogonal projections onto $\mathrm{Range}(\Abf)$ and $\mathrm{Range}(\Abf)^\perp$ respectively.
Define $\Pbf_\Bbf$ and $\Pbf_\Bbf^\perp$ similarly.  Since, $\Abf=\Vbf\Bbf$, we have $\Vbf\tran\Abf = \Bbf$ and therefore,
\beq \label{eq:PAVPB}
    \Pbf_\Abf^\perp\Vbf\Pbf_{\Bbf} = \mathbf{0}, \quad
    \Pbf_\Abf\Vbf\Pbf_{\Bbf}^\perp = \mathbf{0}.
\eeq
Therefore,
\begin{align}
    \Vbf &= (\Pbf_\Abf+\Pbf_\Abf^\perp)\Vbf(\Pbf_\Bbf+\Pbf_\Bbf^\perp) \nonumber \\
    &= (\Pbf_\Abf\Vbf\Pbf_\Bbf + \Pbf_\Abf^\perp\Vbf\Pbf_\Bbf^\perp). \label{eq:PVP1}
\end{align}
Now,
\begin{align}
    \MoveEqLeft \Pbf_\Abf\Vbf\Pbf_\Bbf = \Pbf_\Abf\Vbf\Bbf(\Bbf\Bbf\tran)^{-1}\Bbf\tran \nonumber \\
    &= \Pbf_{\Abf}\Abf(\Bbf\Bbf\tran)^{-1}\Bbf\tran \nonumber \\
    &= \Abf(\Bbf\Bbf\tran)^{-1}\Bbf\tran = \Abf(\Abf\tran\Abf)^{-1}\Bbf\tran, \label{eq:PVP2}
\end{align}
where, in the last step we used the fact that
\[
    \Abf\tran\Abf=\Bbf\tran\Vbf\tran\Vbf\Bbf = \Bbf\tran\Bbf.
\]
Also,  using the definition of $\tilde{\Vbf}$ in \eqref{eq:Vtdef},
\beq \label{eq:PVP3}
    \Pbf_{\Abf^\perp}\Vbf\Pbf_{\Bbf^\perp}\tran  =\Ubf_{\Abf^\perp}\tilde{\Vbf}\Ubf_{\Bbf^\perp}\tran.
\eeq
Substituting \eqref{eq:PVP2} and \eqref{eq:PVP3} into \eqref{eq:PVP1} obtains \eqref{eq:VVtrep}.
To prove that $\tilde{\Vbf}$ is orthogonal,
\begin{align*}
    \MoveEqLeft \tilde{\Vbf}\tran\tilde{\Vbf} \stackrel{(a)}{=}
    \Ubf_{\Bbf^\perp}\tran\Vbf\Pbf_\Abf\Vbf\Ubf_{\Bbf^\perp}
    \\
    &\stackrel{(b)}{=}
    \Ubf_{\Bbf^\perp}\tran\Vbf\tran\Vbf\Ubf_{\Bbf^\perp} \stackrel{(c)}{=} \Ibf,
\end{align*}
where (a) uses \eqref{eq:Vtdef}; (b) follows from \eqref{eq:PAVPB} and
(c) follows from the fact that $\Vbf$ and $\Ubf_{\Bbf^\perp}$ have orthonormal
columns.
}
\end{proof}

\begin{lemma} \label{lem:orthogLin}  Let $\Vbf \in \R^{N \x N}$ be a random matrix
that is Haar distributed.  Suppose that $\Abf$ and $\Bbf$ are deterministic and $G$ is
the event that $\Vbf$ satisfies linear constraints \eqref{eq:AVB}.
Then, the conditional distribution given $G$,
$\tilde{\Vbf}$ is Haar distributed matrix independent of $G$.  Thus,
\[
    \left. \Vbf \right|_{G} \eqd
    \Abf(\Abf\tran\Abf)^{-1}\Bbf\tran + \Ubf_{\Abf^\perp}\tilde{\Vbf}\Ubf_{\Bbf^\perp}\tran,
\]
where $\tilde{\Vbf}$ is Haar distributed and independent of $G$.
\end{lemma}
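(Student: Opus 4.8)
The plan is to identify the conditional law of $\tilde{\Vbf}$ given $G$ by exploiting the invariance of the Haar measure under left translation, using the explicit representation \eqref{eq:VVtrep} of Lemma~\ref{lem:orthogRep}. That representation already writes $\Vbf$, on the event $G$, as the fixed term $\Abf(\Abf\tran\Abf)^{-1}\Bbf\tran$ plus $\Ubf_{\Abf^\perp}\tilde{\Vbf}\Ubf_{\Bbf^\perp}\tran$, in which the matrices $\Ubf_{\Abf^\perp},\Ubf_{\Bbf^\perp}$ are deterministic; hence it suffices to show that, conditioned on $G$, the orthogonal matrix $\tilde{\Vbf}$ is Haar distributed on $O(N-s)$.

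First I would restate the constraint in an orthonormal frame. Since $\Abf=\Vbf\Bbf$ forces $\Abf\tran\Abf=\Bbf\tran\Bbf$ (as already used in the proof of Lemma~\ref{lem:orthogRep}), we have $\Ubf_\Abf=\Abf(\Bbf\tran\Bbf)^{-1/2}$, and right-multiplying $\Vbf\Bbf=\Abf$ by $(\Bbf\tran\Bbf)^{-1/2}$ shows that $G$ is exactly the event $\Vbf\Ubf_\Bbf=\Ubf_\Abf$, i.e.\ $\Vbf$ carries the orthonormal frame $\Ubf_\Bbf$ onto $\Ubf_\Abf$. Next, for an arbitrary $\Wbf\in O(N-s)$ I would introduce the orthogonal matrix
\[
    \Qbf := \Ubf_\Abf\Ubf_\Abf\tran + \Ubf_{\Abf^\perp}\Wbf\Ubf_{\Abf^\perp}\tran,
\]
which acts as the identity on $\Range(\Abf)$ and as $\Wbf$ (in the basis $\Ubf_{\Abf^\perp}$) on $\Range(\Abf)^\perp$. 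Using $\Ubf_{\Abf^\perp}\tran\Ubf_\Abf=\zero$ one checks that $\Qbf\Ubf_\Abf=\Ubf_\Abf$, whence $\Qbf\Vbf\Ubf_\Bbf=\Ubf_\Abf$ for every $\Vbf\in G$; the same computation with $\Wbf\tran$ gives the converse, so $\Qbf G=G$. A second short computation, using $\Ubf_{\Abf^\perp}\tran\Qbf=\Wbf\Ubf_{\Abf^\perp}\tran$, shows that the map $\Vbf\mapsto\Qbf\Vbf$ sends $\tilde{\Vbf}$ to $\Wbf\tilde{\Vbf}$. I would then conclude as follows: by left-invariance of Haar measure $\Qbf\Vbf\eqd\Vbf$, and since $\Qbf$ preserves $G$, the conditional law of $\Vbf$ given $G$ is invariant under $\Vbf\mapsto\Qbf\Vbf$; by the previous step the conditional law of $\tilde{\Vbf}$ is therefore invariant under $\tilde{\Vbf}\mapsto\Wbf\tilde{\Vbf}$ for every $\Wbf\in O(N-s)$. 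A probability measure on $O(N-s)$ that is invariant under all left translations is the Haar measure, so $\tilde{\Vbf}\mid G$ is Haar; substituting into \eqref{eq:VVtrep} gives the stated distributional identity.

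The hard part will be justifying the conditioning itself, since $G$ is a measure-zero set: it is a single fiber of the map $\Phi:\Vbf\mapsto\Vbf\Ubf_\Bbf$ from $O(N)$ onto the Stiefel manifold of orthonormal $s$-frames, so ``given $G$'' must be read through a regular conditional distribution rather than naive conditioning, and the invariance manipulations above must be phrased at the level of this disintegration. The clean way to make everything rigorous is to note that the fibers of $\Phi$ are the left cosets of the stabilizer subgroup $H:=\{\Vbf\in O(N):\Vbf\Ubf_\Bbf=\Ubf_\Bbf\}\cong O(N-s)$, that the disintegration of Haar measure along these cosets assigns to each fiber the corresponding translate of the Haar measure of $H$, and that under the identification $H\cong O(N-s)$ the coordinate $\tilde{\Vbf}$ is precisely the free $O(N-s)$-factor (taking $\Vbf_0=\Ubf_\Abf\Ubf_\Bbf\tran+\Ubf_{\Abf^\perp}\Ubf_{\Bbf^\perp}\tran$ as a base point, a direct computation gives $\tilde{\Vbf}=\Sbf$ where $\Vbf=\Vbf_0(\Ubf_\Bbf\Ubf_\Bbf\tran+\Ubf_{\Bbf^\perp}\Sbf\Ubf_{\Bbf^\perp}\tran)$). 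Once this fiber/coset structure is in place, the remaining algebra is routine.
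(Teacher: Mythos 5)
Your proof is correct, and its algebraic engine is the same as the paper's: your matrix $\Qbf=\Ubf_\Abf\Ubf_\Abf\tran+\Ubf_{\Abf^\perp}\Wbf\Ubf_{\Abf^\perp}\tran$ is exactly the lifted rotation that the paper calls $\Wbf$ (built from $\Wbf_0\in O_{N-s}$), and your computation that $\Vbf\mapsto\Qbf\Vbf$ sends $\tilde{\Vbf}$ to $\Wbf\tilde{\Vbf}$ is the paper's intertwining identity $\phi(\Wbf_0\tilde{\Vbf})=\Wbf\phi(\tilde{\Vbf})$; both arguments then conclude from the fact that a probability measure on $O_{N-s}$ invariant under all left translations must be Haar. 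The genuine difference is in how the conditioning on $G$ is treated. The paper writes a conditional density $p_{\Vbf|G}(\Vbf|G)=\frac{1}{Z}p_{\Vbf}(\Vbf)\indic{\Vbf\in{\mathcal L}}$ and transports it through the affine bijection $\phi$ of Lemma~\ref{lem:orthogRep}; this is a formal-density manipulation, since $G=\{\Vbf:\Abf=\Vbf\Bbf\}$ is a Haar-null subset of $O_N$ (naively $Z=0$, and the restricted ``density'' must really be read as a density with respect to the surface measure on the constraint set), a point the paper passes over silently. You identify this as the hard part and repair it: rewriting $G$ as the fiber $\{\Vbf:\Vbf\Ubf_\Bbf=\Ubf_\Abf\}$ of the map onto the Stiefel manifold of $s$-frames (using $\Abf\tran\Abf=\Bbf\tran\Bbf$, which also shows when $G$ is nonempty), recognizing the fibers as left cosets of the stabilizer $H\cong O_{N-s}$, and invoking the disintegration of Haar measure into translates of Haar measure on $H$ gives a bona fide regular conditional distribution, under which your base-point computation $\tilde{\Vbf}=\Sbf$ exhibits $\tilde{\Vbf}$ as the free $O_{N-s}$-coordinate. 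So your route buys rigor at the null event, at the price of the coset/disintegration machinery, while the paper's density argument is shorter and captures the same symmetry but would need precisely your final paragraph to be fully airtight.
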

\begin{proof}   Let $O_N$  be the set
of $N \x N$ orthogonal matrices and let ${\mathcal L}$ be the set of matrices $\Vbf \in O_N$
that satisfy the linear constraints \eqref{eq:AVB}.  If $p_{\Vbf}(\Vbf)$ is the uniform density
on $O_N$ (i.e.\ the Haar measure), the conditional density on $\Vbf$ given the event $G$,
\[
    p_{\Vbf|G}(\Vbf|G) = \frac{1}{Z}p_{\Vbf}(\Vbf)\indic{\Vbf \in {\mathcal L}},
\]
where $Z$ is the normalization constant.  Now let $\phi:\tilde{\Vbf} \mapsto \Vbf$
be the mapping described by \eqref{eq:VVtrep} which maps $O_{N-s}$ to ${\mathcal L}$.
This mapping is invertible.  Since $\phi$ is affine, the conditional density on
$\tilde{\Vbf}$ is given by
\begin{align}
    \MoveEqLeft p_{\tilde{\Vbf}|G}(\tilde{\Vbf}|G) \propto
    p_{\Vbf|G}(\phi(\tilde{\Vbf})|G) \nonumber \\
    &\propto p_{\Vbf}(\phi(\tilde{\Vbf}))\indic{\phi(\tilde{\Vbf} \in {\mathcal L})}
    = p_{\Vbf}(\phi(\tilde{\Vbf})), \label{eq:ptvg}
\end{align}
where in the last step we used the fact that, for any matrix $\tilde{\Vbf}$,
$\phi(\tilde{\Vbf}) \in {\mathcal L}$ (i.e.\ satisfies the linear constraints \eqref{eq:AVB}).
Now to show that $\tilde{\Vbf}$ is conditionally Haar distributed, we need to show that for any
orthogonal matrix $\Wbf_0 \in O_{N-s}$,
\beq \label{eq:pvtorthog}
    p_{\tilde{\Vbf}|G}(\Wbf_0\tilde{\Vbf}|G)=p_{\tilde{\Vbf}|G}(\tilde{\Vbf}|G).
\eeq
To prove this, given $\Wbf_0 \in O_{N-s}$, define the matrix,
\[
    \Wbf = \Ubf_{\Abf}\Ubf_{\Abf}\tran + \Ubf_{\Abf^\perp}\Wbf_0\Ubf_{\Abf^\perp}\tran.
\]
One can verify that $\Wbf \in O_N$ (i.e.\ it is orthogonal) and
\beq \label{eq:pwvt}
    \phi(\Wbf_0\tilde{\Vbf}) =\Wbf\phi(\tilde{\Vbf}).
\eeq
Hence,
\begin{align}
    \MoveEqLeft p_{\tilde{\Vbf}|G}(\Wbf_0\tilde{\Vbf}|G)
    \stackrel{(a)}{\propto} p_{\Vbf}(\phi(\Wbf_0\tilde{\Vbf})) \nonumber \\
    &\stackrel{(b)}{\propto} p_{\Vbf}(\Wbf\phi(\tilde{\Vbf}))
    \stackrel{(c)}{\propto} p_{\Vbf}(\phi(\tilde{\Vbf})), \nonumber
\end{align}
where (a) follows from \eqref{eq:ptvg}; (b) follows from \eqref{eq:pwvt}; and
(c) follows from the orthogonal invariance of $\Vbf$.  Hence, the conditional density
of $\tilde{\Vbf}$ is invariant under orthogonal transforms and is thus Haar distributed.
\end{proof}

We will use Lemma~\ref{lem:orthogLin} in conjunction with the following
simple result.

\begin{lemma} \label{lem:orthogGaussLim} Fix a dimension $s \geq 0$,
and suppose that $\xbf(N)$ and $\Ubf(N)$ are sequences such that
for each $N$,
\begin{enumerate}[(i)]
\item $\Ubf=\Ubf(N) \in \R^{N\x (N-s)}$ is a deterministic
matrix with $\Ubf\tran\Ubf = \Ibf$;
\item $\xbf=\xbf(N) \in \R^{N-s}$ a random vector that
is isotropically distributed  in that $\Vbf\xbf\eqd\xbf$ for any orthogonal $(N-s)\x(N-s)$ matrix
$\Vbf$.
\item The \textb{normalized squared Euclidean norm} converges almost surely as
\[
    \lim_{N \arr \infty} \frac{1}{N} \|\xbf\|^2 = \tau,
\]
for some $\tau > 0$.
\end{enumerate}
Then, if we define $\ybf = \Ubf\xbf$, we have that the components of $\ybf$
converge empirically to a Gaussian random variable
\beq \label{eq:ygausslim}
    \lim_{N \arr \infty} \{ y_n \} \PLeq Y \sim \Norm(0,\tau).
\eeq
\end{lemma}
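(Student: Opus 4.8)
The plan is to reduce the isotropically distributed $\xbf$ to a genuine standard Gaussian vector, exploiting that the defect dimension $s$ stays fixed as $N\arr\infty$. Because the law of $\xbf$ is invariant under all orthogonal transformations of $\R^{N-s}$ (assumption~(ii)), its direction $\xbf/\|\xbf\|$ is uniform on the unit sphere and independent of the radius $\|\xbf\|$; representing a uniform direction as $\gbf/\|\gbf\|$ with $\gbf\sim\Norm(\zero,\Ibf_{N-s})$ gives the distributional identity $\xbf\eqd\|\xbf\|\,\gbf/\|\gbf\|$ (valid on a suitably enlarged probability space), and hence $\ybf=\Ubf\xbf\eqd c_N\zbf$ with $\zbf:=\Ubf\gbf$ and $c_N:=\|\xbf\|/\|\gbf\|$. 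The scalar $c_N$ concentrates: assumption~(iii) gives $\|\xbf\|^2/N\arr\tau$ almost surely, while the strong law gives $\|\gbf\|^2/N\arr1$ (since $(N-s)/N\arr1$), so that $c_N\arr\sqrt{\tau}$ almost surely.

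The vector $\zbf=\Ubf\gbf$ is not standard Gaussian---its covariance $\Ubf\Ubf\tran$ is a rank-$(N-s)$ projection---so the next step replaces it by an honest $\Norm(\zero,\Ibf_N)$ vector. Choosing $\Ubf_\perp\in\R^{N\x s}$ so that $[\,\Ubf\ \Ubf_\perp\,]$ is orthogonal and drawing $\hbf\sim\Norm(\zero,\Ibf_s)$ independent of $\gbf$, the vector $\tilde{\gbf}:=\Ubf\gbf+\Ubf_\perp\hbf$ equals $[\,\Ubf\ \Ubf_\perp\,]$ applied to the standard Gaussian $[\gbf\tran\ \hbf\tran]\tran$, hence $\tilde{\gbf}\sim\Norm(\zero,\Ibf_N)$ and its components $\tilde{g}_n$ are i.i.d.\ $\Norm(0,1)$. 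The residual satisfies $\zbf-\tilde{\gbf}=-\Ubf_\perp\hbf$, whose squared norm is $\|\hbf\|^2\sim\chi^2_s$ with $s$ fixed; therefore $\tfrac1N\|\zbf-\tilde{\gbf}\|^2=\|\hbf\|^2/N\arr0$ almost surely. This is the crux of the argument: because $s$ does not grow with $N$, the gap between the projected Gaussian $\zbf$ and a full standard Gaussian is negligible after normalizing by $N$.

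It then remains to transfer these two limits through a test function. Fix a pseudo-Lipschitz $f$ of order two. Combining the pseudo-Lipschitz bound with the triangle inequality $|c_N z_n-\sqrt{\tau}\,\tilde{g}_n|\le|c_N-\sqrt{\tau}|\,|z_n|+\sqrt{\tau}\,|z_n-\tilde{g}_n|$ and the Cauchy--Schwarz inequality, I would bound
\[
\Big| \tfrac{1}{N}\sum_{n=1}^N f(c_N z_n) - \tfrac{1}{N}\sum_{n=1}^N f\big(\sqrt{\tau}\,\tilde{g}_n\big) \Big|
\]
by a sum of terms, each carrying a vanishing prefactor---a power of $|c_N-\sqrt{\tau}|$ or of $\tfrac1N\|\zbf-\tilde{\gbf}\|^2$---multiplied by bounded empirical second moments ($\tfrac1N\sum_n z_n^2$ and $\tfrac1N\sum_n\tilde{g}_n^2$, both convergent); hence this difference tends to $0$ almost surely. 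Finally, since $\{\tilde{g}_n\}$ are i.i.d.\ $\Norm(0,1)$ and $g\mapsto f(\sqrt{\tau}\,g)$ has finite mean under $\Norm(0,1)$ (as $f$ is pseudo-Lipschitz of order two), the strong law gives $\tfrac1N\sum_n f(\sqrt{\tau}\,\tilde{g}_n)\arr\Exp[f(\sqrt{\tau}\,G)]=\Exp[f(Y)]$ with $G\sim\Norm(0,1)$ and $Y\sim\Norm(0,\tau)$. Chaining the two estimates establishes \eqref{eq:ygausslim}. As a consistency check, taking $f(y)=y^2$ reproduces $\tfrac1N\|\ybf\|^2=\tfrac1N\|\xbf\|^2\arr\tau$ directly, since $\Ubf\tran\Ubf=\Ibf$.

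The main obstacle I anticipate is precisely the covariance mismatch above: $\zbf=\Ubf\gbf$ fails to be standard Gaussian, and controlling this discrepancy is what forces the finiteness of $s$ into the argument. A secondary technical point---more bookkeeping than obstacle---is carrying the two simultaneous limits $c_N\arr\sqrt{\tau}$ and $\zbf\approx\tilde{\gbf}$ through the pseudo-Lipschitz test function, which the Cauchy--Schwarz estimate of the third step handles cleanly.
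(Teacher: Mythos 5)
Your proposal follows essentially the same route as the paper's proof: you represent the isotropic vector as $\|\xbf\|\,\gbf/\|\gbf\|$ with $\gbf\sim\Norm(\zero,\Ibf_{N-s})$, complete $\Ubf$ to an orthogonal matrix $[\Ubf~\Ubf_\perp]$ so that $\Ubf\gbf$ equals a full standard Gaussian vector minus the correction $\Ubf_\perp\hbf$, and use the fixed dimension $s$ to show $\frac{1}{N}\|\Ubf_\perp\hbf\|^2\arr 0$ almost surely---exactly the paper's decomposition $\ybf \eqd \frac{\|\xbf\|}{\|\wbf_0\|}[\Sbf\wbf-\Ubf_\perp\wbf_1]$. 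Your explicit pseudo-Lipschitz transfer estimate at the end simply spells out what the paper compresses into its final ``substituting these properties'' step, so the argument is correct and matches the paper's.
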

\begin{proof}  Since $\xbf$ is isotropically distributed, it can be generated as a normalized
Gaussian, i.e.\
\[
    \xbf \eqd \frac{\|\xbf\|}{\|\wbf_0\|}\wbf_0, \quad \wbf_0 \sim \Norm(\mathbf{0},\Ibf_{N-s}).
\]
For each $N$, let $\Ubf_\perp$ be an $N \x s$ matrix such that $\Sbf := [\Ubf ~ \Ubf_{\perp}]$ is
orthogonal.  That is, the $s$ columns of $\Ubf_\perp$ are an orthonormal basis
of the orthogonal complement of the $\Range(\Ubf)$.  If we let $\wbf_1 \sim \Norm(0,\Ibf_s)$,
then if we define
\[
    \wbf = \left[ \begin{array}{c} \wbf_0 \\ \wbf_1 \end{array} \right],
\]
so that $\wbf \sim \Norm(0,\Ibf_N)$.
With this definition, we can write $\ybf$ as
\beq \label{eq:yux}
  \ybf = \Ubf\xbf \eqd \frac{\|\xbf\|}{\|\wbf_0\|}\left[ \Sbf\wbf - \Ubf_{\perp}\wbf_1 \right].
\eeq
Now,
\[
    \lim_{N \arr\infty} \frac{\|\xbf\|}{\|\wbf_0\|} = \sqrt{\tau},
\]
almost surely.  Also, since $\wbf \sim \Norm(0,\Ibf)$ and $\Sbf$ is orthogonal,
$\Sbf\wbf \sim \Norm(0,\Ibf)$.  Finally, since $\wbf_1$ is $s$-dimensional,
\[
    \lim_{N \arr \infty} \frac{1}{N} \|\Ubf_{\perp}\wbf_1\|^2 =
    \lim_{N \arr \infty} \frac{1}{N} \|\wbf_1\|^2 = 0,
\]
almost surely.  Substituting these properties into \eqref{eq:yux},
we obtain \eqref{eq:ygausslim}.
\end{proof}

\section{A General Convergence Result}

To analyze the VAMP method, we a consider the
following more general recursion.  For each dimension $N$, we are given
an orthogonal matrix $\Vbf \in \R^{N \x N}$,
and an initial vector $\ubf_0 \in \R^N$.  
\textb{Also, we are given disturbance vectors 
\[
    \wbf^p=(w_1^p,\ldots,w_n^p), \quad
    \wbf^q=(w_1^q,\ldots,w_n^q),
\]
where the components $w_n^p \in \R^{n_p}$ and $w_n^q \in \R^{n_q}$ 
for some finite dimensions $n_p$ and $n_q$ that do not grow with $N$.
}
Then, we generate a sequence of iterates by the following recursion:%
\begin{subequations}
\label{eq:algoGen}
\begin{align}
    \pbf_k &= \Vbf\ubf_k \label{eq:pupgen} \\
    \alpha_{1k} &= \bkt{ \fbf_p'(\pbf_k,\wbf^p,\gamma_{1k})},
    \quad \gamma_{2k} = \Gamma_1(\gamma_{1k},\alpha_{1k})
        \label{eq:alpha1gen}  \\
    \vbf_k &= C_1(\alpha_{1k})\left[
        \fbf_p(\pbf_k,\wbf^p,\gamma_{1k})- \alpha_{1k} \pbf_{k} \right] \label{eq:vupgen} \\
    \qbf_k &= \Vbf\tran\vbf_k \label{eq:qupgen} \\
    \alpha_{2k} &= \bkt{ \fbf_q'(\qbf_k,\wbf^q,\gamma_{2k})},
    \quad \gamma_{1,\kp1} =\Gamma_2(\gamma_{2k},\alpha_{2k})
        \label{eq:alpha2gen} \\
    \ubf_{\kp1} &= C_2(\alpha_{2k})\left[
        \fbf_q(\qbf_k,\wbf^q,\gamma_{2k}) - \alpha_{2k}\qbf_{k} \right], \label{eq:uupgen}
\end{align}
\end{subequations}
which is initialized with some vector $\ubf_0$ and scalar $\gamma_{10}$.
Here, $\fbf_p(\cdot)$ and $\fbf_q(\cdot)$ are separable functions, meaning
\begin{align}\label{eq:fpqcomp}
\begin{split}
    \left[ \fbf_p(\pbf,\wbf^p,\gamma_1)\right]_n = f_p(p_n,w^p_n,\gamma_1)~\forall n, \\
    \left[ \fbf_q(\qbf,\wbf^q,\gamma_2)\right]_n = f_q(q_n,w^q_n,\gamma_2)~\forall n,
\end{split}
\end{align}
for scalar-valued functions $f_p(\cdot)$ and $f_q(\cdot)$.
The functions $\Gamma_i(\cdot)$ and $C_i(\cdot)$ are also scalar-valued.
\textb{In the recursion \eqref{eq:algoGen}, the variables $\gamma_{1k}$
and $\gamma_{2k}$ 
represent some parameter of the update functions $\fbf_p(\cdot)$ and $\fbf_q(\cdot)$,
and the functions $\Gamma_i(\cdot)$ represent how these parameters are updated.}

Similar to our analysis of the VAMP, we consider the following
large-system limit (LSL) analysis.
We consider a sequence of runs of the recursions indexed by $N$.
We model the initial condition $\ubf_0$ and disturbance vectors $\wbf^p$ and $\wbf^q$
as deterministic sequences that scale with $N$ and assume that their components
converge empirically as
\beq \label{eq:U0lim}
    \lim_{N \arr \infty} \{ u_{0n} \} \PLeq U_0,
\eeq
and
\beq \label{eq:WpqLim}
    \lim_{N \arr \infty} \{ w^p_n \} \PLeq W^p, \quad
    \lim_{N \arr \infty} \{ w^q_n \} \PLeq W^q,
\eeq
to random variables $U_0$, $W^p$ and $W^q$.  \textb{The
vectors $W_p$ and $W_q$ are random vectors in $\R^{n_p}$ and $\R^{n_q}$,
respectively.} We assume that the
initial constant converges as
\beq \label{eq:gam10limgen}
    \lim_{N \arr \infty} \gamma_{10} = \gammabar_{10},
\eeq
for some $\gammabar_{10}$.
  The matrix $\Vbf \in \R^{N \x N}$
is assumed to be uniformly distributed on the set of orthogonal matrices
independent of $\rbf_0$, $\wbf^p$ and $\wbf^q$.
Since $\rbf_0$, $\wbf^p$ and
$\wbf^q$ are deterministic, the only randomness is in the matrix $\Vbf$.

Under the above assumptions, define the SE equations
\begin{subequations} \label{eq:segen}
\begin{align}
    \alphabar_{1k} &= \Exp\left[ f_p'(P_k,W^p,\gammabar_{1k})\right],
        \label{eq:a1segen} \\
    \tau_{2k} &= C_1^2(\alphabar_{1k}) \left\{
        \Exp\left[ f_p^2(P_k,W^p,\gammabar_{1k})\right]  - \alphabar_{1k}^2\tau_{1k} \right\}
        \label{eq:tau2segen} \\
    \gammabar_{2k} &= \Gamma_1(\gammabar_{1k},\alphabar_{1k}) \label{eq:gam2segen} \\
    \alphabar_{2k} &= \Exp\left[ f_q'(Q_k,W^q,\gammabar_{2k} \right],
        \label{eq:a2segen} \\
    \tau_{1,\kp1} &= C_2^2(\alphabar_{2k})\left\{
        \Exp\left[ f_q^2(Q_k,W^q,\gammabar_{2k})\right] - \alphabar_{2k}^2\tau_{2k}\right\}
        \label{eq:tau1segen} \\
    \gamma_{1,\kp1} &= \Gamma_2(\gammabar_{2k},\alphabar_{2k}), \label{eq:gam1segen}
\end{align}
\end{subequations}
which are initialized with $\gammabar_{10}$ in \eqref{eq:gam10limgen} and
\beq \label{eq:tau10gen}
    \tau_{10} = \Exp[U_0^2],
\eeq
where $U_0$ is the random variable in \eqref{eq:U0lim}.
In the SE equations~\eqref{eq:segen},
the expectations are taken with respect to random variables
\[
    P_k \sim \Norm(0,\tau_{1k}), \quad Q_k \sim \Norm(0,\tau_{2k}),
\]
where $P_k$ is independent of $W^p$ and $Q_k$ is independent of $W^q$.

\begin{theorem} \label{thm:genConv}  Consider the recursions \eqref{eq:algoGen}
and SE equations \eqref{eq:segen} under the above assumptions.  Assume additionally that,
for all $k$:
\begin{enumerate}[(i)]
\item For $i=1,2$, the functions
\[
    C_i(\alpha_i), \quad \Gamma_i(\gamma_i,\alpha_i),
\]
are continuous at the points $(\gamma_i,\alpha_i)=(\gammabar_{ik},\alphabar_{ik})$
from the SE equations; and
\item The function $f_p(p,w^p,\gamma_1)$ and its derivative $f_p'(p,w^p,\gamma_1)$
are uniformly Lipschitz continuous in $(p,w^p)$ at $\gamma_1=\gammabar_{1k}$.
\item The function $f_q(q,w^q,\gamma_2)$ and its derivative $f_q'(q,w^q,\gamma_2)$
are uniformly Lipschitz continuous in $(q,w^q)$ at $\gamma_2=\gammabar_{2k}$.
\end{enumerate}
Then,
\begin{enumerate}[(a)]
\item For any fixed $k$, almost surely the components of $(\wbf^p,\pbf_0,\ldots,\pbf_k)$
empirically converge as
\beq \label{eq:Pconk}
    \lim_{N \arr \infty} \left\{ (w^p_n,p_{0n},\ldots,p_{kn}) \right\}
    \PLeq (W^p,P_0,\ldots,P_k),
\eeq
where $W^p$ is the random variable in the limit \eqref{eq:WpqLim} and
$(P_0,\ldots,P_k)$ is a zero mean Gaussian random vector independent of $W^p$,
with $\Exp[P_k^2] = \tau_{1k}$.  In addition, we have that
\beq \label{eq:ag1limgen}
    \lim_{N \arr \infty} (\alpha_{1k},\gamma_{1k}) = (\alphabar_{1k},\gammabar_{1k}),
\eeq
almost surely.

\item For any fixed $k$, almost surely the components of $(\wbf^q,\qbf_0,\ldots,\qbf_k)$
empirically converge as
\beq \label{eq:Qconk}
    \lim_{N \arr \infty} \left\{ (w^q_n,q_{0n},\ldots,q_{kn}) \right\}
    \PLeq (W^q,Q_0,\ldots,Q_k),
\eeq
where $W^q$ is the random variable in the limit \eqref{eq:WpqLim} and
$(Q_0,\ldots,Q_k)$ is a zero mean Gaussian random vector independent of $W^q$,
with $\Exp[P_k^2] = \tau_{2k}$.  In addition, we have that
\beq \label{eq:ag2limgen}
    \lim_{N \arr \infty} (\alpha_{2k},\gamma_{2k}) = (\alphabar_{2k},\gammabar_{2k}),
\eeq
almost surely.
\end{enumerate}
\end{theorem}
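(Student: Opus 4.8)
The plan is to adapt the conditioning technique of Bayati and Montanari \cite{BayatiM:11} to the Haar (right-orthogonally invariant) setting, replacing their Gaussian conditioning lemmas with Lemmas~\ref{lem:orthogLin} and \ref{lem:orthogGaussLim}. I would prove parts (a) and (b) jointly by induction on $k$, treating the two half-iterations of \eqref{eq:algoGen}---the $\fbf_p$/$\pbf$ update and the $\fbf_q$/$\qbf$ update---as the two alternating stages of a single induction. The inductive hypothesis at a given stage is the joint empirical $PL(2)$ convergence of $(\wbf^p,\pbf_0,\ldots,\pbf_k)$ to $(W^p,P_0,\ldots,P_k)$ with the $P_j$ jointly zero-mean Gaussian (independent of $W^p$) of the covariance dictated by the SE, together with the analogous statement for the $\qbf$'s and the scalar convergence \eqref{eq:ag1limgen}--\eqref{eq:ag2limgen}.

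For the inductive step analyzing $\pbf_{\kp1}=\Vbf\ubf_{\kp1}$, the key observation is that all previously generated iterates impose \emph{linear} constraints on $\Vbf$: from \eqref{eq:pupgen} we have $\Vbf\ubf_j=\pbf_j$ and from \eqref{eq:qupgen} we have $\Vbf\qbf_j=\vbf_j$ for $j\le k$. Collecting these into $\Bbf=[\ubf_0\cdots\ubf_k\ \qbf_0\cdots\qbf_k]$ and $\Abf=[\pbf_0\cdots\pbf_k\ \vbf_0\cdots\vbf_k]$ puts them in exactly the form $\Abf=\Vbf\Bbf$ of Lemma~\ref{lem:orthogLin}. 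Conditioned on the event $G$ that these constraints hold, that lemma gives $\Vbf|_G\eqd\Abf(\Abf\tran\Abf)^{-1}\Bbf\tran+\Ubf_{\Abf^\perp}\tilde{\Vbf}\Ubf_{\Bbf^\perp}\tran$ with $\tilde{\Vbf}$ a fresh Haar matrix independent of the past. I would then decompose $\ubf_{\kp1}$ into its projection onto $\Range(\Bbf)$ and the orthogonal complement.

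The in-span part is mapped \emph{deterministically} into $\Range(\Abf)=\Range([\pbf_j,\vbf_j])$, producing a linear combination of previously-seen vectors; the Onsager subtractions $-\alpha_{1k}\pbf_k$ and $-\alpha_{2k}\qbf_k$, together with the scalings $C_1,C_2$, are precisely what cancel the leading memory contribution, leaving only the cross-covariances predicted by the SE. The orthogonal part is hit by the fresh \emph{isotropic} $\tilde{\Vbf}$; since the number of constraints $s$ is fixed while $N\to\infty$, Lemma~\ref{lem:orthogGaussLim} shows that its image converges empirically to an independent Gaussian $\Norm(0,\tau_{1,\kp1})$, where $\tau_{1,\kp1}$ is the normalized squared norm of the orthogonal component of $\ubf_{\kp1}$. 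Evaluating that norm via the inductive hypothesis, the $PL(2)$ limits, and the continuity/Lipschitz assumptions reproduces exactly the SE formula \eqref{eq:tau1segen} (and symmetrically \eqref{eq:tau2segen} on the $\qbf$ side). The scalar limits $\alpha_{ik}\to\alphabar_{ik}$ follow by applying $PL(2)$ convergence to the empirical averages $\bkt{\fbf_p'}$ and $\bkt{\fbf_q'}$ under hypotheses (ii)--(iii), and $\gamma_{ik}\to\gammabar_{ik}$ then follows by passing through the continuous maps $\Gamma_i,C_i$ under hypothesis (i).

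The hard part will be the careful bookkeeping of the correlated memory terms across all iterations: one must show that the deterministic contribution $\Abf(\Abf\tran\Abf)^{-1}\Bbf\tran\ubf_{\kp1}$, after Onsager correction, contributes precisely the SE cross-covariances $\Exp[P_jP_{\kp1}]$, and one must control the conditioning on the (growing-index but $N$-independent-dimension) Gram matrix $\Abf\tran\Abf$ so that its inverse remains well-behaved in the limit---this requires verifying that the past iterate families stay asymptotically linearly independent, the analogue of the nondegeneracy conditions in \cite{BayatiM:11}. Once the general recursion is established, Theorem~\ref{thm:se} follows by specializing $\fbf_p,\fbf_q,C_i,\Gamma_i$ to the VAMP denoiser/LMMSE updates and matching \eqref{eq:segen} to \eqref{eq:se}.
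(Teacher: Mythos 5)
Your proposal follows essentially the same route as the paper's proof in Appendix~\ref{sec:genConvPf}: a joint induction over the two half-iterations in which the past iterates are encoded as linear constraints $\Abf_k=\Vbf\Bbf_k$, Lemma~\ref{lem:orthogLin} supplies the conditional law of $\Vbf$ with a fresh Haar block, each update splits into a deterministic part (a linear combination of past iterates) plus a random part made Gaussian by Lemma~\ref{lem:orthogGaussLim}, and the scalar limits follow from $PL(2)$ convergence and the continuity hypotheses. Even the cancellation you attribute to the Onsager subtraction is exactly the paper's Stein's-lemma computation showing $\Exp[V_iP_j]=0$, which renders the Gram matrix $\Abf_k\tran\Abf_k$ asymptotically block-diagonal, and your flagged ``hard part'' (asymptotic invertibility of that Gram matrix) is likewise implicitly assumed rather than separately verified in the paper.
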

\begin{proof}  We will prove this in the next Appendix,
Appendix~\ref{sec:genConvPf}.
\end{proof}

\section{Proof of Theorem~\ref{thm:genConv}} \label{sec:genConvPf}

\subsection{Induction Argument}
We use an induction argument.   Given iterations $k, \ell \geq 0$,
define the hypothesis, $H_{k,\ell}$ as the statement:
\begin{itemize}
\item Part (a) of Theorem~\ref{thm:genConv} is true up to $k$; and
\item Part (b) of Theorem~\ref{thm:genConv} is true up to $\ell$.
\end{itemize}
The induction argument will then follow by showing the following three facts:
\begin{itemize}
\item $H_{0,-1}$ is true;
\item If $H_{k,\km1}$ is true, then so is $H_{k,k}$;
\item If $H_{k,k}$ is true, then so is $H_{\kp1,k}$.
\end{itemize}

\subsection{Induction Initialization}
We first show that the hypothesis $H_{0,-1}$ is true.  That is,
we must show \eqref{eq:Pconk} and \eqref{eq:ag1limgen} for $k=0$.
This is a special case of Lemma~\ref{lem:orthogGaussLim}.
Specifically, for each $N$, let $\Ubf=\Ibf_N$, the $N \x N$ identity
matrix, which trivially satisfies property (i) of Lemma~~\ref{lem:orthogGaussLim}
with $s=0$.  Let $\xbf = \pbf_0$.  Since $\pbf_0 = \Vbf\ubf_0$ and $\Vbf$ is
Haar distributed independent of $\ubf_0$, we have that $\pbf_0$ is orthogonally
invariant and satisfies property (ii) of Lemma~\ref{lem:orthogGaussLim}.
Also,
\[
    \lim_{N \arr \infty} \|\pbf_0\|^2 \stackrel{(a)}{=}
     \lim_{N \arr \infty} \|\ubf_0\|^2 \stackrel{(b)}{=} \Exp [U_0^2]
     \stackrel{(c)}{=} \tau_{10},
\]
where (a) follows from the fact that  $\pbf_0=\Vbf\ubf_0$ and $\Vbf$ is orthogonal;
(b) follows from the assumption~\eqref{eq:U0lim} and (c) follows from the definition
\eqref{eq:tau10gen}.  This proves property (iii) of Lemma~\ref{lem:orthogGaussLim}.
Hence, $\pbf_0 = \Ubf\pbf_0$,  we have that the components of $\pbf_0$ converge
empirically as
\[
    \lim_{N \arr \infty} \{ p_{0n} \} \PLeq P_0 \sim \Norm(0,\tau_{10}),
\]
for a Gaussian random variable $P_0$.  Moreover, since $\Vbf$ is independent of $\wbf^p$,
and the components of $\wbf^p$ converge empirically as \eqref{eq:WpqLim},
we have that the components of $\pbf_n,\wbf^p$ almost surely converge empirically as
\[
    \lim_{N \arr \infty} \{ w^p_n, p_{0n} \} \PLeq (W^p,P_0),
\]
where $W^p$ is independent of $P_0$.  This  proves \eqref{eq:Pconk} for $k=0$.

Now, we have assumed in \eqref{eq:gam10limgen} that $\gamma_{10} \arr \gammabar_{10}$
as $N \arr \infty$.  Also, since
 $f_p'(p,w^p,\gamma_1)$ is uniformly Lipschitz continuous in $(p,w^p)$
at $\gamma_1 = \gammabar_{10}$, we have that
$\alpha_{10} = \bkt{ \fbf_p'(\pbf_0,\wbf^p,\gamma_{10}) }$
converges to $\alphabar_{10}$ in \eqref{eq:a1segen} almost surely.
This proves \eqref{eq:ag1limgen}.

\subsection{The Induction Recursion}
We next show the implication $H_{k,\km1} \Rightarrow H_{k,k}$. The implication
$H_{k,k} \Rightarrow H_{\kp1,k}$ is proven similarly.  Hence, fix $k$ and assume
that $H_{k,\km1}$ holds.
Since $\Gamma_1(\gamma_i,\alpha_i)$ is continuous at $(\gammabar_{1k},\alphabar_{1k})$,
the limits \eqref{eq:ag1limgen} combined with \eqref{eq:gam2segen} show that
\[
    \lim_{N \arr\infty} \gamma_{2k} =\lim_{N \arr\infty} \Gamma_1(\gamma_{1k},\alpha_{1k}) =
        \gammabar_{2k}.
\]
In addition, the induction hypothesis shows that for $\ell=0,\ldots,k$,
the components of $(\wbf^p,\pbf_\ell)$ almost surely converge empirically as
\[
    \lim_{N \arr\infty} \{ (w^p_n,p_{\ell n})\}  \PLeq (W^p,P_\ell),
\]
where $P_\ell \sim \Norm(0,\tau_{1\ell})$ for $\tau_{1\ell}$ given by the SE equations.
Since $f_p(\cdot)$ is Lipschitz continuous \textb{and} $C_1(\alpha_{1\ell})$ is continuous
at $\alpha_{1\ell}=\alphabar_{1\ell}$, \textb{one may observe that} the definition of $\vbf_\ell$ in \eqref{eq:vupgen}
and the limits \eqref{eq:ag1limgen} show that
\[
    \lim_{N \arr\infty} \{ (w^p_n,p_{\ell n},v_{\ell n})\}  \PLeq (W^p,P_\ell ,V_\ell ),
\]
where $V_\ell $ is the random variable
\beq \label{eq:Velldef}
    V_\ell  = g_p(P_\ell ,W^p,\gammabar_{1\ell },\alphabar_{1\ell}),
\eeq
and $g_p(\cdot)$ is the function
\beq \label{eq:gpdef}
    g_p(p,w^p,\gamma_1,\alpha_1) := C_1(\alpha_1)\left[ f_p(p,w^p,\gamma_1)-\alpha_1 p \right].
\eeq
\textb{
Similarly, we have the limit
\[
    \lim_{N \arr\infty} \{ (w^q_n,q_{\ell n},u_{\ell n})\}  
    \PLeq (W^q,Q_\ell ,U_\ell ),
\]
where $U_\ell $ is the random variable,
\beq \label{eq:Uelldef}
    U_\ell  = g_q(Q_\ell ,W^q,\gammabar_{2\ell },\alphabar_{2\ell})
\eeq
and $g_q(\cdot)$ is the function
\beq \label{eq:gqdef}
    g_q(q,w^q,\gamma_2,\alpha_2) := C_2(\alpha_1)\left[ f_q(q,w^q,\gamma_2)-\alpha_2 q \right].
\eeq
}

We next introduce the notation
\[
    \Ubf_k := \left[ \ubf_0 \cdots \ubf_k \right] \in \R^{N \x (\kp1)},
\]
to represent the first $\kp1$ values of the vector $\ubf_\ell$.
We define the matrices $\Vbf_k$, $\Qbf_k$ and $\Pbf_k$ similarly.
Using this notation, let $G_k$ be the \textb{tuple of random matrices},
\beq \label{eq:Gdef}
    G_k := \left\{ \Ubf_k, \Pbf_k, \Vbf_k, \Qbf_{\km1} \right\}.
\eeq
With some abuse of notation, we will also use $G_k$
to denote the sigma-algebra generated by these variables.
The set \eqref{eq:Gdef} contains all the outputs of the algorithm
\eqref{eq:algoGen} immediately \emph{before} \eqref{eq:qupgen} in iteration $k$.

Now, the actions of the matrix $\Vbf$ in the recursions \eqref{eq:algoGen}
are through the matrix-vector multiplications \eqref{eq:pupgen} and \eqref{eq:qupgen}.
Hence, if we define the matrices
\beq \label{eq:ABdef}
    \Abf_k := \left[ \Pbf_k ~ \Vbf_{\km1} \right], \quad
    \Bbf_k := \left[ \Ubf_k ~ \Qbf_{\km1} \right],
\eeq
the output of the recursions in the set $G_k$ will be unchanged for all
matrices $\Vbf$ satisfying the linear constraints
\beq \label{eq:ABVconk}
    \Abf_k = \Vbf\Bbf_k.
\eeq
Hence, the conditional distribution of $\Vbf$ given $G_k$ is precisely
the uniform distribution on the set of orthogonal matrices satisfying
\eqref{eq:ABVconk}.  The matrices $\Abf_k$ and $\Bbf_k$ are of dimensions
$N \x s$ where $s=2k+1$.
From Lemma~\ref{lem:orthogLin}, this conditional distribution is given by
\beq \label{eq:Vconk}
    \left. \Vbf \right|_{G_k} \eqd
    \Abf_k(\Abf\tran_k\Abf_k)^{-1}\Bbf_k\tran + \Ubf_{\Abf_k^\perp}\tilde{\Vbf}\Ubf_{\Bbf_k^\perp}\tran,
\eeq
where $\Ubf_{\Abf_k^\perp}$ and $\Ubf_{\Bbf_k^\perp}$ are $N \x (N-s)$ matrices
whose columns are an orthonormal basis for $\Range(\Abf_k)^\perp$ and $\Range(\Bbf_k)^\perp$.
The matrix $\tilde{\Vbf}$ is  Haar distributed on the set of $(N-s)\x(N-s)$
orthogonal matrices and independent of $G_k$.

Using \eqref{eq:Vconk} we can write $\qbf_k$ in \eqref{eq:qupgen} as a sum of two terms
\beq \label{eq:qpart}
    \qbf_k = \Vbf\tran\vbf_k = \qbf_k^{\rm det} + \qbf_k^{\rm ran},
\eeq
where $\qbf_k^{\rm det}$ is what we will call the \emph{deterministic} part:
\beq \label{eq:qkdet}
    \qbf_k^{\rm det} = \Bbf_k(\Abf\tran_k\Abf_k)^{-1}\Abf_k\tran\vbf_k,
\eeq
and $\qbf_k^{\rm ran}$ is what we will call the \emph{random} part:
\beq \label{eq:qkran}
    \qbf_k^{\rm ran} = \Ubf_{\Bbf_k^\perp}\tilde{\Vbf}\tran \Ubf_{\Abf_k^\perp}\tran \vbf_k.
\eeq
The next few lemmas will evaluate the asymptotic distributions of the two
terms in \eqref{eq:qpart}.

\begin{lemma} \label{lem:qconvdet}
Under the induction hypothesis $H_{k,\km1}$, there \textb{exist} constants
$\beta_{k,0},\ldots,\beta_{k,\km1}$ such that the components of $\qbf_k^{\rm det}$
along with $(\qbf_0,\ldots,\qbf_{\km1})$ converge empirically as
\begin{align}
    \MoveEqLeft \lim_{N \arr \infty} \left\{ w^q_n,q_{0n},\ldots,q_{\km1,n},q_{kn}^{\rm det}) \right\}
    \nonumber \\
    &\PLeq (W^q,Q_0,\ldots,Q_{\km1},Q_k^{\rm det}),
        \label{eq:qconvdet}
\end{align}
where $Q_\ell$, $\ell=0,\ldots,\km1$ are the Gaussian random variables in induction hypothesis
\eqref{eq:Qconk} and $Q_k^{\rm det}$ is a linear combination,
\beq \label{eq:Qkdetlim}
    Q_k^{\rm det} = \beta_{k0}Q_0 + \cdots + \beta_{k,\km1}Q_{\km1}.
\eeq
\end{lemma}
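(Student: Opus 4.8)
The plan is to read off from the definition \eqref{eq:qkdet} that $\qbf_k^{\rm det}=\Bbf_k\mathbf{c}_k$ is a linear combination of the columns of $\Bbf_k=[\Ubf_k~\Qbf_{\km1}]$, with coefficient vector $\mathbf{c}_k:=(\Abf_k\tran\Abf_k)^{-1}\Abf_k\tran\vbf_k\in\R^{2k+1}$. Partitioning $\mathbf{c}_k=(\mathbf{c}_k^U,\mathbf{c}_k^Q)$ according to the $[\Ubf_k~\Qbf_{\km1}]$ split, the lemma reduces to two claims: (i) $\mathbf{c}_k$ converges to a deterministic limit; and (ii) the $\Ubf_k$-block of that limit vanishes, so that only the $\Qbf_{\km1}$ columns survive and $Q_k^{\rm det}$ depends on $Q_0,\ldots,Q_{\km1}$ alone.

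For claim (i), I would first upgrade the induction hypothesis $H_{k,\km1}$ to joint empirical convergence of $(\wbf^p,\pbf_0,\ldots,\pbf_k,\vbf_0,\ldots,\vbf_k)$: the limit \eqref{eq:Pconk} supplies the $P_\ell$, and applying the uniformly Lipschitz map $g_p$ of \eqref{eq:gpdef} componentwise (using \eqref{eq:ag1limgen}) supplies $V_\ell=g_p(P_\ell,W^p,\gammabar_{1\ell},\alphabar_{1\ell})$ as in \eqref{eq:Velldef}. Since each normalized inner product is the empirical average of a pseudo-Lipschitz-of-order-two function, this joint convergence forces $\frac1N\Abf_k\tran\Abf_k\to\mathbf{M}_k$ and $\frac1N\Abf_k\tran\vbf_k\to\mathbf{m}_k$, where $\mathbf{M}_k$ collects the second moments of $(P_0,\ldots,P_k,V_0,\ldots,V_{\km1})$ and $\mathbf{m}_k$ their covariances with $V_k$. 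Provided $\mathbf{M}_k$ is invertible, $\mathbf{c}_k\to\mathbf{M}_k^{-1}\mathbf{m}_k$. The columns of $\Bbf_k$ also converge empirically, the $\qbf_\ell$ by the hypothesis \eqref{eq:Qconk} and the $\ubf_\ell$ since they are Lipschitz images of the $\qbf_{\ell-1}$.

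The heart of the argument is claim (ii), which I would obtain from the Onsager orthogonality $\Exp[P_iV_j]=0$ for all $i,j$. Because $(P_i,P_j)$ is jointly Gaussian and independent of $W^p$, Gaussian integration by parts gives $\Exp[P_i f_p(P_j,W^p,\gammabar_{1j})]=\Cov(P_i,P_j)\,\Exp[f_p'(P_j,W^p,\gammabar_{1j})]=\Cov(P_i,P_j)\,\alphabar_{1j}$, where the last step is the SE definition \eqref{eq:a1segen}. Substituting into $V_j=C_1(\alphabar_{1j})[f_p(P_j,W^p,\gammabar_{1j})-\alphabar_{1j}P_j]$ from \eqref{eq:gpdef} yields $\Exp[P_iV_j]=C_1(\alphabar_{1j})(\Cov(P_i,P_j)\alphabar_{1j}-\alphabar_{1j}\Cov(P_i,P_j))=0$. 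Hence $\mathbf{M}_k$ is block diagonal in the $P/V$ split and the $P$-block of $\mathbf{m}_k$ vanishes, so $\mathbf{M}_k^{-1}\mathbf{m}_k$ has zero $\Ubf_k$-block: $\mathbf{c}_k^U\to\mathbf{0}$ while $\mathbf{c}_k^Q\to\bar{\mathbf{c}}_k^Q=:(\beta_{k,0},\ldots,\beta_{k,\km1})$. Combining with the convergence of the columns of $\Bbf_k$, and using that $\frac1N\|\ubf_\ell\|^2$ stays bounded so the vanishing $\Ubf_k$-coefficients contribute nothing in the limit, the components of $(\wbf^q,\qbf_0,\ldots,\qbf_{\km1},\qbf_k^{\rm det})$ converge empirically with $Q_k^{\rm det}=\sum_{\ell=0}^{\km1}\beta_{k,\ell}Q_\ell$, which is \eqref{eq:qconvdet}--\eqref{eq:Qkdetlim}.

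The main obstacle I anticipate is the invertibility of the limiting Gram matrix $\mathbf{M}_k$: if $V_0,\ldots,V_{\km1}$ are linearly dependent then $\mathbf{M}_k$ is singular and $\mathbf{c}_k$ need not approach a unique limit. Handling this degenerate case—e.g.\ by restricting to a maximal linearly independent subset or passing to a pseudoinverse and checking that $\qbf_k^{\rm det}$ is unaffected—together with making rigorous the transfer of empirical convergence through the inner products, is where the real work lies; the orthogonality computation that pins down the linear-combination structure is itself short.
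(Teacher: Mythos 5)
Your proof follows essentially the same route as the paper's: the same decomposition $\qbf_k^{\rm det}=\Bbf_k(\Abf_k\tran\Abf_k)^{-1}\Abf_k\tran\vbf_k$, the same Stein's-lemma (Gaussian integration by parts) computation showing $\Exp[P_iV_j]=0$ so that the limiting Gram matrix is block diagonal and the $\Ubf_k$-block of the coefficient vector vanishes, and the same argument that the residual carried by the vanishing coefficients is negligible in normalized norm because $\frac{1}{N}\|\ubf_i\|^2$ and $\frac{1}{N}\|\qbf_j\|^2$ stay bounded. The invertibility of the limiting Gram matrix that you flag as the remaining obstacle is likewise assumed implicitly, without comment, in the paper's proof (it inverts $\Qbf^v_{\km1}$ directly), so your proposal is correct and, if anything, slightly more careful than the paper on that point.
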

\begin{proof}
We evaluate the asymptotic values of various terms in \eqref{eq:qkdet}.
Using the definition of $\Abf_k$ in \eqref{eq:ABdef},
\[
    \Abf\tran_k\Abf_k = \left[ \begin{array}{cc}
        \Pbf_k\tran\Pbf_k & \Pbf_k\tran\Vbf_{\km1} \\
        \Vbf_{\km1}\tran\Pbf_k & \Vbf_{\km1}\tran\Vbf_{\km1}
        \end{array} \right]
\]
We can then easily evaluate the asymptotic value of these
terms as follows.  For example, the asymptotic value of the
$(i,j)$ component of the matrix $\Pbf_k\tran\Pbf_k$ is given by
\begin{align*}
    \MoveEqLeft \lim_{N \arr \infty} \frac{1}{N} \left[ \Pbf_k\tran\Pbf_k \right]_{ij}
        \stackrel{(a)}{=} \frac{1}{N} \pbf_i\tran\pbf_j \\
        &= \frac{1}{N} \sum_{n=1}^N p_{in}p_{jn}
        \stackrel{(b)}{=} E(P_iP_j) \stackrel{(c)}{=} \left[ \Qbf^p_k\right]_{ij},
\end{align*}
where (a) follows since the $i$-th column of $\Pbf_k$ is precisely the vector
$\pbf_i$; (b) follows due to convergence assumption in \eqref{eq:Pconk};
and in (c), we use $\Qbf^p_k$ to denote the covariance matrix of $(P_0,\ldots,P_k)$.
Similarly
\[
    \lim_{N \arr \infty}  \frac{1}{N}  \Vbf_{\km1}\tran\Vbf_{\km1} = \Qbf^v_{\km1},
\]
where $\Qbf^v_{\km1}$ has the components,
\[
    \left[ \Qbf^v_{\km1} \right]_{ij} = \Exp \left[ V_iV_j \right],
\]
where $V_i$ is the random variable in \eqref{eq:Velldef}.
Finally, the expectation for the cross-terms are given by
\begin{align*}
    \Exp[V_iX_j]
    &\stackrel{(a)}{=}
    \Exp[g_p(P_i,W^p,\gammabar_{1i},\alphabar_{1i})X_j] \nonumber \\
    &\stackrel{(b)}{=} \Exp\left[ g_p'(P_i,W^p,\gammabar_{1i},\alphabar_{1i})\right]
        \Exp[X_iX_j] \nonumber \\
     &\stackrel{(c)}{=}
     \Exp[X_iX_j]C_1(\alphabar_{1i})\left( \Exp
        \left[ f_p'(P_i,W^p,\gammabar_{1i}) \right]-\alphabar_{1i}
        \right) \\
     &\stackrel{(d)}{=} 0,
\end{align*}
where (a) follows from \eqref{eq:Velldef};
(b) follows from Stein's Lemma; and  (c) follows from the definition of $g_p(\cdot)$
in \eqref{eq:gpdef};  and (d) follows from \eqref{eq:a1segen}.
The above calculations show that
\beq \label{eq:AAlim}
    \lim_{N \arr \infty} \frac{1}{N}\Abf\tran_k\Abf_k 
        \stackrel{a.s.}{=} \left[ \begin{array}{cc}
        \Qbf_k^p & \mathbf{0} \\
        \mathbf{0} & \Qbf^v_{\km1}
        \end{array} \right],
\eeq
A similar calculation shows that
\beq \label{eq:Aslim}
    \lim_{N \arr \infty} \frac{1}{N} \Abf\tran_k\sbf_k = \left[
    \begin{array}{c} \mathbf{0} \\ \bbf^s_k \end{array} \right],
\eeq
where \textb{$\bbf^v_k$} is the vector of correlations
\beq
    \bbf^v_k = \mat{ \Exp[V_0V_k] & \Exp[V_1V_k] & \cdots & \Exp[V_{\km1}V_k] }\tran.
\eeq
Combining \eqref{eq:AAlim} and \eqref{eq:Aslim} shows that
\beq \label{eq:Vsmult1}
    \lim_{N \arr \infty} (\Abf\tran_k\Abf_k)^{-1}\Abf_k\tran \vbf_k 
    \stackrel{a.s.}{=}
    \left[ \begin{array}{c} \mathbf{0} \\ \mathbf{\beta}_k \end{array} \right],
\eeq
where 
\[
    \mathbf{\beta}_k := \left[ \Qbf^v_{k-1}\right]^{-1}\bbf^v_k.
\]
Therefore,
\begin{align}
    \MoveEqLeft \qbf_k^{\rm det} = \Bbf_k(\Abf\tran_k\Abf_k)^{-1}\Abf_k\tran\vbf_k \nonumber \\
    &= \left[ \Ubf_k ~ \Qbf_{\km1} \right]
    \left[ \begin{array}{c} \mathbf{0} \\ \mathbf{\beta}_k \end{array} \right]
    +\xibf \nonumber \\
    &= \sum_{\ell=0}^{\km1} \beta_{k\ell} \qbf_\ell + \xibf, \label{eq:qbetasum}
\end{align}
where \textb{$\xibf \in \R^N$ is the error,
\beq \label{eq:xisdef}
    \xibf = \Bbf_k \sbf, \quad
    \sbf := (\Abf\tran_k\Abf_k)^{-1}\Abf_k\tran \vbf_k -
    \left[ \begin{array}{c} \mathbf{0} \\ \mathbf{\beta}_k \end{array} \right].
\eeq
We next need to bound the norm of the error term $\xibf$.
Since $\xibf = \Bbf_k\sbf$, the definition of $\Bbf_k$ in 
\eqref{eq:ABdef} shows that
\beq \label{eq:xisum}
    \xibf = \sum_{i=0}^k s_i \ubf_i + \sum_{j=0}^{\km1} s_{k+j+1}\qbf_j,
\eeq
where we have indexed the components of $\sbf$ in \eqref{eq:xisdef}
as $\sbf=(s_0,\ldots,s_{2k})$.
From \eqref{eq:Vsmult1}, the components $s_j \arr 0$ almost surely,
and therefore
\[
    \lim_{N \arr \infty} \max_{j=0,\ldots,2k} |s_j| \stackrel{a.s.}{=} 0.
\]
Also, by the induction hypothesis,
\[
    \lim_{N \arr \infty} \frac{1}{N} \|\ubf_i\|^2 \stackrel{a.s.}{=} E(U_i^2),
    \quad
    \lim_{N \arr \infty} \frac{1}{N} \|\qbf_j\|^2 \stackrel{a.s.}{=} E(Q_j^2).
\]
Therefore, from \eqref{eq:xisum},
\begin{align}
    \MoveEqLeft \lim_{N \arr \infty} \frac{1}{N}\|\xibf\|^2
       \leq
        \lim_{N \arr \infty} 
        \left[ \max_{j=0,\ldots,2k} |s_j|^2 \right] \nonumber \\
        & \times 
        \frac{1}{N} \left[ \sum_i \|\ubf_i\|^2 + \sum_j \|\qbf_j\|^2 \right] \stackrel{a.s.}{=} 0.
    \label{eq:xilimbnd}
\end{align}
Therefore, if $f(q_1,\cdots,q_k)$ is
pseudo-Lipschitz continuous of order 2,
\begin{align*}
    \MoveEqLeft \lim_{N \arr \infty} \frac{1}{N} \sum_{n=1}^N
        f(q_{0n},\cdots,q_{\km1,n},q^{\rm det}_k) \\
        &\stackrel{(a)}{=} \lim_{N \arr \infty} \frac{1}{N} \sum_{n=1}^N
        f\left( q_{0n},\cdots,q_{\km1,n},
            \sum_{\ell=0}^{\km1} \beta_{k\ell} q_{\ell n} \right)\\
        &\stackrel{(b)}{=} \Exp\left[
        f\left(Q_0,\cdots,Q_{\km1}, \sum_{\ell=0}^{\km1} \beta_{k\ell} Q_{\ell n}
            \right) \right],
\end{align*}
where (a) follows from the \eqref{eq:qbetasum}, the bound
\eqref{eq:xilimbnd}, and the pseudo-Lipschitz continuity of $f(\cdot)$;
and (b) follows from the fact that $f(\cdot)$ is pseudo-Lipschitz continuous
and the induction hypothesis that
\[
    \lim_{N \arr \infty} \{q_{0n},\cdots,q_{\km1,n}\} \stackrel{PL(2)}{=}
    (Q_0,\ldots,Q_{\km1}).
\]
This proves \eqref{eq:qconvdet}.
}

\end{proof}

\begin{lemma} \label{lem:rhoconv}
Under the induction hypothesis $H_{k,\km1}$, the following limit
holds almost surely
\beq
    \lim_{N \arr \infty} \frac{1}{N} \| \Ubf_{\Abf_k^\perp}\tran\sbf_k\|^2 =
    \rho_k,
\eeq
for some constant $\rho_k \geq 0$.
\end{lemma}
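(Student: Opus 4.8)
The plan is to recognize the vector $\Ubf_{\Abf_k^\perp}\tran\vbf_k$ (written $\sbf_k$ in the statement, with $\sbf_k\equiv\vbf_k$ the vector from \eqref{eq:vupgen}) as the coordinate representation, in the orthonormal basis $\Ubf_{\Abf_k^\perp}$, of the component of $\vbf_k$ orthogonal to $\Range(\Abf_k)$, and then to evaluate its squared norm using the limits already obtained in the proof of Lemma~\ref{lem:qconvdet}. Since the columns of $\Ubf_{\Abf_k^\perp}$ are orthonormal, $\Pbf_{\Abf_k}^\perp := \Ubf_{\Abf_k^\perp}\Ubf_{\Abf_k^\perp}\tran$ is the orthogonal projector onto $\Range(\Abf_k)^\perp$, with complement $\Pbf_{\Abf_k} = \Abf_k(\Abf_k\tran\Abf_k)^{-1}\Abf_k\tran$ (well defined since $\Abf_k$ has full column rank for each finite $N$). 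Hence
\[
\|\Ubf_{\Abf_k^\perp}\tran\vbf_k\|^2 = \vbf_k\tran\Pbf_{\Abf_k}^\perp\vbf_k = \|\vbf_k\|^2 - \vbf_k\tran\Abf_k(\Abf_k\tran\Abf_k)^{-1}\Abf_k\tran\vbf_k .
\]

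First I would divide by $N$ and take limits term by term. The leading term converges almost surely, $\frac{1}{N}\|\vbf_k\|^2 \arr \Exp[V_k^2]$, by the empirical convergence $\{v_{kn}\}\PLeq V_k$ established under $H_{k,\km1}$ together with the fact that $x^2$ is pseudo-Lipschitz of order two. For the quadratic form I would write the normalization as
\[
\Big(\tfrac{1}{N}\vbf_k\tran\Abf_k\Big)\Big(\tfrac{1}{N}\Abf_k\tran\Abf_k\Big)^{-1}\Big(\tfrac{1}{N}\Abf_k\tran\vbf_k\Big)
\]
and substitute the limits \eqref{eq:AAlim} and \eqref{eq:Aslim}: that $\frac{1}{N}\Abf_k\tran\Abf_k$ converges to the block-diagonal matrix with blocks $\Qbf_k^p$ and $\Qbf^v_{\km1}$, while $\frac{1}{N}\Abf_k\tran\vbf_k \arr [\,\mathbf{0};\,\bbf^v_k\,]$. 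Because the $\Pbf_k$-block of the limiting cross-correlation vanishes, only the $\Vbf_{\km1}$-block survives, and I obtain
\[
\rho_k = \Exp[V_k^2] - (\bbf^v_k)\tran(\Qbf^v_{\km1})^{-1}\bbf^v_k .
\]

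Nonnegativity of $\rho_k$ is then immediate, as it is the almost-sure limit of the nonnegative quantities $\frac{1}{N}\|\Ubf_{\Abf_k^\perp}\tran\vbf_k\|^2$; equivalently, the right-hand side is the residual variance of the best linear predictor of $V_k$ from $(V_0,\dots,V_{\km1})$, hence $\rho_k \geq 0$.

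The step I expect to require the most care is passing to the limit inside the matrix inverse: continuity of inversion applies only if $\frac{1}{N}\Abf_k\tran\Abf_k$ converges to a nonsingular matrix, which amounts to positive definiteness of the limiting covariances $\Qbf_k^p$ and $\Qbf^v_{\km1}$ (i.e.\ the $P_\ell$'s and $V_\ell$'s are not asymptotically linearly dependent). In the nondegenerate case this is guaranteed by the positivity of the state-evolution variances; in degenerate cases one would instead use the Moore--Penrose pseudoinverse, observing that the projector $\Pbf_{\Abf_k}$, and therefore $\rho_k$, remains well defined regardless.
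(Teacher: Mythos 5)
Your proposal is correct and follows essentially the same route as the paper: the paper likewise expands $\|\Ubf_{\Abf_k^\perp}\tran\vbf_k\|^2 = \vbf_k\tran\vbf_k - \vbf_k\tran\Abf_k(\Abf_k\tran\Abf_k)^{-1}\Abf_k\tran\vbf_k$ via the orthogonal projection onto $\Range(\Abf_k)^\perp$ and then invokes the limits \eqref{eq:AAlim}--\eqref{eq:Aslim} from the proof of Lemma~\ref{lem:qconvdet} to obtain $\rho_k = \Exp[V_k^2] - (\bbf^v_k)\tran(\Qbf^v_{\km1})^{-1}\bbf^v_k$, exactly as you do. Your added remarks on the nonsingularity of the limiting covariance blocks and the interpretation of $\rho_k$ as a residual prediction variance are sound refinements of a point the paper leaves implicit, but they do not change the argument.
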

\begin{proof}  From \eqref{eq:ABdef}, the matrix  $\Abf_k$ has $s=2k+1$ columns.
From Lemma~\ref{lem:orthogLin},
$\Ubf_{\Abf_k^\perp}$ is an orthonormal basis of $N-s$ in the $\mathrm{Range}(\Abf_k)^\perp$.
Hence, the energy $\| \Ubf_{\Abf_k^\perp}\sbf_k\|^2$ is precisely
\[
    \| \Ubf_{\Abf_k^\perp}\sbf_k\|^2 = \sbf_k\tran\sbf_k - \sbf_k\tran\Abf_k
        (\Abf_k\tran\Abf_k)^{-1}\Abf_k\tran\sbf_k.
\]
Using similar calculations as the previous lemma, we have
\[
    \lim_{N \arr \infty} \frac{1}{N} \| \Ubf_{\Abf_k}\sbf_k\|^2
    = \Exp[S_k^2] - (\bbf^s_k)\tran\left[ \Qbf^s_k \right]^{-1}\bbf^s_k.
\]
Hence, the lemma is proven if we define $\rho_k$ as the right hand side of this
equation.
\end{proof}

\begin{lemma} \label{lem:qconvran}
Under the induction hypothesis $H_{k,\km1}$, the components of
the ``random" part $\qbf_k^{\rm ran}$ along with the components
of $(\wbf^q,\qbf_0,\ldots,\qbf_{\km1})$
almost surely converge empirically as
\begin{align}
    \MoveEqLeft  \lim_{N \arr \infty}
        \left\{ (w^q_n,q_{0n},\ldots,q_{\km1,n},q_{kn}^{\rm ran}) \right\}
        \nonumber \\
        &\PLeq  (W^q,Q_0,\ldots,Q_{\km1},U_k), \label{eq:qranlim}
\end{align}
where $U_k \sim \Norm(0,\rho_k)$ is a Gaussian random variable
independent of $(W^q,Q_0,\ldots,Q_{\km1})$ and $\rho_k$ is the constant
in Lemma~\ref{lem:rhoconv}.
\end{lemma}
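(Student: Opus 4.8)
The plan is to condition on the $\sigma$-algebra $G_k$ from \eqref{eq:Gdef}, under which the only surviving randomness is the Haar matrix $\tilde{\Vbf}$ in the representation \eqref{eq:Vconk}. Writing $s=2k+1$ and $\xbf:=\tilde{\Vbf}\tran\Ubf_{\Abf_k^\perp}\tran\vbf_k\in\R^{N-s}$, the definition \eqref{eq:qkran} reads $\qbf_k^{\rm ran}=\Ubf_{\Bbf_k^\perp}\xbf$. Since $\Ubf_{\Abf_k^\perp}\tran\vbf_k$ is $G_k$-measurable while $\tilde{\Vbf}$ is Haar and independent of $G_k$, the conditional law of $\xbf$ given $G_k$ is isotropic; and because $\tilde{\Vbf}$ is orthogonal, $\|\xbf\|=\|\Ubf_{\Abf_k^\perp}\tran\vbf_k\|$, so Lemma~\ref{lem:rhoconv} yields $\tfrac1N\|\xbf\|^2\arr\rho_k$ almost surely. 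With $\Ubf=\Ubf_{\Bbf_k^\perp}$ (which satisfies $\Ubf_{\Bbf_k^\perp}\tran\Ubf_{\Bbf_k^\perp}=\Ibf$), this is exactly the setting of Lemma~\ref{lem:orthogGaussLim}, already giving the \emph{marginal} empirical limit $\qbf_k^{\rm ran}\arr\Norm(0,\rho_k)$.

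To upgrade this to the \emph{joint} limit \eqref{eq:qranlim}, I reuse the Gaussian representation from the proof of Lemma~\ref{lem:orthogGaussLim}. Generate $\xbf\eqd(\|\xbf\|/\|\zbf_0\|)\zbf_0$ with $\zbf_0\sim\Norm(\zero,\Ibf_{N-s})$, complete it to $\zbf=[\zbf_0;\zbf_1]\sim\Norm(\zero,\Ibf_N)$ with $\zbf_1\in\R^s$, and let $\hbf:=\Sbf\zbf$ for the orthogonal completion $\Sbf=[\Ubf_{\Bbf_k^\perp}~\Ubf_{\Bbf_k}]$, where $\Ubf_{\Bbf_k}$ is an orthonormal basis of $\Range(\Bbf_k)$. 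Then $\qbf_k^{\rm ran}=c_N\hbf-c_N\Ubf_{\Bbf_k}\zbf_1$, where $c_N=\|\xbf\|/\|\zbf_0\|\arr\sqrt{\rho_k}$ and the correction $c_N\Ubf_{\Bbf_k}\zbf_1$, living in the fixed $s$-dimensional space $\Range(\Bbf_k)$, has vanishing normalized norm. The key point is that, conditioned on $G_k$, the vector $\hbf$ is $\Norm(\zero,\Ibf_N)$ and \emph{independent} of the $G_k$-measurable quantities $\wbf^q,\qbf_0,\ldots,\qbf_{\km1}$.

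It then remains to pass to the joint empirical limit. Fix a pseudo-Lipschitz $f$ of order two. Conditioned on $G_k$, the vectors $\wbf^q,\qbf_0,\ldots,\qbf_{\km1}$ are deterministic and, by the induction hypothesis $H_{k,\km1}$ (i.e.\ \eqref{eq:Qconk}), converge empirically to $(W^q,Q_0,\ldots,Q_{\km1})$; the components of $\hbf$ are i.i.d.\ $\Norm(0,1)$ independent of them. Replacing $\qbf_k^{\rm ran}$ by $c_N\hbf$ at a cost controlled by the vanishing normalized norm of the correction, a conditional strong law of large numbers (using the pseudo-Lipschitz growth bound and the finiteness of the Gaussian second moment) shows that $\tfrac1N\sum_n f(w^q_n,q_{0n},\ldots,q_{\km1,n},q^{\rm ran}_{kn})$ converges almost surely to $\Exp[f(W^q,Q_0,\ldots,Q_{\km1},U_k)]$ with $U_k\sim\Norm(0,\rho_k)$ independent of $(W^q,Q_0,\ldots,Q_{\km1})$, the independence being inherited from that of $\hbf$ and $G_k$. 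As this limit does not depend on $G_k$, the conditional almost-sure convergence upgrades to unconditional almost-sure convergence, which is precisely \eqref{eq:qranlim}.

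The main obstacle is this last upgrade from the marginal Gaussian limit of Lemma~\ref{lem:orthogGaussLim} to the joint statement with asymptotic independence from the \emph{entire} history. The delicate points are handling the $N$-dependent scaling $c_N$ and the lower-dimensional correction $c_N\Ubf_{\Bbf_k}\zbf_1$ uniformly over arbitrary order-two pseudo-Lipschitz test functions, and verifying that the conditional law of large numbers (which holds pointwise in each realization of $G_k$) produces a limit free of $G_k$, so that conditioning can be removed by a Fubini argument.
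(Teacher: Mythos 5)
Your proof follows the paper's argument exactly: condition on $G_k$, use the representation from Lemma~\ref{lem:orthogLin} to write $\qbf_k^{\rm ran}=\Ubf_{\Bbf_k^\perp}\xbf$ with $\xbf=\tilde{\Vbf}\tran\Ubf_{\Abf_k^\perp}\tran\vbf_k$ conditionally isotropic, invoke Lemma~\ref{lem:rhoconv} for the almost-sure norm limit, and apply Lemma~\ref{lem:orthogGaussLim}. In fact you give strictly more detail than the paper, whose proof ends by asserting that \eqref{eq:qranlim} ``now follows from Lemma~\ref{lem:orthogGaussLim}'' even though that lemma as stated only yields the marginal limit; your explicit upgrade to the joint limit with asymptotic independence---via the conditional Gaussian representation, the vanishing $s$-dimensional correction, and the conditional law of large numbers removed by a Fubini argument---correctly fills in the step the paper leaves implicit.
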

\begin{proof}
This is a direct application of Lemma~\ref{lem:orthogGaussLim}.
Let $\xbf = \tilde{\Vbf}\tran\Ubf_{\Abf_k^\perp}\tran\sbf_k$ so that
\[
    \qbf_k^{\rm det} = \Ubf_{\Bbf_k^\perp}\xbf_k.
\]
For each $N$, $\Ubf_{\Bbf_k^\perp} \in \R^{N \x (N-s)}$ is a matrix
with orthonormal columns spanning $\Range(\Bbf_k)^\perp$.
Also, since $\tilde{\Vbf}$ is uniformly distributed on the set of
$(N-s)\x (N-s)$ orthogonal matrices, and independent of $G_k$,
the conditional distribution $\xbf_k$ given $G_k$ is orthogonally invariant in that
\[
    \left. \Ubf \xbf_k \right|_{G_k} \eqd \left. \xbf_k \right|_{G_k},
\]
for any orthogonal matrix $\Ubf$.  Lemma~\ref{lem:rhoconv} also shows that
\[
    \lim_{N \arr \infty} \frac{1}{N} \|\xbf_k\|^2 = \rho_k,
\]
almost surely.
The limit \eqref{eq:qranlim} now follows from
Lemma~\ref{lem:orthogGaussLim}.
\end{proof}

Using the partition \eqref{eq:qpart} and Lemmas~\ref{lem:qconvdet} and \ref{lem:qconvran},
the components of $(\wbf^q,\qbf_0,\ldots,\qbf_k)$
almost surely converge empirically as
\begin{align*}
    \lefteqn{ \lim_{N \arr \infty} \{ (w^q_n,q_{0n},\ldots,q_{kn}) \} }\\
    &\PLeq \lim_{N \arr \infty} \{ (w^q_n,q_{0n},\ldots,q^{\rm det}_{kn} + q^{\rm ran}_{kn}) \}
    \\
    &\PLeq  (W^q,Q_0,\ldots,Q_k),
\end{align*}
where $Q_k$ is the random variable
\[
    Q_k = \beta_{k0}Q_0 + \cdots + \beta_{k,\km1}Q_{\km1} + U_k.
\]
Since $(Q_0,\ldots,Q_{\km1})$ is jointly Gaussian and $U_k$ is Gaussian independent of
$(Q_0,\ldots,Q_{\km1})$ we have that $(Q_0,\ldots,Q_k)$ is Gaussian.  This proves
\eqref{eq:Qconk}.

Now the function $\Gamma_1(\gamma_1,\alpha_1)$ is assumed to be
continuous at $(\gammabar_{1k},\alphabar_{1k})$.  Also, the induction hypothesis assumes that
$\alpha_{1k} \arr \alphabar_{1k}$ and $\gamma_{1k} \arr \gammabar_{1k}$ almost surely.
Hence,
\beq \label{eq:gam2limpf}
    \lim_{N \arr \infty} \gamma_{2k} = \lim_{N \arr \infty} \Gamma_1(\gamma_{1k},\alpha_{1k})
    = \gammabar_{2k}.
\eeq
In addition, since we have assumed that $\fbf_q'(\qbf,\wbf^q,\gamma_1)$ is Lipschitz
continuous in $(\qbf,\wbf^q)$ and continuous in $\gamma_1$,
\begin{align}
    \lim_{N \arr \infty} \alpha_{2k}
    &= \lim_{N \arr \infty} \bkt{\fbf_q'(\qbf_k,\wbf^q,\gamma_{1k})} \nonumber \\
    &= \Exp\left[ f_q'(Q_k,W^q,\gammabar_{1k}) \right] = \alphabar_{1k}. \label{eq:a2limpf}
\end{align}
The limits \eqref{eq:gam2limpf} and \eqref{eq:a2limpf} prove \eqref{eq:ag2limgen}.

Finally, we need to show that $\Exp[Q_k^2] = \tau_{2k}$ is the variance from the SE
equations.
\begin{align}
 \Exp[Q_k^2] &\stackrel{(a)}{=} \lim_{N \arr \infty} \frac{1}{N}
        \|\qbf_k\|^2 \nonumber \\
         & \stackrel{(b)}{=} \lim_{N \arr \infty} \frac{1}{N}
        \|\vbf_k\|^2 \nonumber \\
         & \stackrel{(c)}{=} \Exp\left[ g_p(P_k,W^p,\gammabar_{1k},\alphabar_{1k}) \right] \nonumber \\
         & \stackrel{(d)}{=} C_1^2(\alphabar_{1k})
         \Exp\left[ \left(f_p(P_k,W^p,\gammabar_{1k}) - \alphabar_{1k}P_k\right)^2 \right] \nonumber \\
         & = C_1^2(\alphabar_{1k})\Bigl\{
         \Exp\left[ f_p^2(P_k,W^p,\gammabar_{1k})\right]  \nonumber \\
         & \quad - 2\alphabar_{1k}
            \Exp\left[ P_k f_p(P_k,W^p,\gammabar_{1k})\right] + \alphabar^2_{1k}\Exp\left[ P_k^2 \right]
            \Bigr\} \nonumber \\
         & \stackrel{(e)}{=} C_1^2(\alphabar_{1k})\Bigl\{
         \Exp\left[ f_p^2(P_k,W^p,\gammabar_{1k})\right]  \nonumber \\
         & \quad - 2\alphabar_{1k}\tau_{1k}
            \Exp\left[ f_p'(P_k,W^p,\gammabar_{1k})\right] + \alphabar^2_{1k}\tau_{1k}   \Bigr\}
            \nonumber \\
         & \stackrel{(f)}{=} C_1^2(\alphabar_{1k})\left\{
         \Exp\left[ f_p^2(P_k,W^p,\gammabar_{1k})\right]- \alphabar_{1k}^2\tau_{1k} \right\} \nonumber \\
         & \stackrel{(g)}{=} \tau_{2k},
\end{align}
where (a) follows from the fact that the components of $\qbf_k$ converge empirically
to $Q_k$;
(b) follows from \eqref{eq:qupgen} and the fact that $\Vbf$ is orthogonal;
(c) follows from the limit \eqref{eq:Velldef}; and
(d) follows from \eqref{eq:gpdef};
(e) follows from Stein's Lemma and the fact that $\Exp [P_k^2] = \tau_{1k}$;
(f) follows from the definition of $\alphabar_{1k}$ in \eqref{eq:a1segen};
and (g) follows from \eqref{eq:tau2segen}.  Thus, $\Exp [Q_k^2] = \tau_{2k}$,
and we have proven the implication $H_{k,\km1} \Rightarrow H_{k,k}$.

\section{Proof of Theorem~\ref{thm:se} }

Theorem~\ref{thm:se} is essentially a special case of Theorem~\ref{thm:genConv}.
We need to simply rewrite the recursions in Algorithm~\ref{algo:vamp} in the form
\eqref{eq:algoGen}.
To this end, define the error terms
\beq \label{eq:pvslr}
    \pbf_k := \rbf_{1k}-\xbf^0, \quad
    \vbf_k := \rbf_{2k}-\xbf^0,
\eeq
and their transforms,
\beq \label{eq:uqslr}
    \ubf_k := \Vbf\tran\pbf_k, \quad
    \qbf_k := \Vbf\tran\vbf_k.
\eeq
Also, define the disturbance terms
\beq \label{eq:wpqslr}
    \wbf^q := (\xibf,\sbf), \quad
    \wbf^p := \xbf^0, \quad \xibf := \Ubf\tran\wbf,
\eeq
and the componentwise update functions
\begin{subequations} \label{eq:fqpslr}
\begin{align}
    f_q(q,(\xi,s),\gamma_2) &:= \frac{\gamma_w s\xi + \gamma_2 q}{
        \gamma_w s^2 + \gamma_2}, \label{eq:fqslr} \\
    f_p(p,x^0,\gamma_1) &= g_1(p+x^0,\gamma_1) - x^0.
        \label{eq:fpslr}
\end{align}
\end{subequations}
With these definitions, we claim that the outputs satisfy the recursions:
\begin{subequations} \label{eq:gecslr}
\begin{align}
    \pbf_k &= \Vbf\ubf_k \label{eq:pupslr} \\
    \alpha_{1k} &= \bkt{ \fbf_p'(\pbf_k,\xbf^0,\gamma_{1k})},
    \quad \gamma_{2k} = \frac{(1-\alpha_{1k})\gamma_{1k}}{\alpha_{1k}}
        \label{eq:alpha1slr}  \\
    \vbf_k &= \frac{1}{1-\alpha_{1k}}\left[
        \fbf_p(\pbf_k,\xbf^0,\gamma_{1k})- \alpha_{1k} \pbf_{k} \right] \label{eq:vupslr} \\
    \qbf_k &= \Vbf\tran\vbf_k \label{eq:qupslr} \\
    \alpha_{2k} &= \bkt{ \fbf_q'(\qbf_k,\wbf^q,\gamma_{2k})},
    \quad \gamma_{1,\kp1} = \frac{(1-\alpha_{2k})\gamma_{2k}}{\alpha_{2k}}
        \label{eq:alpha2slr} \\
    \ubf_{\kp1} &= \frac{1}{1-\alpha_{2k}}\left[
        \fbf_q(\qbf_k,\wbf^q,\gamma_{2k}) - \alpha_{2k}\qbf_{k} \right] \label{eq:uupslr}
\end{align}
\end{subequations}
Before we prove \eqref{eq:gecslr}, we can see that \eqref{eq:gecslr} is a special
case of the general recursions in \eqref{eq:algoGen} if we define
\[
    C_i(\alpha_i) = \frac{1}{1-\alpha_i}, \quad \Gamma_i(\gamma_i,\alpha_i) =
        \gamma_i\left[\frac{1}{\alpha_i}-1 \right].
\]
It is also straightforward to verify the continuity assumptions in Theorem~\ref{thm:genConv}.
The assumption of Theorem~\ref{thm:se} states that $\alphabar_{ik} \in (0,1)$.  Since
$\gammabar_{10} > 0$, $\gammabar_{ik} > 0$ for all $k$ and $i$.  Therefore,
$C_i(\alpha_i)$ and $\Gamma_i(\gamma_i,\alpha_i)$ are continuous at all points
$(\gamma_i,\alpha_i) = (\gammabar_{ik},\alphabar_{ik})$.
Also, since $s \in [0,S_{max}]$ and $\gamma_{2k} > 0$ for all $k$, the function
$f_q(q,(\xi,s),\gamma_2)$ in \eqref{eq:fqpslr} is uniformly Lipschitz continuous
in $(q,\xi,s)$ at all $\gamma_2 = \gammabar_{2k}$.
Similarly, since the denoiser function $g_1(r_1,\gamma_1)$ is assumed be to uniformly
Lipschitz continuous in $r_1$ at all $\gamma_1 = \gammabar_{1k}$, so is the function
$f_p(r_1,x^0,\gamma_1)$ in \eqref{eq:fpslr}.  Hence all the conditions of Theorem~\ref{thm:genConv}
are satisfied.  The SE equations \eqref{eq:se} immediately
from the general SE equations \eqref{eq:segen}.  In addition, the limits
\eqref{eq:limrx1} and and \eqref{eq:limqxi} are special cases of the limits
\eqref{eq:Pconk} and \eqref{eq:Qconk}.  This proves Theorem~\ref{thm:se}.

So, it remains only to show that the updates in
\eqref{eq:gecslr} indeed hold.
Equations \eqref{eq:pupslr} and \eqref{eq:qupslr} follow immediately
from the definitions \eqref{eq:pvslr} and \eqref{eq:uqslr}.
Next, observe that we can rewrite the
LMMSE estimation function \eqref{eq:g2slr} as
\begin{align}
    \lefteqn{ \gbf_2(\rbf_{2k},\gamma_{2k}) }\nonumber\\
    &\stackrel{(a)}{=} \left( \gamma_w \Abf\tran\Abf + \gamma_{2k}\Ibf\right)^{-1}
    \left( \gamma_w\Abf\tran\Abf\xbf^0 + \gamma_w\Abf\tran\wbf
    + \gamma_{2k}\rbf_{2k} \right) \nonumber \\
   &\stackrel{(b)}{=} \xbf^0 +
   \left( \gamma_w \Abf\tran\Abf + \gamma_{2k}\Ibf\right)^{-1}
    \left( \gamma_{2k}(\rbf_{2k}-\xbf^0)+ \gamma_w\Abf\tran\wbf\right) \nonumber \\
    &\stackrel{(d)}{=} \xbf^0 +
    \Vbf\left( \gamma_w \Sbf^2 + \gamma_{2k}\Ibf\right)^{-1}
    \left( \gamma_{2k}\qbf_k + \Sbf\xibf \right), \nonumber  \\
    &\stackrel{(d)}{=} \xbf^0 + \Vbf \fbf_q(\qbf_k,\wbf^q,\gamma_{2k}),   \label{eq:g2slrV}
\end{align}
where (a) follows by substituting \eqref{eq:yAxslr} into \eqref{eq:g2slr};
(b) is a simple algebraic manipulation;
(c) follows from the SVD definition \eqref{eq:ASVD} and the definitions
$\xibf$ in \eqref{eq:wpqslr} and $\qbf_k$ in \eqref{eq:uqslr}; and
(d) follows from the definition of componentwise function
$f_q(\cdot)$ in \eqref{eq:fqslr}. Therefore, the divergence $\alpha_{2k}$ satisfies
\begin{align}
    \alpha_{2k}
    &\stackrel{(a)}{=}
    \frac{1}{N}
        \Tr\left[ \frac{\partial \gbf_2(\rbf_{2k},\gamma_{2k})}{\partial \rbf_{2k}} \right]
        \nonumber \\
    & \stackrel{(b)}{=}    \frac{1}{N}
        \Tr\left[ \Vbf \Diag(\fbf_q'(\qbf_{k},\wbf^q,\gamma_{2k}))
        \frac{\partial \qbf_k}{\partial \rbf_{2k}}  \right]
        \nonumber \\
    & \stackrel{(c)}{=}    \frac{1}{N}
        \Tr\left[ \Vbf \Diag(\fbf_q'(\qbf_{k},\wbf^q,\gamma_{2k})) \Vbf\tran  \right] \nonumber \\
    & \stackrel{(d)}{=}  \bkt{ \fbf_q'(\qbf_{k},\wbf^q,\gamma_{2k}) },
    \label{eq:a2pf}
\end{align}
where
(a) follows from line~\ref{line:a2} of Algorithm~\ref{algo:vamp} and \eqref{eq:jacobian}--\eqref{eq:bkt};
(b) follows from \eqref{eq:g2slrV}; (c) follows from \eqref{eq:uqslr}; and
(d) follows from $\Vbf\tran\Vbf=\Ibf$ and \eqref{eq:jacobian}--\eqref{eq:bkt}.
Also, from lines~\ref{line:eta2}-\ref{line:gam1} of Algorithm~\ref{algo:vamp},
\beq \label{eq:g2upa}
    \gamma_{1,\kp1} = \eta_{2k}-\gamma_{2k} = \gamma_{2k}\left[ \frac{1}{\alpha_{2k}}-1 \right].
\eeq
Equations \eqref{eq:a2pf} and \eqref{eq:g2upa} prove \eqref{eq:alpha2slr}.
In addition,
\begin{align}
    \lefteqn{ \pbf_{\kp1} \stackrel{(a)}{=} \rbf_{1,\kp1} - \xbf^0 }\nonumber\\
    &\stackrel{(b)}{=}  \frac{1}{1-\alpha_{2k}}\left[
        \gbf_2(\rbf_{2k},\gamma_{2k}) - \alpha_{2k}\rbf_{2k} \right]
        -\xbf^0 \nonumber \\
    &\stackrel{(c)}{=}  \frac{1}{1-\alpha_{2k}}\left[
        \xbf^0 + \Vbf \fbf_q(\qbf_k,\wbf^q,\gamma_{2k})
        - \alpha_{2k}(\xbf^0 + \vbf_{k}) \right]
        -\xbf^0 \nonumber \\
    &\stackrel{(d)}{=}  \frac{1}{1-\alpha_{2k}}\left[
        \Vbf \fbf_q(\qbf_k,\wbf^q,\gamma_{2k})
        - \alpha_{2k}\vbf_{k} \right] \nonumber \\
    &\stackrel{(e)}{=}  \Vbf \left[ \frac{1}{1-\alpha_{2k}}\left[
        \fbf_q(\qbf_k,\wbf^q,\gamma_{2k}) -  \alpha_{2k}\qbf_{k} \right] \right]
        \label{eq:pupslr2} ,
\end{align}
where (a) follows from \eqref{eq:pvslr};
(b) follows from 
lines~\ref{line:x2}-\ref{line:r1} of Algorithm~\ref{algo:vamp};
(c) follows from \eqref{eq:g2slrV} and the definition of $\vbf_k$ in \eqref{eq:pvslr};
(d) follows from collecting the terms with $\xbf^0$;
and (e) follows from the definition
$\qbf_k=\Vbf\tran\vbf_k$ in \eqref{eq:uqslr}.
Combining \eqref{eq:pupslr2} with
$\ubf_{\kp1} = \Vbf\tran\pbf_{\kp1}$ proves \eqref{eq:uupslr}.

The derivation for the updates for $\vbf_k$ are similar.  First,
\begin{align}
    \MoveEqLeft \alpha_{1k} \stackrel{(a)}{=}
    \bkt{ \gbf_1'(\rbf_{1k},\gamma_{1k}) }
     \stackrel{(b)}{=}  \bkt{ \fbf_p'(\pbf_{k},\xbf^0) },
    \label{eq:a1pf}
\end{align}
where (a) follows from line~\ref{line:a1} of Algorithm~\ref{algo:vamp}
and (b) follows from the vectorization
of $\fbf_p(\cdot)$ in \eqref{eq:fpslr} and the fact that $\pbf_k=\rbf_{1k}+\xbf^0$.
Also, from lines~\ref{line:eta1}-\ref{line:gam2} of Algorithm~\ref{algo:vamp},
\beq \label{eq:g1upa}
    \gamma_{2k} = \eta_{1k}-\gamma_{1k} = \gamma_{1k}\left[ \frac{1}{\alpha_{1k}}-1 \right].
\eeq
Equations \eqref{eq:a1pf} and \eqref{eq:g1upa} prove \eqref{eq:alpha1slr}.
Also,
\begin{align}
    \lefteqn{ \vbf_{k} \stackrel{(a)}{=} \rbf_{2k} - \xbf^0 } \nonumber \\
    &\stackrel{(b)}{=}  \frac{1}{1-\alpha_{1k}}\left[
        \gbf_1(\rbf_{1k},\gamma_{1k}) - \alpha_{1k}\rbf_{1k} \right]   -\xbf^0 \nonumber \\
    &\stackrel{(c)}{=}  \frac{1}{1-\alpha_{1k}}\left[
        \fbf_p(\pbf_{k},\xbf^0,\gamma_{1k}) +\xbf^0 - \alpha_{1k}(\pbf_{k}+\xbf^0) \right]
     -\xbf^0 \nonumber \\
    &\stackrel{(d)}{=}  \frac{1}{1-\alpha_{1k}}\left[
        \fbf_p(\pbf_{k},\xbf^0,\gamma_{1k}) - \alpha_{1k}\pbf_{k}\right]
\end{align}
where (a) is the definition of $\vbf_k$ in \eqref{eq:pvslr};
(b) follows from lines~\ref{line:x1}-\ref{line:r2} of Algorithm~\ref{algo:vamp};
(c) follows from the vectorization of $f_p(\cdot)$ in \eqref{eq:fpslr} and the definition of $\pbf_k$ in \eqref{eq:pvslr};
and (d) follows from collecting the terms with $\xbf^0$.
This proves \eqref{eq:vupslr}.  All together, we have proven \eqref{eq:gecslr} and
the proof is complete.

\section{Proof of Theorem~\ref{thm:seMmse}} \label{sec:seMmsePf}
We use induction.  Suppose that, for some $k$,
$\gammabar_{1k} = \tau_{1k}^{-1}$.  From \eqref{eq:a1se}, \eqref{eq:A1match}
and \eqref{eq:E1match},
\beq \label{eq:a1matchpf}
    \alphabar_{1k} = \gammabar_{1k}\Ecal_1(\gammabar_{1k}).
\eeq
Hence, from \eqref{eq:eta1se}, $\etabar_{1k}^{-1} = \Ecal_1(\gammabar_{1k})$ and
$\gammabar_{2k} = \etabar_{1k} - \gammabar_{1k}$.  Also,
\begin{align*}
      \tau_{2k}
      &\stackrel{(a)}{=} \frac{1}{(1-\alphabar_{1k})^2}\left[
        \Ecal_1(\gammabar_{1k},\tau_{1k}) - \alphabar_{1k}^2\tau_{1k} \right] \\
      &\stackrel{(b)}{=} \frac{1}{(1-\gammabar_{1k}\Ecal_1(\gammabar_{1k}))^2}\left[
        \Ecal_1(\gammabar_{1k},\tau_{1k}) - \gammabar_{1k}\Ecal_1^2(\gammabar_{1k}) \right] \\
      &\stackrel{(c)}{=} \frac{\Ecal_1(\gammabar_{1k},\tau_{1k})}
      {1-\gammabar_{1k}\Ecal_1(\gammabar_{1k})} \\
      &\stackrel{(d)}{=} \frac{1}{\etabar_{1k} - \gammabar_{1k}},
\end{align*}
where (a) follows from \eqref{eq:tau2se};
(b) follows from \eqref{eq:a1matchpf} and the matched condition $\gammabar_{1k} = \tau_{1k}^{-1}$;
(c) follows from canceling terms in the fraction and (d) follows from the fact that
$\etabar_{1k}^{-1} = \Ecal_1(\gammabar_{1k})$ and $\gammabar_{1k} = \etabar_{1k}/\alphabar_{1k}$.
This proves \eqref{eq:eta1sematch}.  A similar argument shows that \eqref{eq:eta2sematch} holds if
$\gammabar_{2k} = \tau_{2k}^{-1}$.    Finally, \eqref{eq:etammse} follows from
\eqref{eq:sematch} and \eqref{eq:mseEcal}.

\bibliographystyle{IEEEtran}
\bibliography{../bibl}
\end{document}